\def \cbb{\mathbb{C}}
\def \nbb{\mathbb{N}}
\def \rbb{\mathbb{R}}
\def \zbb{\mathbb{Z}}
\def \bcal {\mathcal{B}}
\def \ccal {\mathcal{C}}
\def \dcal {\mathcal{D}}
\def \hcal {\mathcal{H}}
\def \ical {\mathcal{I}}
\def \ocal {\mathcal{O}}
\def \pcal {\mathcal{P}}
\def \qcal {\mathcal{Q}}
\def \rcal {\mathcal{R}}
\def \scal {\mathcal{S}}
\def \wcal {\mathcal{W}}
\def \bfk  {\mathfrak{B}}
\def \xfk  {\mathfrak{X}}
\newcommand{\thetav}{\boldsymbol{\theta}}
\newcommand{\etav}{\boldsymbol{\eta}}
\newcommand{\lambdav}{\boldsymbol{\lambda}}
\newcommand{\zetav}{\boldsymbol{\zeta}}
\newcommand{\piv}{\boldsymbol{\pi}}
\newcommand{\xiv}{\boldsymbol{\xi}}
\newcommand{\xv}{\boldsymbol{x}}
\newcommand{\yv}{\boldsymbol{y}}
\newcommand{\zerov}{\boldsymbol{0}}
\def \. { \,\! }
\def\clap#1{\hbox to 0pt{\hss#1\hss}}
\def\mathclap{\mathpalette\mathclapinternal}
\def\mathclapinternal#1#2{\clap{$\mathsurround=0pt#1{#2}$}}
\def \cdotarg { \, \cdot \, }
\DeclareMathOperator{\im}{Im}
\DeclareMathOperator{\re}{Re}
\newcommand{\idop}{\boldsymbol{1}}
\def \st {^\ast}
\DeclareMathOperator{\supp}{supp}
\DeclareMathOperator{\dom}{dom}
\def \expltext#1 {\\ \text{\footnotesize{ (#1) }}\\}
\def \intercomm#1 {\\ \text{\footnotesize{ (#1) }}\\}
\def \undercomm#1 {\underset{\text{\scriptsize{ (#1) }}}}
\def \overcomm#1 {\overset{\text{\scriptsize{ (#1) }}}}
\newcommand{\Hil}{\mathcal{H}}
\newcommand{\fpno}{\Hil^{\omega,\mathrm{f}}}
\newcommand{\fpnp}{P^{\mathrm{f}}}
\newcommand{\boundedops}{\bfk(\Hil)}
\newcommand{\qf}{\mathcal{Q}}
\newcommand{\A}{\mathcal{A}}
\newcommand{\hscalar}[2]{\langle #1 , #2 \rangle }
\newcommand{\bighscalar}[2]{\big\langle  #1 , #2 \big\rangle }
\newcommand{\gnorm}[2]{\lVert #1 \rVert_{#2}}
\newcommand{\onorm}[2]{\lVert #1 \rVert_{#2}^\omega}
\newcommand{\zd}{z^\dagger}
\newcommand{\cmeA}[2]{f_{#1}^{\lbrack #2 \rbrack}}
\newcommand{\cmeB}[2]{f_{#1}{\lbrack #2 \rbrack}}
\newcommand{\cme}[2]{\mathchoice{\cmeA{#1}{#2}}{\cmeB{#1}{#2}}{\cmeB{#1}{#2}}{\cmeB{#1}{#2}}}
\newcommand{\affiliated}{\mathrel{\eta}}
\DeclareMathOperator{\sign}{sign}
\DeclareMathOperator*{\res}{res}
\DeclareMathOperator*{\domain}{dom}
\newcommand{\uidx}[1]{^{\lbrack #1 \rbrack}}
\newcommand{\lidx}[1]{_{ #1 }}
\newcommand{\even}{^\mathrm{even}}
\newcommand{\odd}{^\mathrm{odd}}
\newcommand{\psp}[1]{\pi_{#1}}
\newcommand{\inv}{_\mathrm{I}}
\newcommand{\cmpqed}{}
\newcommand{\cmponly}[1]{}
\newcommand{\ahponly}[1]{}
\numberwithin{equation}{section}
\newtheorem{definition}{Definition}[section]
\newtheorem{lemma}[definition]{Lemma}
\newtheorem{proposition}[definition]{Proposition}
\newtheorem{theorem}[definition]{Theorem}
\newtheorem{corollary}[definition]{Corollary}
\newenvironment{acknowledgements}{\paragraph{Acknowledgements}}{}
\title{Towards an explicit  construction  of local observables \\{}in integrable quantum field theories}
\author{Henning Bostelmann\thanks{%
University of York, Department of Mathematics, York YO10 5DD, United Kindom. 
E-mail: \href{mailto:henning.bostelmann@york.ac.uk}{\nolinkurl{henning.bostelmann@york.ac.uk}}
}
 \and Daniela Cadamuro\thanks{%
Universit\"at Leipzig, Institut f\"ur Theoretische Physik, Br\"uderstra\ss{}e 16, 04103 Leipzig, Germany.
E-mail: \href{mailto:daniela.cadamuro@itp.uni-leipzig.de}{\nolinkurl{daniela.cadamuro@itp.uni-leipzig.de}}
}
}
\date{2 January 2020}
\begin{document}

\maketitle

\begin{abstract}
We present a new viewpoint on the construction of pointlike local fields in integrable models of quantum field theory. As usual, we define these local observables by their form factors; but rather than exhibiting their $n$-point functions and verifying the Wightman axioms, we aim to establish them as closed operators affiliated with a net of local von Neumann algebras, which is defined indirectly via wedge-local quantities. We also investigate whether these fields have the Reeh-Schlieder property, and in which sense they generate the net of algebras.
Our investigation focuses on scalar models without bound states. We establish sufficient criteria for the existence of averaged fields as closable operators, and complete the construction in the specific case of the massive Ising model.
\end{abstract}

\ahponly{\maketitle} 


\section{Introduction}

Quantum field theory is based on the concept of local observables, i.e., operators associated with points or regions of space-time which commute at spacelike distances. Yet these local objects are notoriously difficult to construct in the presence of interaction. Even in simplified situations that are amenable to a mathematically rigorous treatment, explicit control over the local observables is hard to obtain: one can either make a direct ansatz for quantum fields, but face difficulties in controlling their singular nature, particularly in the high-energy regime; or one can define local quantities via an abstract limiting process, allowing one to control their functional analytic properties, but losing track of their explicit form. 

These difficulties are exemplified in the models we consider in this article, namely, quantum integrable models on 1+1 dimensional Minkowski space; open questions remain about the structure of their local observables, despite substantial research focusing on this issue. 

There are two complementary approaches to obtaining local observables in integrable models. The first of them, known as the \emph{form factor program} \cite{Smirnov:1992,BabujianFoersterKarowski:2006}, aims at constructing point-local quantum fields $\Phi(x)$ directly.
Their $n$-point functions are expanded in a series by inserting a basis of intermediate asymptotic states, for example for $n=2$,
\begin{equation}\label{eq:phiintro}
   \hscalar{\Omega}{\Phi(x)\Phi(y)\Omega} \ahponly{\!} = \ahponly{\!}  \sum_{m=0}^\infty \int \frac{d\theta_1 \dotsm d\theta_m}{(2\pi)^m} \big\lvert \langle \Omega |\Phi(0) | \theta_1,\dotsc,\theta_m \rangle^{\mathrm{in}} \big\rvert^2 e^{i(y-x)\sum_{j=1}^m p(\theta_j)},
\end{equation}
where the $\theta_j$ are rapidities. The expansion terms $\langle \Omega |\Phi(0) | \theta_1,\dotsc,\theta_m \rangle^{\mathrm{in}}$ are called \emph{form factors;} locality and covariance requirements for $\Phi(x)$ then lead to restrictions on these, the \emph{form factors equations}. For specific forms of the interaction, such as the massive Ising model \cite{SchroerTruong:1978} or the sinh-Gordon model \cite{FringMussardoSimonetti:1993}, one can find explicit solutions of the form factor equations. The remaining problem is now to control the convergence of the infinite series, in order to verify, e.g., the Wightman axioms \cite{StrWig:PCT}. However, only partial results in certain asymptotic regimes exist so far, even in the simplest interacting case, the massive Ising model \cite{BabujianKarowski:2004}.

The second approach \cite{SchroerWiesbrock:2000-1,Lechner:review}, which we call the \emph{operator algebraic} one, proceeds in an indirect way: One first constructs observables with weaker localization properties, namely, quantum fields localized in spacelike wedges. While not the desired final result, these wedge-local fields can explicitly be described and mathematically controlled. Passing to algebras of bounded operators $\A(\wcal)$ associated with wedges $\wcal$, one then obtains observable algebras in bounded regions by taking intersections: Where a bounded region is the intersection of two wedges, $\ocal=\wcal_1 \cap \wcal_2$, one sets  
\begin{equation}\label{eq:aintro}
 \A(\ocal) := \A(\wcal_1) \cap \A(\wcal_2).
\end{equation}
This net of algebras quite directly fulfills the Haag-Kastler axioms \cite{Haa:LQP}. The mathematically hard task, however, is to show nontriviality of the intersections. This can be done by abstract arguments in a class of models \cite{Lechner:2008,AL2017}, including the sinh-Gordon and Ising models, at least for sufficiently large regions $\ocal$. But explicit control of the form of these observables $A \in \A(\ocal)$ is lost; essentially, they are obtained from the axiom of choice.

Thus, known results allow one to either control the explicit form of observables or their functional analytic behaviour. In the present paper, we propose a method to close this gap using a hybrid approach: We take our local observables to be defined by explicit expressions for pointlike fields, following ideas from the form factor program. Then, we aim to show that they are local operators in a mathematically strict sense: namely, that their closures are affiliated with the algebras $\A(\ocal)$ as defined in \eqref{eq:aintro}. Relying on affiliation with von Neumann algebras rather than on $n$-point functions of fields gives us the flexibility needed to tackle longstanding convergence issues. 

We carry out this programme in the context of scalar integrable quantum field theories without bound states. In this context, we present sufficient criteria that make this approach work, and that do not refer to details of the interaction, i.e., to the two-particle scattering function.  We verify the criteria in the massive Ising model. 

To that end, we make use of techniques from \cite{BostelmannCadamuro:characterization} which exhibit the connection between the two approaches to integrable systems. The local operators $A \in \A(\ocal)$ constructed abstractly in \cite{Lechner:2008} can be expanded into a series,
\begin{equation}\label{eq:expansionintro}
  A = \sum_{m,n=0}^\infty \int \frac{d^m \theta \, d^n \eta}{m!n!} F_{m+n}(\thetav+i\zerov,\etav+i\piv-i\zerov) z^{\dagger}(\theta_1)\cdots z^\dagger(\theta_m) z(\eta_1)\cdots z(\eta_n),
\end{equation}
where $z,\zd$ are ``interacting'' annihilators and creators (cf.~\cite{Lashkevich:1994}), and $F_{m+n}$ are meromorphic functions, paralleling the form factor program. In fact, they fulfill very similar relations to the known form factor equations, plus certain growth bounds encoding the localization of $A$. (These will be recalled in Sec.~\ref{sec:locality}.) The expansion \eqref{eq:expansionintro} is not restricted to bounded operators, but should also hold for other local quantities, such as locally averaged quantum fields, or more general quadratic forms $A$. However, most functional analytic properties (such as boundedness or closability) of the operator $A$ are not directly visible on the level of the expansion coefficients $F_k$, and the series exists only in the sense of matrix elements between finite particle number states, where the sum is actually finite. Locality for these objects is only defined in a weak sense, namely, as relative locality to the wedge-local field mentioned above ($\omega$-locality, see Def.~\ref{definition:omegalocal} below). 

Hence our main line of argument is as follows. As our input, we take meromorphic functions $F_k$ that fulfill a refined version of the form factor axioms (see Thm.~\ref{theo:conditionFD} below); in concrete models, candidates are known in the literature. This gives us our observables (averaged quantum fields) as quadratic forms by \eqref{eq:expansionintro}. Additionally, we assume a certain summability condition for the series \eqref{eq:expansionintro}, resulting in our local fields as closed operators. Based on the locality conditions for the functions $F_k$, the operators are then shown to be affiliated with the local algebras $\A(\ocal)$. We note that this construction does not depend on a priori information on the size of the algebras $\A(\ocal)$.

We verify our summability condition in an example, the massive Ising model. In our context, the massive Ising model is the 1+1-dimensional massive integrable quantum field theory with constant two-particle S-matrix $S=-1$. While the \emph{massless} Ising model is generated by (the even powers of) a free Fermi field, the \emph{massive} Ising model differs from this in important aspects: its scattering states are bosonic, and its PCT operator is different from the related Fermi field on the same Hilbert space \cite[Sec.~II]{BostelmannCadamuroFewster:2013}. It is, in this sense, a theory of interacting Bosons, even if with a very simple type of interaction. While quadratic expressions in the Fermi field generate a subnet of $\A$, this is a \emph{proper} subnet, and $\A(\ocal)$ contains also operators with odd particle number transfer. Crucially, for these the series in \eqref{eq:expansionintro} cannot terminate, thus providing us with a test case for our ideas. The Ising model has been constructed in the operator algebraic context \cite{Lechner:2005} and as a Euclidean quantum field theory \cite{PalmerTracy:ising}, but to the authors' knowledge, direct convergence results for the series \eqref{eq:expansionintro} in Minkowski space are new.

We stress that, while the observables we construct are formally averaged versions of the local field of the form factor program, given as $A=\Phi(g) = \int d^2x \,g(x) \Phi(x)$ with $\Phi(x)$ as in \eqref{eq:phiintro}, we do \emph{not} claim that they fulfill the Wightman axioms. For one, we do not use Schwartz functions $g$, but rather functions of Jaffe class \cite{Jaffe:1967}; but this is a more minor point. More fundamentally, we do not want to, or need to, control the product of two such operators; we do not claim that their $n$-point functions exist, or that the fields have a common invariant domain. For our interpretation as local observables, it is sufficient to show that $\Phi(g)$ is affiliated with $\A(\ocal)$ where $\supp g \subset \ocal$.

In a slight extension of scope, one can ask whether this method leads to \emph{all} local observables of the model. Namely, for each bounded region $\ocal$ of space-time, we obtain a linear space $\qf(\ocal)$ of quadratic forms (which extend to closed operators, affiliated with $\A(\ocal)$); this set would also include the ``composite fields'' or ``descendant operators'' of the model, although we do not explicitly deal with normal products or product expansions. Is this $\qf(\ocal)$ maximally large, in a well-defined sense? One criterion would be whether the space has the Reeh-Schlieder property, i.e., whether $\qf(\ocal)\Omega$ is dense in the Hilbert space of the model. A somewhat stricter notion is whether the elements of $\qf(\ocal)$, or their spectral data, generate the algebra $\A(\ocal)$. Both questions can be traced back to sufficient conditions on the functions $F_k$, where for the last mentioned point, we understand ``generate'' in the sense of the dual of a net of algebras. We also investigate which consequences this completeness has for the net $\A(\ocal)$ itself.

The paper is organized as follows. In Sec.~\ref{sec:background}, we recall our mathematical setting, including the definition of wedge-local algebras and the characterization of local operators in terms of a series expansion. Then, in Sec.~\ref{sec:criterion}, we develop sufficient criteria for closability of operators, affiliation with local algebras, and completeness in the sense of the Reeh-Schlieder property or duality. 

We explicitly treat the situation in the Ising model in Secs.~\ref{sec:evenexamples} and \ref{sec:oddexamples}. In the Ising model, for local observables with even particle number transfer, the series \eqref{eq:expansionintro} can be finite, whereas for odd particle number transfer, it is necessarily infinite. We discuss the easier, even case in Sec.~\ref{sec:evenexamples}, hoping it will be instructive for the reader. The odd case is treated in Sec.~\ref{sec:oddexamples}; it involves quite delicate estimates of the singular integral operators with kernels $F_{m+n}(\thetav+i\zerov,\etav+i\piv-i\zerov)$, which are boundary values of meromorphic functions, with first-order poles located on the boundary.

We summarize our results, and give an outlook on future work, in Sec.~\ref{sec:outlook}. Two appendices provide technical results needed in Sec.~\ref{sec:oddexamples}: Appendix \ref{app:poly} deals with symmetric Laurent polynomials that are required for treating composite fields, and Appendix \ref{app:modd} investigates the singularity structure of a certain multivariable meromorphic function needed in the construction.

This paper is partly based on one of the authors' Ph.D.\ thesis \cite{Cadamuro:2012}.


\section{Background}\label{sec:background}

The context of this paper are integrable models of quantum field theory on 1+1 dimensional Minkowski space, with a single species of massive scalar particle. We also exclude bound states, i.e., the two-particle scattering function will not have poles in the physical strip. (For possible generalizations, see Sec.~\ref{sec:othermodels}.) We formulate them in the mathematical framework of \cite{Lechner:2008,BostelmannCadamuro:expansion,BostelmannCadamuro:characterization}, the relevant aspects of which we now recall.

\subsection{Hilbert space}
The model under discussion is specified by the mass $\mu>0$ of the particle and the \emph{scattering function} $S$, a meromorphic function on $\cbb$ which is bounded on the strip $0 \leq \im \zeta \leq \pi$ and fulfills the symmetry relations
\begin{equation}\label{eq:srelat}
  \quad S(\zeta)^{-1}=S(-\zeta)=\overline{S(\bar{\zeta})\vphantom{\hat S}}=S(\zeta+i \pi).
\end{equation}
Given these, we define a modified Fock space $\hcal := \bigoplus_{n=0}^\infty \hcal_n$, where $\Hil_n$ is the ``$S$-symmetric part'' of $L^2(\rbb^n,d\thetav)$, i.e., consists of wave functions $\psi_n$ (depending on rapidity arguments) which behave under transposition of variables according to
\begin{equation}
   \psi_n(\theta_1,\dotsc,\theta_j,\theta_{j+1},\dotsc,\theta_n) = S(\theta_{j+1}-\theta_{j}) \psi_n(\theta_1,\dotsc,\theta_{j+1},\theta_{j},\dotsc,\theta_n).
\end{equation}
We denote the projector in $\Hil$ onto $\Hil_n$ as $P_n$, and set $\fpnp_k:=\sum_{n=0}^k P_n$. On $\Hil$, we have a representation $U$ of the proper Poincar\'e group, under which translations and boosts $U(x,\Lambda)$ and spacetime reflections $U(j)$ act on  $\psi\in\Hil_n$ as
\begin{align}
  (U(x,\Lambda)\psi)(\thetav)&= e^{ i p(\thetav) \cdot x } \psi(\theta_1  -\lambda, \dotsc,\theta_n-\lambda),
\\
   (U(j)\psi)(\thetav) &= \overline{\psi(\theta_{n},\ldots,\theta_{1})}.
\end{align}
\sloppy
Here $\lambda$ is the boost parameter of $\Lambda$, and $p(\thetav) = \sum_{k=1}^{n}p(\theta_{k})$, with $p(\theta) = \mu (\cosh \theta,\sinh\theta)$. We will also denote $E(\thetav)=p^0(\thetav)/\mu$.
This $U$ is an \mbox{(anti-)}unitary, strongly continuous, positive-energy representation, and the translations have the Fock vacuum $\Omega$ as their unique invariant vector (up to scalar factors). The generator of time translations will be denoted $H$.

On $\Hil$, annihilation and creation operators $z(\theta)$ and $z^\dagger(\theta)$ act, defined as usual in a distributional sense on finite particle number vectors, but fulfilling an $S$-deformed version of the CCR \cite[Sec.~3]{Lechner:2008}. The ``smearing functions'' of these operator-valued distributions will often be Fourier transforms of functions $f \in \scal(\rbb^2)$, taken with the convention 
\begin{equation}
f^\pm(\theta) :=\frac{1}{2\pi}\int d^2 x \, f(x)e^{\pm ip(\theta)\cdot x}.
\end{equation}

\subsection{Quadratic forms}

We wish to describe observables as operators or quadratic forms on $\Hil$ with a certain high-energy behaviour. To that end, let $\omega:[0,\infty)\to [0,\infty)$ be an \emph{analytic indicatrix} \cite[Def.~2.1]{BostelmannCadamuro:characterization}, that is, a function growing slightly less than linearly with certain additional conditions; here we just note that
\begin{align}
\label{eq:omegabeta}
 &&\omega(p) &= \beta\log(1+p) \quad
& \text{for some $\beta>0$},
\\ \label{eq:omegaalpha}
\text{and} &&
\omega(p)&= p^{\alpha}
& \text{for some $\alpha\in(0,1)$}
\end{align}
are valid examples. Associated with $\omega$ and an open region $\ocal\subset \rbb^2$, we define the test function space
\begin{equation}\label{eq:domega}
\begin{aligned}
 \dcal^\omega(\ocal) := \{ \ahponly{&} f \in \dcal(\ocal) : 
 \ahponly{\\  & \quad} \theta \mapsto e^{\omega(\cosh \theta)} f^\pm(\theta) \text{ is bounded and square integrable} \}.
\end{aligned}
\end{equation}
In the case \eqref{eq:omegabeta}, this $\dcal^\omega(\ocal)$ is identical to $\dcal(\ocal)=\mathcal{C}_c^\infty(\ocal)$, whereas in the case \eqref{eq:omegaalpha}, it is a dense subspace.
We also consider the dense subspace $\fpno \subset \Hil$ of vectors $\psi$ such that $\gnorm{e^{\omega(H/\mu)}\psi}{} < \infty$ and which have finite particle number ($\fpnp_n\psi=\psi$ for some $n$). Further, let $\qf^\omega$ be the space of quadratic forms $A$ on $\fpno \times \fpno$ such that the norms
\begin{equation}\label{eq:aomeganorm}
 \gnorm{A}{n}^{\omega} := \frac{1}{2} \gnorm{\fpnp_n A e^{-\omega(H/\mu)}\fpnp_n}{} + \frac{1}{2} \gnorm{\fpnp_n e^{-\omega(H/\mu)} A \fpnp_n}{}
\end{equation}
are finite for any $n \in \nbb_0$. Examples of such forms are smeared normal-ordered monomials in the annihilators and creators \cite[Prop.~2.1]{BostelmannCadamuro:expansion}, written in formal integral notation as
\begin{equation}\label{eq:zzf}
   z^{\dagger m} z^n(f) = \int d^m\theta \,d^n\eta \, f(\thetav,\etav) z^{\dagger}(\theta_1)\cdots z^\dagger(\theta_m) z(\eta_1)\cdots z(\eta_n),
\end{equation}
where $f\in \dcal(\rbb^{m+n})'$ is such that the following norm $\onorm{f}{m \times n}$ is finite:
\begin{align}
\label{eq:crossnorm2}
\onorm{f}{m \times n} &:= \frac{1}{2} \gnorm{(\thetav,\etav)\mapsto e^{-\omega(E(\thetav))}f(\thetav,\etav)}{m \times n} 
\ahponly{ \notag \\ &\hphantom{:=} } + \frac{1}{2} \gnorm{(\thetav,\etav)\mapsto f(\thetav,\etav)e^{-\omega(E(\etav))}}{m \times n},
\\
\label{eq:crossnorm1}
\gnorm{f}{m \times n} &:= \sup \Big\{ \big\lvert \! \int f(\thetav,\etav) g(\thetav) h(\etav)  d^m\theta d^n\eta \,\big\rvert :
\ahponly{ \notag \\ & \qquad \qquad \qquad }
   g \in \dcal(\rbb^m), \, h \in \dcal(\rbb^n), \,  \gnorm{g}{2} \leq 1, \, \gnorm{h}{2} \leq 1\Big\}.
\end{align}
In fact, \emph{all} $A \in \qf^\omega$ can be decomposed into monomials of the form \eqref{eq:zzf}: One can find distributions $\cme{m,n}{A}$  such that \cite[Thm.~3.8]{BostelmannCadamuro:expansion}
\begin{equation}\label{eq:expansionrecall}
  A = \sum_{m,n=0}^\infty \int \frac{d^m \theta \, d^n \eta}{m!n!} \cme{m,n}{A}(\thetav,\etav) z^{\dagger}(\theta_1)\cdots z^\dagger(\theta_m) z(\eta_1)\cdots z(\eta_n).
\end{equation}
The sum is finite in matrix elements, so that convergence issues do not arise at this point.
Vice versa, given distributions $f_{m, n}$ such that $\onorm{f_{m,n}}{m \times n}<\infty$, we can define $A \in \qf^\omega$ by the sum above. 
For an explicit expression of the (unique) relation between $A$ and $\cme{m,n}{A}$, see \cite[Sec.~3.1]{BostelmannCadamuro:expansion}. The symmetry representation $U$ acts on $\qcal^\omega$ by adjoint action, and correspondingly on the expansion coefficients $\cme{m,n}{A}$; we refer to \cite[Sec.~3.3]{BostelmannCadamuro:expansion} for details.

\subsection{Locality}\label{sec:locality}

We now describe locality of our observables in open spacetime regions $\rcal$, the most relevant being: the right wedge $\wcal$ with tip at the origin; its causal complement, the left wedge $\wcal'$; shifted wedges $\wcal_x$, $\wcal_x'$ with tip at $x$; double cones $\ocal_{x,y}=\wcal_x \cap \wcal_y'$; and the standard double cone $\ocal_r$ of radius $r$ around the origin. We start by introducing the wedge-local field \cite[Sec.~3]{Lechner:2008} 
\begin{equation}\label{eq:wlfield}
\phi(f):= \zd(f^{+}) + z(f^{-}), \qquad 
 f\in \mathcal{S}(\mathbb{R}^{2}).
\end{equation}
This field, or its formal kernel $\phi(x)$, can with respect to the symmetry representation $U$ be consistently interpreted as localized in the wedge $\wcal_x'$. We then define a von Neumann algebra of bounded operators associated with the right wedge as 
\begin{equation}
\A(\wcal) = \{e^{i\phi(f)^-} \,|\, f \in \dcal_\rbb^\omega(\wcal') \}'.
\end{equation}
(The subscript ${}_\rbb$ indicates real-valuedness. In \cite{Lechner:2008} this was introduced with $\omega=0$, but the algebra is actually independent of $\omega$ by density arguments for the test functions $f$.) From here, algebras associated with other wedges $\wcal_x$ and $\wcal_y'$ can be defined by symmetry transformations, and for double cones $\ocal_{x,y} = \wcal_x \cap \wcal_y'$ via $\A(\ocal_{x,y}):=\A(\wcal_x) \cap \A(\wcal_y')$. In this way, one obtains a Haag-Kastler net $\A$ of local algebras for every region of Minkowski space, where the vacuum $\Omega$ is cyclic and separating for $\A(\wcal)$, Haag duality for wedges holds, i.e., $\A(\wcal_x')=\A(\wcal_x)'$, and the Tomita-Takesaki modular group of $\A(\wcal)$ coincides with the boosts $U(0,\Lambda)$. It is a priori not clear whether the algebras $\A(\ocal_{x,y})$ contain any operator other than multiples of the identity, but under certain conditions (``modular nuclearity''),\footnote{Note that our analysis in the following will not rely on the modular nuclearity condition.}
the vacuum is in fact cyclic for these as well \cite[Sec.~2]{Lechner:2008}. 

This gives a well-defined sense of locality for bounded operators. For (unbounded) quadratic forms, the situation is different, as we cannot formulate commutation relations between these directly. Instead, we can define a weaker notion by means of relative locality to the wedge-local field $\phi$:
\begin{definition} \label{definition:omegalocal}
\cite[Def.~2.4]{BostelmannCadamuro:characterization}
  Let $A \in \qf^\omega$. We say that $A$ is \emph{$\omega$-local} in $\wcal_x$ iff%
  \footnote{%
    The ``commutator'' $[A,\phi(f)]$ is actually well-defined in matrix elements for quadratic forms $A \in \qf^\omega$, since $\phi(f)$ and $\phi(f)^\ast$ map $\fpno$ into $\fpno$.
  }
\begin{equation}\label{eq:acommute}
     [A,\phi(f)] = 0
\quad
\text{for all }
    f \in \dcal^\omega(\wcal_x'), \text{ as a relation in $\qf^\omega$.%
}
\end{equation}
$A$ is called $\omega$-local in $\wcal_x'$ iff $U(j) A^\ast U(j)$ is $\omega$-local in $\wcal_{-x}$. $A$ is called $\omega$-local in the double cone $\ocal_{x,y} = \wcal_x \cap \wcal'_y$ iff it is $\omega$-local in both $\wcal_x$ and $\wcal'_y$.
\end{definition}
We will clarify in Sec.~\ref{sec:criterion} how $\omega$-locality is related to the local net $\A$, as well as to locality conditions for closed unbounded operators.

For our purposes, it is crucial to know how locality of $A\in\qf^\omega$ is reflected in the properties of its expansion coefficients $\cme{m,n}{A}$. In fact, for $A$ localized in a double cone, one finds that $\cme{m,n}{A}$ are distributional boundary values of meromorphic functions $F_{m+n}$ at specific points. To formulate this, consider the regions in $\rbb^k$,
\begin{equation}\label{eq:ik}
\begin{aligned}
 \ical^k_+ &= \{\lambdav : 0 < \lambda_1 < \ldots < \lambda_k < \pi \}, \\
 \ical^k_- &= \{\lambdav : -\pi < \lambda_1 < \ldots < \lambda_k < 0 \}.
\end{aligned}
\end{equation}
When we write boundary distributions of the type $F_k(\thetav+i(0,\dotsc,0), \etav+i(\pi-0,\dotsc,\pi-0))$, or $F_k(\thetav+i \zerov, \etav + i\piv - i\zerov)$ for short, this is understood as an approach from within the region $\ical^k_+$, and similar for $\ical^k_-$. With this, we can characterize $\omega$-locality in the double cone $\ocal_r$ as follows, in a reformulation of \cite[Theorem~5.4]{BostelmannCadamuro:characterization}.

\begin{theorem}\label{theo:conditionFD}

Let $\omega$ be an analytic indicatrix and let $r>0$. Let $F=(F_{k})_{k=0}^\infty$ be a collection of functions $\cbb^k \to \bar\cbb$ which fulfills the following conditions\footnote{The conditions \ref{it:fdmero}--\ref{it:fdrecursion} coincide with the properties of form factors, e.g., in \cite{BabujianFoersterKarowski:2006,BK:02} by setting, in the notation of \cite{BabujianFoersterKarowski:2006}, $F^{\mathcal{O}}(\zeta_1, \ldots, \zeta_k) = (2\sqrt{\pi})^k F_k(\zeta_k, \ldots, \zeta_1)$ and suppressing the indices and matrices related to particle species, which are absent in our context.} for any fixed $k$, and with $\zetav \in \cbb^k$ arbitrary:

\begin{enumerate}
\renewcommand{\theenumi}{(FD\arabic{enumi})}
\renewcommand{\labelenumi}{\theenumi}

\item \label{it:fdmero}
\emph{Analyticity:}
$F_k$ is meromorphic on $\cbb^k$, and analytic where $\im \zeta_1 < \ldots < \im \zeta_k < \im \zeta_1 + \pi$.

\item \label{it:fdsymm} \emph{$S$-symmetry:} 
$
\displaystyle{
F_k(\zetav)
=  S(\zeta_{j+1}-\zeta_j) F_k(\zeta_1,\dotsc,\zeta_{j+1},\zeta_j,\dotsc,\zeta_k)
}$
for any $1 \leq j < k$.

\item \label{it:fdperiod} \emph{$S$-periodicity:}
$\displaystyle{
F_k (\zeta_1,\dotsc,\zeta_{k-1},\zeta_k+2\pi i) =  F_k (\zeta_k, \zeta_1,\dotsc,\zeta_{k-1} ).
}$

\item \label{it:fdrecursion}

\emph{Recursion relations:}
The $F_k$ have first order poles at $\zeta_n-\zeta_m = i \pi$, where $1 \leq m < n \leq k$,
and
\begin{equation*}
\res_{\zeta_2-\zeta_1 = i \pi} F_{k}( \zetav )
= - \frac{1}{2\pi i }
\Big(1-\prod_{j=1}^{k} S(\zeta_1-\zeta_j) \Big)
F_{k-2}( \zeta_3,\dotsc,\zeta_k ).
\end{equation*}

\item \label{it:fdboundsreal}
\emph{Bounds at nodes:}
For each $j \in \{0,\dotsc,k\}$, we have
\begin{equation*}
\onorm{ F_k\big( \cdotarg + (\underbrace{i \zerov}_{\mathclap{j\;\mathrm{entries}}},i\piv-i\zerov) }{j \times (k-j)} < \infty, \quad
\onorm{ F_k\big( \cdotarg - (\underbrace{i\piv -i \zerov}_{j\;\mathrm{entries}}, i\zerov)}{j \times (k-j)} < \infty.
\end{equation*}

\item \label{it:fdboundsimag}
\emph{Pointwise bounds:}
There exist $c,c'>0$ such that for all $\zetav\in\rbb^k+i \ical^k_\pm$:
\begin{equation*}
  |F_k(\zetav)| \leq c \,{ \operatorname{dist}(\im \zetav,\partial \ical_\pm^k)^{-k/2}} \prod_{j=1}^k \exp \big(\mu r  |\im \sinh \zeta_j|+ c' \omega(\cosh \re \zeta_j)\big).
\end{equation*}
\end{enumerate}

\noindent
Then, the unique quadratic form $A \in \qf^\omega$ fulfilling 
\begin{equation}\label{eq:fmnboundary}
\cme{m,n}{A}(\thetav,\etav) = F_{m+n}(\thetav+i\zerov, \etav + i\piv - i\zerov) 
\end{equation}
is $\omega$-local in $\ocal_r$.

Conversely, let $A \in \qf^\omega$ be $\omega$-local in $\ocal_r$. Then there exists (uniquely) a family of functions $(F_k)$ which fulfill conditions \ref{it:fdmero}--\ref{it:fdboundsimag} above such that 
\eqref{eq:fmnboundary} holds.
\end{theorem}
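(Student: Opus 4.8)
The plan is to prove both directions by reducing $\omega$-locality in $\ocal_r$ to two wedge conditions and then reading off analytic properties of the expansion coefficients $\cme{m,n}{A}$. Since $\ocal_r=\wcal_a\cap\wcal_b'$ for translations $a,b$ determined by $r$, and since by Def.~\ref{definition:omegalocal} locality in $\wcal_b'$ is locality of $U(j)A^\ast U(j)$ in a right wedge, it suffices---using that $U$ acts covariantly both on $\qf^\omega$ and on the coefficients $\cme{m,n}{A}$---to analyse $\omega$-locality in a single right wedge $\wcal_x$ and then intersect the two resulting families of constraints. Throughout, existence and uniqueness of $A$ given its coefficients (and vice versa) is supplied by the bijection $A\leftrightarrow(\cme{m,n}{A})$ recalled in \eqref{eq:expansionrecall}.

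First I would compute $[A,\phi(f)]$ in matrix elements. Writing $\phi(f)=\zd(f^+)+z(f^-)$ and inserting the expansion \eqref{eq:expansionrecall}, the $S$-deformed CCR turn the commutator into a finite sum of contraction terms, in which one factor of $\phi(f)$ is paired against a $z$ or $\zd$ of $A$, producing $S$-factors and lowering the particle number by one. Demanding that this vanish for all $f\in\dcal^\omega(\wcal_x')$ gives, after removing the arbitrary smearing functions $f^\pm$, a closed family of relations among the $\cme{m,n}{A}$. The support condition $\supp f\subset\wcal_x'$ is decisive: by a Paley--Wiener argument it forces $f^+$ to continue analytically into a strip, so the commutator can vanish only if the coefficients are themselves boundary values of functions $F_{m+n}$ analytic on the tube $\rbb^k+i\ical^k_+$, giving \ref{it:fdmero}. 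The $S$-symmetry \ref{it:fdsymm} is inherited from the $S$-symmetric structure of $\Hil_n$, the kinematic contraction poles (where the CCR delta is continued to $\zeta_n-\zeta_m=i\pi$) produce the residue formula \ref{it:fdrecursion}, and combining the right- and left-wedge conditions identifies the coefficients at all splits $m+n=k$ as boundary values of one meromorphic function on a strip of width $\pi$; the ensuing crossing relation between the creation and annihilation sectors is precisely the periodicity \ref{it:fdperiod}.

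The two bound conditions encode the functional-analytic class of $A$ and the size of $\ocal_r$. Condition \ref{it:fdboundsreal} is a direct transcription of the finiteness of the norms \eqref{eq:aomeganorm}, i.e. of $\onorm{\cme{m,n}{A}}{m\times n}<\infty$ evaluated at the boundaries indexed by $j$. Condition \ref{it:fdboundsimag} follows from a Phragm\'en--Lindel\"of/three-lines estimate: localization in the wedges shifted by $\pm r$ contributes the factor $\exp(\mu r|\im\sinh\zeta_j|)$, while the indicatrix controls the real-rapidity growth through $\exp(c'\omega(\cosh\re\zeta_j))$. For the converse I would reverse these steps: given $(F_k)$ obeying \ref{it:fdmero}--\ref{it:fdboundsimag}, condition \ref{it:fdboundsreal} ensures \eqref{eq:fmnboundary} defines coefficients of finite $\omega$-norm, hence a unique $A\in\qf^\omega$; the relations \ref{it:fdmero}--\ref{it:fdrecursion} then make the matrix elements of $[A,\phi(f)]$ vanish for $f$ in the wedge complements, the regular part vanishing by shifting the integrals against $f^\pm$ into the analyticity domain and the contraction contributions cancelling by virtue of the residue formula \ref{it:fdrecursion}. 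Uniqueness of $(F_k)$ is then immediate from the identity theorem.

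The step I expect to be the main obstacle is the pointwise bound \ref{it:fdboundsimag}, and specifically its singular prefactor $\operatorname{dist}(\im\zetav,\partial\ical^k_\pm)^{-k/2}$. One controls only $L^2$ (not $L^\infty$) data on the boundary, through \ref{it:fdboundsreal}, so the interior pointwise estimate must be recovered by a Cauchy-type argument whose admissible blow-up rate towards $\partial\ical^k_\pm$ is dictated by the $L^2$ control and by the dimension $k$. This interpolation has to be performed in several complex variables and in the presence of the first-order poles of \ref{it:fdrecursion} lying exactly on the boundary, and obtaining constants $c,c'$ that are uniform in $k$ is the genuinely delicate point.
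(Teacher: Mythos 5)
First, a point of reference: the paper does not prove Theorem~\ref{theo:conditionFD} at all --- it is imported (as a ``reformulation'') from \cite[Theorem~5.4]{BostelmannCadamuro:characterization}, where the proof occupies a substantial part of that companion paper. Your attempt therefore has to be measured against that proof, whose overall architecture you have correctly reconstructed: double-cone locality is split into two wedge conditions via $U(j)$-conjugation, the commutator $[A,\phi(f)]$ is expanded through the $S$-deformed CCR into contraction terms, wedge support of $f$ is converted into analytic continuation of the coefficients, and conditions \ref{it:fdsymm}, \ref{it:fdrecursion} and \ref{it:fdboundsreal} are read off from the $S$-symmetry of $\hcal_n$, the kinematic delta terms, and the $\qf^\omega$-norms, respectively.

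As a proof, however, the proposal has a genuine gap precisely at its load-bearing step. Vanishing of $[A,\phi(f)]$ for all $f\in\dcal^\omega(\wcal')$ yields, coefficient by coefficient, a relation identifying $\cme{m,n+1}{A}$ with the continuation of $\cme{m+1,n}{A}$ across a strip in a \emph{single} variable; this is a one-variable Paley--Wiener-type statement. Condition \ref{it:fdmero}, by contrast, demands one function $F_k$ meromorphic on all of $\cbb^k$ and analytic on the full tube $\rbb^k+i\ical^k_+$, \emph{jointly} in all variables. Passing from the family of one-variable continuations to joint analyticity is the actual content of the cited proof: it requires a several-complex-variables tube theorem (of flat-tube/Malgrange--Zerner type), applied in the presence of first-order kinematic poles sitting exactly on the boundary of the tube, and it is only here that the ordering prescription $\ical^k_\pm$ for the boundary values acquires its meaning. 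Your sketch simply asserts that the commutator ``can vanish only if the coefficients are boundary values of functions analytic on the tube''; nothing in the sketch produces the joint analyticity, nor the meromorphic extension beyond the closed tube that \ref{it:fdperiod} and \ref{it:fdrecursion} presuppose. Likewise, for \ref{it:fdboundsimag} you name the obstacle (recovering interior pointwise bounds of order $\operatorname{dist}(\im\zetav,\partial\ical^k_\pm)^{-k/2}$ from $L^2$-type boundary data) but do not resolve it. One smaller point: your concern about making $c,c'$ uniform in $k$ is unnecessary --- the hypotheses are imposed ``for any fixed $k$'', so these constants may depend on $k$; uniformity in $k$ only becomes the delicate issue later (Secs.~\ref{sec:evenexamples}--\ref{sec:oddexamples}), where summability of the series \eqref{eq:Fseries} over $k$ is at stake, and that is a separate argument not governed by this theorem.
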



\section{Locality of operators and quadratic forms}\label{sec:criterion}

In this paper, we investigate local (unbounded) \emph{operators} in integrable models, going beyond the quadratic forms considered earlier \cite{BostelmannCadamuro:expansion,BostelmannCadamuro:characterization}. More specifically, we aim at closed operators affiliated with the local von Neumann algebras $\A(\ocal)$. This class, while still technically manageable, seems large enough to contain a variety of accessible examples, including smeared pointlike fields where they exist \cite{FreHer:pointlike_fields,Wollenberg:1985}. The present section gives general criteria that allow us to investigate the problem, independent of the scattering function $S$ and of specific examples of local observables. The criteria will later be applied to examples in the case $S=-1$, in Secs.~\ref{sec:evenexamples} and \ref{sec:oddexamples}.

We first clarify in Sec.~\ref{sec:critclosability} how quadratic forms in $\qf^\omega$ relate to closed (unbounded) operators, and establish sufficient criteria for convergence of the infinite series \eqref{eq:expansionrecall} in this context. Then, in Sec.~\ref{sec:critlocality}, we show how the closed operators are related to \emph{bounded} local operators, in the sense of affiliation with the local algebras. Lastly, in Sec.~\ref{sec:reehschlieder}, we ask when a set of quadratic forms is large enough to describe \emph{all} local observables of the quantum field theory, in the sense of the Reeh-Schlieder property and of generating the net of local von Neumann algebras.

Throughout the section, an analytic indicatrix $\omega$ is kept fixed.

\subsection{Closable operators and summability}  \label{sec:critclosability}

We will be concerned with the extension of quadratic forms $A\in\qf^\omega$ to closed operators. Since $A$ is a priori only a quadratic form, we clarify in which case this extension, or closure, is to be understood. 

\begin{definition} \label{def:closable}
 $A \in \qf^\omega$ is called \emph{$\omega$-closable} if there exists a closed operator $A^-$, with $\fpno \subset \dom A^- \cap \dom (A^{-})^\ast$ and for which $\fpno$ is a core, such that $A^-$ coincides with $A$ as a quadratic form on $\fpno \times \fpno$.
\end{definition}

Correspondingly, the operator $A^-$, which is uniquely determined, is called the \emph{$\omega$-closure} of $A$. (It may depend on $\omega$, but this will not matter for our purposes.) A simple criterion for $\omega$-closability is as follows.

\begin{lemma}\label{lemma:closable}
  $A\in\qf^\omega$ is $\omega$-closable if, and only if, the expression $\hscalar{ \psi }{ A \chi}$ has a continuous linear extension to $\chi \in \Hil$ for any fixed $\psi \in \fpno$, and a continuous antilinear extension to $\psi \in \Hil$ for any fixed $\chi \in \fpno$.
\end{lemma}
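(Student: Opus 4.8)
The plan is to prove the two implications separately, in each case converting the stated continuity of matrix elements into an honest Hilbert-space operator via the Riesz representation theorem, and then invoking the standard fact that a densely defined operator is closable precisely when its adjoint is densely defined. Throughout I use that $\fpno$ is dense in $\Hil$ and that $\hscalar{\cdot}{\cdot}$ is antilinear in the first and linear in the second argument.

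For the forward direction, suppose $A$ is $\omega$-closable with $\omega$-closure $A^-$ as in Def.~\ref{def:closable}. Fix $\psi\in\fpno$; since $\fpno\subset\dom(A^-)^\ast$, every $\chi\in\fpno$ gives $\hscalar{\psi}{A\chi}=\hscalar{\psi}{A^-\chi}=\hscalar{(A^-)^\ast\psi}{\chi}$. Read as a function of $\chi$, the right-hand side is the restriction to $\fpno$ of the continuous linear functional $\chi\mapsto\hscalar{(A^-)^\ast\psi}{\chi}$ on $\Hil$, which by density is the desired (unique) continuous linear extension. Symmetrically, for fixed $\chi\in\fpno\subset\dom A^-$ the map $\psi\mapsto\hscalar{\psi}{A\chi}=\hscalar{\psi}{A^-\chi}$ is the restriction of the continuous antilinear functional $\psi\mapsto\hscalar{\psi}{A^-\chi}$. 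This settles one direction with no real difficulty.

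For the converse, assume the two continuity conditions. For fixed $\chi\in\fpno$, the continuous antilinear extension of $\psi\mapsto\hscalar{\psi}{A\chi}$ is, by Riesz, of the form $\psi\mapsto\hscalar{\psi}{A_0\chi}$ for a unique $A_0\chi\in\Hil$, defining a map $A_0:\fpno\to\Hil$. I would check linearity of $A_0$: since $\hscalar{\psi}{A_0\chi}=\hscalar{\psi}{A\chi}$ holds for all $\psi$ in the dense set $\fpno$ and $A$ is linear in its second slot, the identities $\hscalar{\psi}{A_0(\chi+\chi')}=\hscalar{\psi}{A_0\chi+A_0\chi'}$ hold on a dense set, forcing $A_0(\chi+\chi')=A_0\chi+A_0\chi'$, and likewise for scalars. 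Now invoke the other continuity condition: for fixed $\psi\in\fpno$ the continuous linear extension of $\chi\mapsto\hscalar{\psi}{A\chi}$ is $\chi\mapsto\hscalar{w_\psi}{\chi}$ for a unique $w_\psi\in\Hil$, and then $\hscalar{w_\psi}{\chi}=\hscalar{\psi}{A\chi}=\hscalar{\psi}{A_0\chi}$ for all $\chi\in\fpno$ says exactly that $\psi\in\dom A_0^\ast$ with $A_0^\ast\psi=w_\psi$. Hence $\fpno\subset\dom A_0^\ast$, so $A_0^\ast$ is densely defined, $A_0$ is closable, and I set $A^-:=\overline{A_0}$. It remains to verify the clauses of Def.~\ref{def:closable}: $A^-$ is closed by construction; $\fpno=\dom A_0$ is a core for $A^-=\overline{A_0}$, so $\fpno\subset\dom A^-$; the general identity $(A^-)^\ast=(\overline{A_0})^\ast=A_0^\ast$ together with $\fpno\subset\dom A_0^\ast$ gives $\fpno\subset\dom(A^-)^\ast$; and $\hscalar{\psi}{A^-\chi}=\hscalar{\psi}{A_0\chi}=\hscalar{\psi}{A\chi}$ for $\psi,\chi\in\fpno$ shows $A^-$ coincides with $A$ on $\fpno\times\fpno$.

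The only genuinely delicate point, which I would flag as the main obstacle, is ensuring that the two Riesz vectors are mutually compatible so that $w_\psi$ is \emph{exactly} $A_0^\ast\psi$ rather than merely the value of some larger extension of $A_0^\ast$ — this compatibility is precisely what lets densely-definedness of $A_0^\ast$ yield a closure satisfying \emph{all} clauses at once, and it holds because both $A_0\chi$ and $w_\psi$ reproduce the single quadratic form $\hscalar{\psi}{A\chi}$ on the common dense domain. Everything else reduces to density arguments and the elementary identity $(\overline{A_0})^\ast=A_0^\ast$.
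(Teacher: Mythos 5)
Your proof is correct and follows essentially the same route as the paper's: you build $A_0:\fpno\to\Hil$ from the continuity hypotheses via Riesz, observe that $\fpno\subset\dom A_0^\ast$ makes both $A_0$ and $A_0^\ast$ densely defined, and take the closure (your $\overline{A_0}$ equals the paper's $A_0^{\ast\ast}$) with $(A^-)^\ast=A_0^\ast$ supplying the remaining clauses of Def.~\ref{def:closable}. The only difference is one of presentation: the paper compresses the construction into a citation of Reed--Simon Thm.~VIII.1 and dismisses the forward direction as evident, whereas you spell out both the Riesz arguments and the clause-by-clause verification.
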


\begin{proof}
    Let $A \in \qf^\omega$. The two continuity conditions imply that $A$ can be extended to a linear operator $A_0:\fpno\to\Hil$ such that also $\fpno\subset\dom A_0^\ast$. In particular, $A_0$ and $A_0\st$ are both densely defined, which implies that $A^- := A_0^{\ast\ast}$ is a closed extension of $A$ with core $\fpno$ \cite[Thm.~VIII.1]{Reed:1972}. Also, $(A^-)\st = A_0^{\ast\ast\ast}=A_0\st$, which is defined at least on $\fpno$. Hence $A$ is $\omega$-closable. The converse is evident.
\cmpqed\end{proof}

In particular, this shows that the $\omega$-closable elements form a subspace of $\qf^\omega$. While the criterion in Lemma~\ref{lemma:closable} is easy to state, it is rather hard to apply in examples where the expansion coefficients $\cme{m,n}{A}$ are used to define $A$. We will therefore deduce a \emph{sufficient} criterion for $\omega$-closability which is based directly on estimates for the $\cme{m,n}{A}$. The idea is to establish absolute convergence of the series \eqref{eq:expansionrecall} in a certain sense. (In the sense of quadratic forms, the series is always well-defined as it is finite in matrix elements; for obtaining closed operators, however, convergence issues become relevant.)

\begin{proposition}\label{proposition:summable1}
 Let $A \in \qf^\omega$. Suppose that for each fixed $n$,
\begin{equation}\label{eq:summable1}
   \sum_{m=0}^\infty \frac{2^{m/2}}{\sqrt{m!}} \Big( \onorm{ \cme{m,n}{A} }{m \times n} + \onorm{ \cme{n,m}{A} }{n \times m} \Big) < \infty.
\end{equation}
Then, $A$ is $\omega$-closable.
\end{proposition}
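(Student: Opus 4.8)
The plan is to deduce $\omega$-closability from the continuity criterion of Lemma~\ref{lemma:closable}: it suffices to show that for each fixed $\psi\in\fpno$ the map $\chi\mapsto\hscalar{\psi}{A\chi}$ extends continuously from $\fpno$ to all of $\Hil$, and symmetrically for the antilinear dependence on $\psi$. I would fix $\psi\in\fpno$ with $\fpnp_N\psi=\psi$ and decompose $\psi=\sum_{p=0}^N\psi_p$, $\chi=\sum_q\chi_q$ into particle-number components, aiming for a bound $|\hscalar{\psi}{A\chi}|\le C_\psi\gnorm{\chi}{}$ with $C_\psi<\infty$. Inserting the expansion \eqref{eq:expansionrecall}, the decisive bookkeeping observation is that the monomial $z^{\dagger m}z^n$ maps $\Hil_q$ into $\Hil_{q-n+m}$, so that a term $\hscalar{\psi_p}{z^{\dagger m}z^n(\cme{m,n}{A})\chi_q}$ can be nonzero only if $q\ge n$ and $p=q-n+m$; in particular $m\le p\le N$, so only finitely many values of $m$ occur, while $n$ ranges over all of $\nbb_0$ with $q=p-m+n$.

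The second ingredient I would invoke is the standard one-block estimate for normal-ordered monomials (as in \cite[Prop.~2.1 and Sec.~3]{BostelmannCadamuro:expansion}), namely, for $p-m=q-n$,
\begin{equation*}
 |\hscalar{\psi_p}{z^{\dagger m}z^n(f)\chi_q}| \le \sqrt{\tfrac{p!}{(p-m)!}\,\tfrac{q!}{(q-n)!}}\;\gnorm{(\thetav,\etav)\mapsto e^{-\omega(E(\thetav))}f(\thetav,\etav)}{m\times n}\;\gnorm{e^{\omega(H/\mu)}\psi_p}{}\,\gnorm{\chi_q}{},
\end{equation*}
where the weight $e^{-\omega(E(\thetav))}$ on the creation variables is absorbed by $e^{\omega(H/\mu)}$ on the finite-energy vector $\psi_p$ (using monotonicity of $\omega$, the energy of the created particles being a sub-sum of that of $\psi_p$), leaving $\chi_q$ unweighted. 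Since the first summand of \eqref{eq:crossnorm2} is bounded by $\onorm{f}{m\times n}$, this already expresses each matrix element through the norm appearing in the hypothesis.

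The heart of the argument, and the step I expect to require the most care, is the summation. Writing $s:=p-m=q-n$ and applying Cauchy--Schwarz in $n$ against the orthogonal family $(\chi_q)_q$ (the index $q=s+n$ being injective in $n$), the inner sum over $n$ is controlled, up to an $n$-independent prefactor, by
\begin{equation*}
 \Big(\sum_{n=0}^\infty \frac{1}{(m!\,n!)^2}\,\frac{p!\,q!}{(s!)^2}\,\onorm{\cme{m,n}{A}}{m\times n}^2\Big)^{1/2}\gnorm{\chi}{}.
\end{equation*}
Setting $a_{m,n}:=\tfrac{2^{n/2}}{\sqrt{n!}}\onorm{\cme{m,n}{A}}{m\times n}$, so that $\onorm{\cme{m,n}{A}}{m\times n}^2=a_{m,n}^2\,n!\,2^{-n}$, and using $q!=(s+n)!$, the $n$-dependent part of the summand equals $\tfrac{(s+n)!}{n!}\,2^{-n}\,a_{m,n}^2=s!\,\binom{s+n}{s}2^{-n}a_{m,n}^2\le s!\,2^{s}\,a_{m,n}^2$, by the crude bound $\binom{s+n}{s}\le 2^{s+n}$. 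This is precisely the point at which the combinatorial weight $2^{n/2}/\sqrt{n!}$ in \eqref{eq:summable1} is tuned to cancel the binomial growth coming from the creation/annihilation factorials. The relabelled hypothesis (the $\cme{n,m}{A}$-term of \eqref{eq:summable1}, read for fixed first index $m$) gives $\sum_n a_{m,n}<\infty$, hence $\sum_n a_{m,n}^2<\infty$; summing the remaining finite range $0\le m\le p\le N$ together with the finite constants $\gnorm{e^{\omega(H/\mu)}\psi_p}{}$ yields $C_\psi<\infty$. The antilinear continuity in $\psi$ for fixed $\chi\in\fpno$ then follows by the same computation with the roles of $m$ and $n$ (and of the two summands of \eqref{eq:crossnorm2}) interchanged, now weighting the annihilation variables and invoking the directly stated $\cme{m,n}{A}$-term of \eqref{eq:summable1} for fixed $n$. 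By Lemma~\ref{lemma:closable}, $A$ is then $\omega$-closable.
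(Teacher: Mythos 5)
Your proposal is correct and takes essentially the same route as the paper: both arguments reduce $\omega$-closability to Lemma~\ref{lemma:closable}, control the monomials $z^{\dagger m}z^n(\cme{m,n}{A})$ between finite-particle-number vectors via the $\sqrt{\text{factorial-ratio}}$ estimates of \cite[Prop.~2.1]{BostelmannCadamuro:expansion}, tame the combinatorics by crude binomial bounds (your $\binom{s+n}{s}\le 2^{s+n}$ plays exactly the role of the paper's $k!/n!(k-n)!\le 2^k$ and $(k-n+m)!/(k-n)!m!\le 2^{k-n+m}$), and then invoke the two summands of \eqref{eq:summable1} for the two directions of continuity. The only difference is presentational: where the paper quotes the packaged operator-norm bound $\|z^{\dagger m}z^n(f)e^{-\omega(H/\mu)}\fpnp_k\|\le 2\sqrt{k!(k-n+m)!}\,\onorm{f}{m\times n}/(k-n)!$ and sums it directly, you unfold that bound into a block-by-block matrix-element estimate with an explicit Cauchy--Schwarz over the particle-number components $\chi_{s+n}$, which is the same mechanism written out by hand.
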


\begin{proof}
By \cite[Prop.~2.1]{BostelmannCadamuro:expansion}, the annihilator-creator monomials fulfill the estimate, $k \in \nbb_0$,
\begin{equation}
\big\|  z^{\dagger m} z^n(f)e^{- \omega(H/\mu)} \fpnp_k \big\| \leq 2 \frac{\sqrt{k!(k-n+m)!}}{(k-n)!}\|f\|^{\omega}_{m\times n}.
\end{equation}
Using this estimate in the expansion \eqref{eq:expansionrecall}, we obtain for $\psi,\chi\in\fpno$,
\begin{equation}
 | \hscalar{\psi}{A  \chi } |
  \leq \gnorm{\psi}{} \gnorm{e^{\omega(H/\mu) } \chi}{} \sum_{m=0}^\infty \sum_{n=0}^k \frac{2}{m!n!} \frac{\sqrt{k!(k-n+m)!}}{(k-n)!} \onorm{ \cme{m,n}{A} }{m \times n} ,
\end{equation}
where $k=k(\chi)$ can be chosen independent of $\psi$. Estimating $k!/n!(k-n)! \leq 2^k$ and $(k-n+m)!/(k-n)!m! \leq 2^{k-n+m}$, we obtain
\begin{equation}\label{eq:aqknorm}
 | \hscalar{\psi}{A  \chi } | 
  \leq 2^{k+1} \gnorm{\psi}{} \gnorm{e^{\omega(H/\mu) } \chi}{}  
   \sum_{n=0}^k  \frac{2^{-n/2}}{\sqrt{n!}} \sum_{m=0}^\infty  \frac{2^{m/2}}{\sqrt{m!}} \onorm{ \cme{m,n}{A} }{m \times n} ,
\end{equation}
 which converges by assumption. Thus the matrix element is $\hcal$-continuous in $\psi$ at fixed $\chi$. A similar argument, with the roles of $m$ and $n$ exchanged, shows continuity in $\chi$ at fixed $\psi$. The result then follows from Lemma~\ref{lemma:closable}.
\cmpqed\end{proof}

Under a stricter summability condition, we can deduce an additional property that will become relevant later, in Proposition~\ref{proposition:locality}(\ref{it:aclosed}).

\begin{proposition}\label{proposition:summable2}
 Let $A \in \qf^\omega$. Suppose that
\begin{equation}\label{eq:summable2}
   \sum_{m,n=0}^\infty \frac{2^{(m+n)/2}}{\sqrt{m!n!}} \onorm{ \cme{m,n}{A} }{m \times n} < \infty.
\end{equation}
Then, $A$ is $\omega$-closable; and for any $g \in \dcal^\omega(\rbb^2)$, we have 
$\exp (i \phi(g)^-) \fpno \subset \domain{A^-}$.
\end{proposition}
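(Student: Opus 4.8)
The plan is to first dispose of $\omega$-closability and then treat the domain statement, which is the substantial part. Since every summand in \eqref{eq:summable2} is non-negative, fixing $n$ and dividing the convergent double series by the constant $2^{n/2}/\sqrt{n!}$ shows that $\sum_{m}\frac{2^{m/2}}{\sqrt{m!}}\onorm{\cme{m,n}{A}}{m\times n}<\infty$ for each $n$, and exchanging the two indices bounds $\sum_{m}\frac{2^{m/2}}{\sqrt{m!}}\onorm{\cme{n,m}{A}}{n\times m}$ likewise. Thus \eqref{eq:summable2} implies \eqref{eq:summable1}, and $\omega$-closability follows from Proposition~\ref{proposition:summable1}. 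Write $A_0$ for the operator on $\fpno$ extending $A$ and $A^- = \overline{A_0}$ for its closure, and set $C_A$ equal to the (finite) left-hand side of \eqref{eq:summable2}.

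\textbf{A uniform sector estimate.} My first genuine step is to upgrade the matrix-element bound \eqref{eq:aqknorm} to a norm bound on $A_0$ applied to vectors of fixed particle number. Repeating the computation in the proof of Proposition~\ref{proposition:summable1} with $\chi=\xi$, then taking the supremum over $\gnorm{\psi}{}\le 1$ to pass from the matrix element to $\gnorm{A_0\xi}{}$, and finally using $2^{-n/2}\le 2^{n/2}$ to dominate the resulting double sum by $C_A$, I expect to obtain for every $\xi\in\fpno$ with $\fpnp_N\xi=\xi$
\begin{equation}\label{eq:sectorbound}
  \gnorm{A_0\xi}{}\le 2^{N+1}\,C_A\,\gnorm{e^{\omega(H/\mu)}\xi}{}.
\end{equation}
This is the only place where the full strength of the double-sum hypothesis \eqref{eq:summable2} (rather than just the slice-wise \eqref{eq:summable1}) is used.

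\textbf{Reduction to a decay estimate.} Write $W:=\exp(i\phi(g)^-)$, understood through its power series on the entire analytic vectors $\fpno$, fix $\psi\in\fpno$, and let $P_j$ denote the projector onto $\Hil_j$. The claim is that it suffices to prove
\begin{equation}\label{eq:wdecay}
  \sum_{j=0}^{\infty} 2^{j}\,\gnorm{e^{\omega(H/\mu)}P_j W\psi}{}<\infty .
\end{equation}
Indeed, \eqref{eq:wdecay} first yields $\fpnp_N W\psi = \sum_{j\le N}P_j W\psi\in\fpno$, so \eqref{eq:sectorbound} applies to each $\xi=P_j W\psi$ and gives $\gnorm{A_0 P_j W\psi}{}\le 2^{j+1}C_A\gnorm{e^{\omega(H/\mu)}P_j W\psi}{}$. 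Summing, the sequence $A_0\fpnp_N W\psi=\sum_{j\le N}A_0 P_j W\psi$ is norm-Cauchy, while $\fpnp_N W\psi\to W\psi$ in $\Hil$. Since $\fpnp_N W\psi\in\fpno\subset\domain{A_0}$ and $A^-=\overline{A_0}$ is closed, the graph limit lies in $A^-$, so $W\psi\in\domain{A^-}$, which is the assertion.

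\textbf{The decay estimate \eqref{eq:wdecay} (main obstacle).} This is the technical heart, and where the hypothesis $g\in\dcal^\omega(\rbb^2)$ is essential. The idea is to commute the energy weight $e^{\omega(H/\mu)}$ through the exponential. Using the subadditivity $\omega(a+b)\le\omega(a)+\omega(b)$ of the indicatrix, creating or annihilating a particle of rapidity $\theta$ changes the weight by at most a factor $e^{\omega(\cosh\theta)}$; hence $e^{\omega(H/\mu)}\phi(g)e^{-\omega(H/\mu)}$ is, sector by sector, dominated by the bound for $\phi$ with $g^\pm$ replaced by $\theta\mapsto e^{\omega(\cosh\theta)}g^\pm(\theta)$, and these replacements are square-integrable by the very definition of $\dcal^\omega(\rbb^2)$. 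Writing $e^{\omega(H/\mu)}W\psi=\exp(i\Phi)\,e^{\omega(H/\mu)}\psi$ with $\Phi$ the so-conjugated field, I would expand the exponential and estimate the $j$-particle component as a sum over creation/annihilation words of length $k\ge|j-n_0|$ terminating in $\Hil_j$; each word contributes at most a constant to the $k$ times a product of bosonic $\sqrt{\ell}$-normalisation factors. The delicate point, and the main obstacle, is to control these factorial path products against the $1/k!$ from the exponential, uniformly in $j$, so as to arrive at a coherent-state-type bound $\gnorm{e^{\omega(H/\mu)}P_j W\psi}{}\le D\,B^{j}/\sqrt{j!}$; granting it, $\sum_j 2^{j}B^{j}/\sqrt{j!}<\infty$ yields \eqref{eq:wdecay}.
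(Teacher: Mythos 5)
Your reductions are sound, and they follow the same architecture as the paper's own proof: closability via Proposition~\ref{proposition:summable1}; the fixed-particle-number bound $\gnorm{A_0\xi}{}\le 2^{N+1}C_A\gnorm{e^{\omega(H/\mu)}\xi}{}$ extracted from \eqref{eq:aqknorm} (this is exactly how the paper estimates $\gnorm{A^- e^{-\omega(H/\mu)}\fpnp_k}{}$); and the closedness-of-$A^-$ argument applied to the truncations $\fpnp_N e^{i\phi(g)^-}\psi$, which is the paper's final step with $\hat\chi_k$. The problem is that you stop exactly where the actual work happens: your decay estimate for $\sum_j 2^j\gnorm{e^{\omega(H/\mu)}P_j e^{i\phi(g)^-}\psi}{}$ is only conjectured (``granting it''), so as written the proof is incomplete at its self-declared technical heart, and that estimate is the one substantive assertion of the proposition beyond Proposition~\ref{proposition:summable1}.

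The gap is fillable by essentially the means you gesture at, and this is what the paper does; no expansion into creation/annihilation words is needed. The bounds \eqref{eq:omegaz} give $\gnorm{e^{\omega(H/\mu)}\phi(g)e^{-\omega(H/\mu)}\fpnp_{\ell'}}{}\le\sqrt{\ell'+1}\,c_g$ with $c_g=\gnorm{e^{\omega(\cosh\cdot)}g^+}{}+\gnorm{g^-}{}$, finite precisely because $g\in\dcal^\omega(\rbb^2)$. Since $\phi(g)$ changes the particle number by at most one, telescoping yields
\begin{equation*}
\gnorm{e^{\omega(H/\mu)}\phi(g)^j e^{-\omega(H/\mu)}\fpnp_\ell}{}\le\sqrt{\tfrac{(\ell+j)!}{\ell!}}\;c_g^j ,
\end{equation*}
which is your ``factorial path product''. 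The tension with the $1/j!$ from the exponential then resolves by elementary combinatorics: $\frac{1}{j!}\sqrt{(\ell+j)!/\ell!}=\sqrt{\binom{\ell+j}{j}}\,\frac{1}{\sqrt{j!}}\le\frac{2^{(\ell+j)/2}}{\sqrt{j!}}$, and using $j!\ge(k-\ell)!\,p!$ for $j=(k-\ell)+p$, summing over the only powers $j\ge k-\ell$ that can contribute to $P_k$ gives
\begin{equation*}
\gnorm{P_k\, e^{\omega(H/\mu)} e^{i\phi(g)^-} e^{-\omega(H/\mu)}\fpnp_\ell}{}\le c_g'\,\frac{(\sqrt{2}\,c_g)^k}{\sqrt{(k-\ell)!}} ,
\end{equation*}
which is the coherent-state-type bound you wanted, uniformly in $k$ and strong enough to beat the geometric weight $2^k$. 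So: right approach, correct reductions, but the central estimate must actually be proven; the resolution requires no idea beyond \eqref{eq:omegaz} plus the binomial bound, and your ``main obstacle'' is not an obstacle once those are invoked.
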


\begin{proof}
In view of Proposition~\ref{proposition:summable1}, only the second part requires proof. We recall the estimates \cite[Eq.~(2.18)]{BostelmannCadamuro:expansion}
\begin{equation}\label{eq:omegaz}
\begin{aligned}
 \gnorm{ e^{\omega(H/\mu)} \zd(f) e^{-\omega(H/\mu)} \fpnp_\ell  }{}
  &\leq \sqrt{\ell+1} \, \gnorm{e^{\omega(\cosh {\cdot})}f}{},
\\
 \gnorm{ e^{\omega(H/\mu)} z(f) e^{-\omega(H/\mu)} \fpnp_\ell  }{}
  &\leq \sqrt{\ell} \, \gnorm{f}{}.
\end{aligned}
\end{equation}
With $\phi(g) = \zd(g^+) + z(g^-)$, it follows that for any $j \in \nbb$,
\begin{equation}
 \gnorm{ e^{\omega(H/\mu)} \phi(g)^j e^{-\omega(H/\mu)} \fpnp_\ell  }{}
  \leq  \sqrt{\frac{(\ell+j)!}{\ell!}}\,c_g^j , \quad 
\end{equation}
where $c_g := \gnorm{e^{\omega(\cosh {\cdot})} g^+}{} + \gnorm{g^-}{}$.
Since $\fpno$ consists of analytic vectors for $\phi(g)$, and since $\phi(g)$ changes the particle number by at most 1, we then obtain for $k \geq \ell$,
\begin{multline}
 \gnorm{ P_k e^{\omega(H/\mu)} e^{i \phi(g)^-} e^{-\omega(H/\mu)} \fpnp_\ell  }{}
  \leq \sum_{j=k-\ell}^\infty \frac{1}{j!}  \gnorm{ e^{\omega(H/\mu)} \phi(g)^j e^{-\omega(H/\mu)} \fpnp_\ell  }{}
\\
\leq c_g^{k-\ell} \frac{2^{k/2}}{\sqrt{(k-\ell)!}} \sum_{p=0}^\infty \frac{(\sqrt{2} c_g)^p}{\sqrt{p!}}
\leq c_g' \frac{(\sqrt{2}c_g)^k}{\sqrt{(k-\ell)!}}
\end{multline}
with some constant $c_g'>0$ depending on $g$.
For any $\chi \in \fpno$, we therefore have with suitable $\ell$,
\begin{multline}\label{eq:aexpest}
 \gnorm{ A^- P_k e^{i \phi(g)^-} \chi  }{}
\ahponly{\\}
 \leq \gnorm{ A^-  e^{-\omega(H/\mu)}  \fpnp_k }{}
 \gnorm{ P_k e^{\omega(H/\mu)} e^{i \phi(g)^-} e^{-\omega(H/\mu)} \fpnp_\ell  }{}
 \gnorm{  e^{\omega(H/\mu)} \chi }{}
\\
  \leq
 2 c_g' \gnorm{ e^{\omega(H/\mu)} \chi}{}
\frac{(\sqrt{8}c_g)^k}{\sqrt{(k-\ell)!}}
\sum_{m,n=0}^\infty
\frac{2^{(m-n)/2}}{\sqrt{m!n!}} \onorm{ \cme{m,n}{A} }{m \times n},
\end{multline}
where the estimate on $A^-$ has been deduced from \eqref{eq:aqknorm}. The series on the r.h.s.\ exists by hypothesis. 

Now set $\hat \chi_k := \fpnp_k \exp(i \phi(g)^-) \chi \in \fpno$. Since the r.h.s.~of \eqref{eq:aexpest} is summable over $k$, both $\hat \chi_k$ and $A^- \hat \chi_k$ are convergent sequences in $\Hil$. As $A^-$ is closed, this implies that $\lim_{k\to\infty} \hat \chi_k = \exp(i \phi(g)^-) \chi$ is contained in the domain of $A^-$.
\cmpqed\end{proof}

\subsection{Locality}  \label{sec:critlocality}

We now consider \emph{local} observables of our model. In Sec.~\ref{sec:background}, we introduced two notions of locality: a net of von Neumann algebras $\A(\ocal)$, where locality can be expressed in terms of commutation relations in the usual sense, and the concept of $\omega$-locality for quadratic forms (Def.~\ref{definition:omegalocal}), which was based on relative locality to the wedge-local fields $\phi(g)$, $\phi'(g)$. A priori, $\omega$-locality is a much weaker notion, since it only involves commutators in the weak sense between a restricted set of observables. However, we show that for suitably regular quadratic forms (bounded or $\omega$-closable), $\omega$-local observables can be linked to the net of local algebras.

\begin{proposition}\label{proposition:locality}
In the following, let $\rcal$ be one of the regions $\wcal_x$, $\wcal_y'$, $\ocal_{x,y}$ for some $x,y\in\rbb^2$.
\begin{enumerate}[(a)]

\item \label{it:abounded}
  Let $A$ be a bounded operator; then $A$ is $\omega$-local in $\rcal$ if and only if $A \in \A(\rcal)$.

\item \label{it:aclosed}
  Let $A\in\qf^\omega$ be $\omega$-closable. Suppose that  
\begin{equation}\label{eq:weyldomaincond}
\forall g \in \dcal_\rbb^\omega(\rbb^2): \quad  \exp (i \phi(g)^-) \fpno \subset \domain{A^-}.
\end{equation}
  Then $A$ is $\omega$-local in $\rcal$ if and only if $A^-$ is affiliated with $\A(\rcal)$.

\item \label{it:aising}
 In the case $S=-1$, statement (\ref{it:aclosed}) is true even without the condition \eqref{eq:weyldomaincond}.

\end{enumerate}
\end{proposition}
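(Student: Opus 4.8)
The plan is to reduce all three statements to the single right wedge $\rcal=\wcal$, and then to read off affiliation from the von Neumann algebra of bounded operators commuting with the operator in question. Recall that $\A(\wcal)=\{e^{i\phi(f)^-}:f\in\dcal_\rbb^\omega(\wcal')\}'$; write $\mathcal{G}$ for this generating set of unitaries, so that $\A(\wcal)'=\mathcal{G}''$. For a bounded $A$, membership $A\in\A(\wcal)$ is by definition commutation with every element of $\mathcal{G}$, whereas for a closed $A^-$, affiliation $A^-\affiliated\A(\wcal)$ means commutation (in the affiliation sense) with every unitary of $\mathcal{G}''$. To handle the latter I use
\[
\ncal := \{ B \in \boundedops : B\,\dom A^- \subset \dom A^-,\; A^- B\xi = B A^- \xi \ (\xi \in \dom A^-),\ \text{and likewise for } B^\ast \};
\]
a standard strong-$\ast$ limit argument, using that $A^-$ is closed, shows $\ncal$ is a von Neumann algebra whose unitaries are exactly the $W'$ with $W'A^-(W')^\ast=A^-$. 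Hence, once $\mathcal{G}\subset\ncal$ is established, the generated algebra $\mathcal{G}''=\A(\wcal)'$ lies in $\ncal$, which is precisely $A^-\affiliated\A(\wcal)$. The passage to $\wcal_x$, $\wcal_y'$, $\ocal_{x,y}$ is routine: $U$ transports $\omega$-locality and affiliation covariantly (the antiunitary $U(j)$ handling $\wcal_y'$), and for double cones one combines $\A(\ocal_{x,y})=\A(\wcal_x)\cap\A(\wcal_y')$ with the fact that $A^-\affiliated\M_1$ and $A^-\affiliated\M_2$ hold jointly iff $A^-\affiliated(\M_1\cap\M_2)$, immediate from $\ncal$ being a von Neumann algebra.

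For part (\ref{it:abounded}) the mechanism is an analytic-vector computation, with no generation step since $\A(\wcal)=\mathcal{G}'$. Fix real $f\in\dcal_\rbb^\omega(\wcal')$; vectors of $\fpno$ are analytic for the self-adjoint $\phi(f)^-$, and $\phi(f),\phi(f)^\ast$ map $\fpno$ into itself. Iterating the form identity $\hscalar{\psi}{A\phi(f)\chi}=\hscalar{\phi(f)\psi}{A\chi}$ supplied by $\omega$-locality yields $\hscalar{\psi}{A\phi(f)^k\chi}=\hscalar{\phi(f)^k\psi}{A\chi}$ for all $k$ and $\psi,\chi\in\fpno$. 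These are exactly the equalities of the Taylor coefficients at $t=0$ of the two entire functions $t\mapsto\hscalar{\psi}{A e^{it\phi(f)^-}\chi}$ and $t\mapsto\hscalar{e^{-it\phi(f)^-}\psi}{A\chi}$, so the functions coincide; density of $\fpno$ then gives $A e^{it\phi(f)^-}=e^{it\phi(f)^-}A$, i.e.\ $A\in\mathcal{G}'=\A(\wcal)$. Conversely, if $A\in\A(\wcal)$, differentiating the commutation with $e^{it\phi(f)^-}$ at $t=0$ on $\fpno$ recovers $[A,\phi(f)]=0$; complex $f$ follow by linearity of $f\mapsto\phi(f)$.

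Part (\ref{it:aclosed}) runs the same scheme for the closure $A^-$, and this is exactly where hypothesis \eqref{eq:weyldomaincond} enters: it ensures $e^{it\phi(f)^-}\chi=e^{i\phi(tf)^-}\chi\in\dom A^-$ for $\chi\in\fpno$, so that $t\mapsto\hscalar{\psi}{A^- e^{it\phi(f)^-}\chi}$ is well defined. Rewriting it as $\hscalar{(A^-)^\ast\psi}{e^{it\phi(f)^-}\chi}$ (legitimate since $\fpno\subset\dom(A^-)^\ast$ by $\omega$-closability) exhibits it as entire with the same Taylor coefficients as in part (\ref{it:abounded}), using $\phi(f)^k\chi\in\fpno\subset\dom A^-$. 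This gives $A^- e^{it\phi(f)^-}\chi=e^{it\phi(f)^-}A^-\chi$ on $\fpno$. Since $\fpno$ is a core for the closed $A^-$, the closed operator $(e^{i\phi(f)^-})^\ast A^- e^{i\phi(f)^-}$, which agrees with $A^-$ on $\fpno$, must contain $A^-$; running the same with $-f$ and combining gives $e^{i\phi(f)^-}A^-(e^{i\phi(f)^-})^\ast=A^-$, that is $\mathcal{G}\subset\ncal$, hence affiliation by the first paragraph. The converse needs no domain hypothesis: affiliation yields $A^- e^{it\phi(f)^-}\chi=e^{it\phi(f)^-}A^-\chi$ for $\chi\in\fpno\subset\dom A^-$, and differentiation at $t=0$ returns $[A,\phi(f)]=0$.

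Finally, part (\ref{it:aising}) uses that at $S=-1$ the $S$-deformed commutation relations become the canonical anticommutation relations, so that $\phi(f)=\zd(f^+)+z(f^-)$ is a \emph{bounded} operator, with $\gnorm{\phi(f)^-}{}\leq\gnorm{f^+}{}+\gnorm{f^-}{}$. Boundedness lets one place $\phi(f)$ into $\ncal$ directly, bypassing \eqref{eq:weyldomaincond}: the $\omega$-locality identity now reads $\hscalar{\psi}{A\phi(f)\chi}=\hscalar{\psi}{\phi(f)A\chi}$, giving $A^-\phi(f)\chi=\phi(f)A^-\chi$ for $\chi\in\fpno$, and since $\phi(f)$ is bounded and $A^-$ closed this extends from the core $\fpno$ to all of $\dom A^-$; the same holds for $\phi(\bar f)=\phi(f)^\ast$, so $\phi(f)\in\ncal$. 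As $\ncal$ is a von Neumann algebra containing the bounded self-adjoint operators $\phi(f)^-$ ($f$ real), it also contains every $e^{it\phi(f)^-}$, and affiliation follows. The principal obstacle throughout is precisely this passage from the weak, quadratic-form commutation with the \emph{unbounded} field $\phi(f)$ to honest operator affiliation with the exponentiated algebra, i.e.\ controlling the domain of $A^-$ along the flow $e^{it\phi(f)^-}$; hypothesis \eqref{eq:weyldomaincond} is the device that resolves this in the general case (\ref{it:aclosed}), while in the Ising case (\ref{it:aising}) its role is played by boundedness of $\phi(f)$ and the resulting norm-closedness of $\ncal$.
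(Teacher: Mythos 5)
Your proposal is correct and follows essentially the same route as the paper's proof: reduction to the standard wedge, analytic vectors of $\phi(f)$ to upgrade the form commutator $[A,\phi(f)]=0$ to commutation of $A^-$ with the unitaries $e^{i\phi(f)^-}$ (with \eqref{eq:weyldomaincond} supplying the needed domain control), extension via closedness and the core property, and the double commutant theorem to reach all of $\A(\wcal)'$, with boundedness of $\phi(f)^-$ replacing \eqref{eq:weyldomaincond} in the Ising case. Your repackaging of the extension step into the von Neumann algebra $\ncal$ of bounded operators strongly commuting with $A^-$, and your Taylor-coefficient comparison in place of the paper's partial sums $B_n$, are only organizational variants of the same argument (the paper also obtains (a) as a special case of (b) rather than directly, but this is immaterial).
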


\begin{proof}
We will prove the statement only for $\rcal=\wcal$ (the standard right wedge). For $\rcal=\wcal_x$ or $\rcal=\wcal_y'$,  it can then be obtained by applying Poincar\'e transformations, and for $\rcal=\ocal_{x,y}$  by considering intersections.
Also, (\ref{it:abounded}) is a special case of (\ref{it:aclosed}). 

For (\ref{it:aclosed}),
let $A$ be $\omega$-closable and $\omega$-local in $\wcal$. Let $g \in \dcal^\omega_\rbb(\wcal ')$. For $n \in \nbb_0$, we set
\begin{equation}
 B_n:=\sum_{k=0}^n \frac{i^k}{k!}(\phi(g)^-)^k,
\end{equation}
an operator defined at least on $\fpno$, along with its adjoint. Noting that powers of $\phi(g)$ leave $\fpno$ invariant, we can deduce from $\omega$-locality of $A$ by repeated application of Definition~\ref{definition:omegalocal} that
\begin{equation}\label{eq:phincommute}
     \bighscalar{B_n^\ast \psi}{ A \chi } = \bighscalar{ \psi}{ A B_n \chi }
= \bighscalar{ (A^-)^{\ast} \psi}{ B_n \chi }
\quad
\text{for all }
     \quad \psi,\chi \in \fpno,
\end{equation}
where the last equality uses $\psi \in \domain{(A^-)\st}$.
Both $\psi$ and $\chi$ are analytic vectors for $\phi(g)$. Therefore, as $n \to \infty$, we have $B_n\chi \to B\chi$ and $B_n\st\psi \to B\st\psi$, with $B:=\exp i \phi(g)^-$. Equation~\eqref{eq:phincommute} implies
\begin{equation}\label{eq:gfcommute}
     \bighscalar{B\st \psi}{ A^- \chi } = \bighscalar{ (A^-)^{\ast} \psi}{ B \chi }
\quad
\text{for all }
     \quad \psi,\chi \in \fpno.
\end{equation}
By the hypothesis \eqref{eq:weyldomaincond}, we have $B \chi \in \domain{A^-}$, which implies $\hscalar{(A^-)\st \psi}{B \chi} = \hscalar{\psi}{A^- B \chi}$. Since $B$ is bounded and $\psi$ can be chosen from a dense set in $\Hil$, we conclude that
\begin{equation}\label{eq:bafpncommute}
    B A^- \chi = A^- B \chi
\quad
\text{for all }
     \chi \in \fpno.
\end{equation}
If now more generally $\chi \in \domain{A^-}$, we can find a sequence $(\chi_j)$ in $\fpno$ such that $\chi_j \to \chi$ and $A^- \chi_j \to A^- \chi$ in $\Hil$. We compute from Eq.~\eqref{eq:bafpncommute}, using boundedness of $B$,
\begin{equation}\label{eq:balimit}
   B \chi_j \to B \chi \quad \text{and} \quad A^- B \chi_j = B A^-\chi_j \to B A^- \chi,
\end{equation}
so that, since $A^-$ is closed,
\begin{equation}\label{eq:badcommute}
\begin{aligned}
    B \chi \in & \domain{A^-} \quad  \text{and} \quad A^- B\chi = BA^-\chi
\quad \\
&\text{for all }
     \chi \in \domain{A^-},\;
      B = \exp(i \phi(g)^-),\; g \in \dcal^\omega_\rbb(\wcal').
\end{aligned}
\end{equation}
The same then holds if $B$ is a finite product of operators $\exp(i \phi(g)^-)$, a linear combination of those, or their strong limit (by a similar computation as in \eqref{eq:balimit}). Thus, by the double commutant theorem, \eqref{eq:badcommute} holds for all $B \in \{\exp(i \phi(g)^-) \,\big|\, g \in \dcal^\omega_\rbb(\wcal')  \}''=\A(\wcal)'$. But this means $A^- \affiliated \A(\wcal)$, as claimed.

For the converse, let $A^- \affiliated \A(\wcal)$ and let $g \in \dcal_\rbb(\wcal')$. For any $t \in \rbb$, we have $\exp i t \phi(g)^- \in \A(\wcal)'$. Affiliation of $A$ implies
\begin{equation}\label{eq:wlconverse}
   \forall \psi,\chi \in \fpno\; \forall t \in \rbb: \quad
    \hscalar{e^{-i t \phi(g)^-} \psi}{A^- \chi}
=   \hscalar{(A^-)^\ast \psi}{e^{i t \phi(g)^-} \chi}.
\end{equation}
Since $\psi,\chi$ are analytic vectors for $\phi(g)$, both sides of \eqref{eq:wlconverse} are real analytic in $t$. Computing their derivative at $t=0$, we find
\begin{equation}
   \forall \psi,\chi \in \fpno: \quad
    \hscalar{ \phi(g) \psi}{A \chi}
=   \hscalar{(A^-)\st \psi}{ \phi(g) \chi} =   \hscalar{ \psi}{ A \phi(g) \chi}.
\end{equation}
Since $g \in \dcal_\rbb(\wcal')$ was arbitrary, and since we can extend the relation to complex-valued $g$ by linearity, this means that $A$ is $\omega$-local in $\wcal$. This completes the proof of (\ref{it:aclosed}).

For (\ref{it:aising}), note that in the case $S=-1$, the operators $\phi(g)^-$ are actually bounded, and generate the algebra $\A(\wcal)'$ \cite{Lechner:2005}; we can restrict to $g \in \dcal_\rbb^\omega(\wcal')$ here by density. It is clear that $\phi(g) \fpno \subset \fpno\subset \domain{A^-}$, and using this instead of \eqref{eq:weyldomaincond}, a similar (in fact, simpler) computation as for (\ref{it:aclosed}) shows that $A^- \affiliated \A(\wcal)$.
\cmpqed\end{proof}

\subsection{Cyclicity of the vacuum, and relation to the local algebras}  \label{sec:reehschlieder}

We now ask for criteria which guarantee that a set of (local) quadratic forms is ``maximally large'', in the sense of generating all vectors in the Hilbert space, or all local observables in a certain sense.

We first investigate the Reeh-Schlieder property, i.e., the question whether the vacuum is cyclic for given subspaces $\qf_{\mathrm{d}} \subset \qf^\omega$ of quadratic forms. More specifically, we suppose that each $A \in \qf_{\mathrm{d}}$ is $\omega$-closable, hence $A^-\Omega$ is well defined; we ask whether $\{A^- \Omega: A \in \qf_{\mathrm{d}}\}$ is dense in $\Hil$.

We show that for cyclicity, it is sufficient to check density of the states at finite particle number over compact sets in rapidity space only. To that end, for $m \in \nbb_0$, we denote with $P_{m}$ the projector onto $\hcal_m\subset \hcal$ as before, and for $M\subset\nbb_0$ we write $P_{M} := \sum_{m\in M} P_{m}$.  Further, let  $P_{m,\rho}$ be the subprojection of $P_m$  onto functions supported in the ball of radius $\rho>0$, and $P_{M,\rho}$ accordingly.
\begin{lemma}\label{lemma:densitycrit}
 Let $M \subset \nbb_0$. Let $\qf_{\mathrm{d}} \subset \qf^\omega$ be a subspace with the following properties:
 \begin{enumerate}[(i)]
  \item \label{it:vacdom}
  Each  $A \in \qf_{\mathrm{d}}$ is $\omega$-closable.
  \item \label{it:timetrans} 
  For each $A \in \qf_{\mathrm{d}}$ there exists $\epsilon>0$ such that $A(x):=U(x) A U(x)^\ast \in \qf_{\mathrm{d}}$ whenever $|x|<\epsilon$.
  \item \label{it:protodense} 
  For each \emph{finite} subset $N \subset M$, and each $\rho>0$, the inclusion
  \begin{equation} 
\{  P_{N, \rho} A^- \Omega : A \in \qf_{\mathrm{d}}\} \subset  P_{N, \rho} \hcal \qquad \text{is dense.}
  \end{equation}
 \end{enumerate}
Then, $\{P_M A^- \Omega : A \in \qf_{\mathrm{d}}\}$ is dense in $P_M \hcal$.
\end{lemma}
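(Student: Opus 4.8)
The plan is to prove a Reeh-Schlieder-type density statement, so I would follow the classical Reeh-Schlieder strategy adapted to this setting. The goal is to show that the closed subspace $\vcal := \overline{\{P_M A^- \Omega : A \in \qf_{\mathrm{d}}\}} \subset P_M \hcal$ equals all of $P_M \hcal$. The standard technique is a contradiction argument combined with analyticity in the time-translation parameter: suppose $\Psi \in P_M \hcal$ is orthogonal to $\vcal$, and show $\Psi = 0$ by first propagating the orthogonality from the compact-support, finite-particle data provided by hypothesis (\ref{it:protodense}) out to all of $P_M \hcal$, exploiting hypotheses (\ref{it:timetrans}) and (\ref{it:vacdom}).

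First I would fix $\Psi \in P_M \hcal$ with $\hscalar{\Psi}{A^- \Omega} = 0$ for all $A \in \qf_{\mathrm{d}}$, and fix an arbitrary $A \in \qf_{\mathrm{d}}$. By hypothesis (\ref{it:timetrans}), there is $\epsilon > 0$ so that $A(x) = U(x) A U(x)^\ast \in \qf_{\mathrm{d}}$ for $|x| < \epsilon$, and each such $A(x)$ is $\omega$-closable with $A(x)^- = U(x) A^- U(x)^\ast$. Using invariance of the vacuum, $A(x)^- \Omega = U(x) A^- \Omega$, so the orthogonality relation reads
\begin{equation}\label{eq:rsorth}
 \hscalar{\Psi}{U(x) A^- \Omega} = 0 \quad \text{for all } |x| < \epsilon .
\end{equation}
The key analyticity input is that $U(x) = e^{i(x^0 H - x^1 P)}$ is generated by a positive-energy representation, so the function $x \mapsto \hscalar{\Psi}{U(x)A^-\Omega}$ extends to a function analytic in the forward tube and continuous up to the real boundary; since it vanishes on the open real neighbourhood $|x|<\epsilon$ by \eqref{eq:rsorth}, the edge-of-the-wedge / standard analytic continuation argument forces it to vanish for \emph{all} $x \in \rbb^2$. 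Thus $\hscalar{\Psi}{U(x) A^- \Omega} = 0$ for every $x \in \rbb^2$ and every $A \in \qf_{\mathrm{d}}$. (One must check $A^-\Omega \in P_M \hcal$ so that the positive-energy/analyticity apparatus applies on the relevant sector, and that $\Psi$ lying in $P_M \hcal$ makes the matrix element compatible; this is where the projector $P_M$ and the finite-particle structure of $\fpno$ are used.)

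Next I would translate this global orthogonality into orthogonality against the compact-support finite-particle data of hypothesis (\ref{it:protodense}). Since $U(x) A^- \Omega$ ranges over translates, and the vectors $A^- \Omega$ together with their translates span a translation-invariant set, I would argue that the spectral/rapidity content can be localized: applying $P_{N,\rho}$ for finite $N \subset M$ and $\rho > 0$, the density assumption (\ref{it:protodense}) tells us that $\{P_{N,\rho} A^- \Omega\}$ is dense in $P_{N,\rho}\hcal$. Combined with the now-established vanishing of $\hscalar{\Psi}{U(x) A^- \Omega}$ for all $x$, I would deduce $P_{N,\rho}\Psi = 0$ for every finite $N \subset M$ and every $\rho$; letting $\rho \to \infty$ and exhausting $M$ by finite subsets $N$ gives $P_M \Psi = \Psi = 0$. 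Hence $\vcal = P_M \hcal$, as claimed.

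The main obstacle I expect is the rigorous analytic continuation step establishing \eqref{eq:rsorth} $\Rightarrow$ vanishing for all $x$: one needs to verify that $x \mapsto \hscalar{\Psi}{U(x)A^-\Omega}$ genuinely has boundary values of a function analytic in the forward tube, which requires controlling the energy-momentum spectrum of $A^-\Omega$ and ensuring $A^-\Omega$ lies in the domain where the two-parameter analytic continuation (in both $x^0$ and the rapidity of the boost-free translation) is valid. The subtlety is that $A^-$ is only an $\omega$-closure, not a bounded operator, so $A^-\Omega$ need not be a nice analytic vector a priori; the argument must lean on $A^-\Omega$ having finite particle number (via $P_M$) and finite $\omega$-weighted energy to justify the spectral support and regularity needed. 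A secondary, more routine difficulty is the bookkeeping that connects the global vanishing back to the projected density hypothesis (\ref{it:protodense}), i.e. showing that the finitely-supported projections $P_{N,\rho}$ interact correctly with the translations so that orthogonality on translated vectors transfers to orthogonality against the dense sets in each $P_{N,\rho}\hcal$.
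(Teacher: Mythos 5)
Your first half follows the same route as the paper's own proof (a Reeh--Schlieder argument in the translation parameter) and is essentially correct, though heavier than necessary: the paper uses time translations only, so a single-variable analytic function on the upper half-plane suffices and no two-variable edge-of-the-wedge argument is needed. Note also that what you single out as the ``main obstacle'' is not one: analyticity of $x \mapsto \hscalar{\Psi}{U(x)A^-\Omega}$ in the forward tube requires no regularity of $A^-\Omega$ at all --- it follows from the spectrum condition alone, since $e^{iP\cdot(x+iy)} = e^{iP\cdot x}e^{-P\cdot y}$ is bounded for $y$ in the closed forward cone; $A^-\Omega$ only needs to be a vector in $\hcal$, which is exactly what hypothesis (i) provides. (Your parenthetical demand that $A^-\Omega \in P_M\hcal$ is likewise neither needed nor implied by the hypotheses; it suffices that $\Psi \in P_M\hcal$, so that $\hscalar{\Psi}{P_M A^-\Omega} = \hscalar{\Psi}{A^-\Omega}$.)

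The genuine gap is the step you defer as ``routine bookkeeping''. From $\hscalar{\Psi}{U(x)A^-\Omega} = 0$ for all $x$ you cannot ``deduce $P_{N,\rho}\Psi = 0$'' by invoking (iii) directly: hypothesis (iii) applies only to a vector that already lies in $P_{N,\rho}\hcal$ and is orthogonal to all $A^-\Omega$, and $\Psi$ itself is not such a vector --- orthogonality does not survive insertion of the projector, i.e.\ $\hscalar{P_{N,\rho}\Psi}{A^-\Omega} = \hscalar{\Psi}{P_{N,\rho}A^-\Omega}$ has no reason to vanish. The missing idea, which is the actual core of the paper's proof, is spectral smearing: Fourier transform the identically vanishing function $t \mapsto \hscalar{\Psi}{U(te)A^-\Omega}$ against test functions, which by the functional calculus yields $\hscalar{\bar{h}(H/\mu)\Psi}{A^-\Omega} = 0$ for every $h \in \scal(\rbb)$. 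Now choose $h$ equal to $1$ on $[-q,q]$ and supported in $[-2q,2q]$: because the model is massive, a single energy cutoff enforces \emph{both} restrictions built into (iii) --- $E(\thetav) \geq m$ bounds the particle number by $2q$, and $E(\thetav) \leq 2q$ confines the rapidities to a compact ball. Hence $\bar{h}(H/\mu)\Psi$ lies in $P_{N,\rho}\hcal$ for a finite $N \subset M$ and suitable $\rho$ (the energy cutoff commutes with particle number, so this vector stays in $P_M\hcal$), it is orthogonal to every $P_{N,\rho}A^-\Omega$, and (iii) forces $\bar{h}(H/\mu)\Psi = 0$; thus the components $\psi_m$ vanish a.e.\ on $\{E(\thetav) \leq q\}$ for $m \in M$, and $q \to \infty$ gives $\Psi = 0$. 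Without this smearing step, and the observation that the mass gap converts an energy cutoff into the particle-number-plus-rapidity cutoff demanded by (iii), the proof does not close.
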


\noindent
\emph{Remarks:} Condition (\ref{it:vacdom}) can be replaced with the weaker requirement that each $A$ extends to an operator with $\Omega$ in its domain. In applications in Secs.~\ref{sec:evenexamples}--\ref{sec:oddexamples}, $M$ will either be the set of even or of odd numbers, as we need to treat even and odd particle numbers separately.

\begin{proof}
 Let $\psi\in \hcal$ be orthogonal to $P_M A^- \Omega$ for all $A \in \qf_{\mathrm{d}}$; we need to show $P_M \psi = 0$. We apply a variant of the well-known Reeh-Schlieder argument \cite{ReehSchlieder:felder}. To that end, let $e$ be the unit vector in time direction, and consider for fixed $A \in \qf_{\mathrm{d}}$ the function
 \begin{equation}
    \rbb \ni t \mapsto \hscalar{\psi}{ P_M A(te)^- \Omega } = \hscalar{P_M \psi}{ U(te) A^-  \Omega }, 
 \end{equation}
 which is well-defined and continuous due to (\ref{it:vacdom}). It vanishes for $|t|<\epsilon$ due to (\ref{it:timetrans}). On the other hand, due to the spectrum condition for $U$, it is the boundary value of a function analytic in the upper half-plane, which must therefore vanish identically. Computing its Fourier transform in the sense of distributions, we see that
 \begin{equation}\label{eq:psisum}
    0 = \sum_{m \in M} \int \frac{d^m \thetav}{m!} \, \overline{ \psi_m(\thetav)} \, h(E(\thetav) )  \, \cme{m,0}{A}(\thetav)
    \quad \text{for all } A\in \qf_{\mathrm{d}}, \, h \in \scal(\rbb).
 \end{equation}
 In particular, for given $q>0$, we can choose $h$ to equal 1 on $[-q,q]$ and 0 outside $[-2q,2q]$. Since $E(\thetav) \geq m$, the sum \eqref{eq:psisum} is then finite ($m \leq 2q$), and the integration can be restricted to the compact region $E(\thetav) \leq 2q$. Due to (\ref{it:protodense}) with suitably chosen $\rho$, we then conclude that $\psi_m(\thetav)$ vanishes when $m\in M$, $m \leq 2q$, for (almost every) $\thetav$ in the support of $h(E(\cdotarg))$ -- that is, at least where $E(\thetav) \leq q$. Now letting $q \to \infty$, we see that $\psi_m(\thetav)=0$ for all $m\in M$ and almost all $\thetav$, i.e., $P_M\psi = 0$.
\cmpqed\end{proof}

From here, if the $A \in \qf_{\mathrm{d}}$ are affiliated with some algebra $\A(\ocal)$, we can deduce that the local algebra has the Reeh-Schlieder property. But more is true. To that end, consider the ``locally generated'' net of algebras,
\begin{equation}
 \rcal \mapsto \A_\mathrm{loc}(\rcal) := \bigvee_{\ocal \subset \rcal}\A(\ocal),
\end{equation}
where $\ocal$ runs over all double cones; $\A_\mathrm{loc}(\rcal)$ is defined for all open (bounded or unbounded) regions $\rcal$ in Minkowski space. Then $\A_\mathrm{loc}(\rcal) \subset \A(\rcal)$ holds for all regions $\rcal$, and $\A_\mathrm{loc}(\ocal) = \A(\ocal)$ for all double cones $\ocal$; but for wedges, one may question whether equality holds. We show this, as well as Haag duality for double cone regions and a version of weak additivity, based on a space of quadratic forms associated with one fixed double cone. (Similar results were obtained on the basis of the split property in \cite[Sec.~2]{Lechner:2008}; we derive them here from a sufficiently large set of local quadratic forms, mostly with standard techniques.)

\begin{theorem}\label{theorem:localgen}
 Let $\hat \ocal$ be a double cone. Suppose there exists a subspace $\hat\qcal \subset \qf^\omega$ such that:
 \begin{enumerate}[(i)]
  \item \label{it:closable} 
  Each $A \in \hat\qf$ is $\omega$-closable, and $A^-$ is affiliated with $\A(\hat\ocal)$.
  \item \label{it:shift} 
  For each $A \in \hat\qf$ there exists $\epsilon>0$ such that $U(x) A U(x)^\ast \in \hat\qf$ whenever $|x|<\epsilon$.
  \item  \label{it:findense}
  For each finite set $N \subset \nbb_0$, and each $\rho>0$, the inclusion $\{  P_{N, \rho} A^- \Omega : A \in \hat\qf \} \subset  P_{N, \rho}  \hcal $ is dense.
 \end{enumerate}
 Then we have:
 \begin{enumerate}[(a)]
  \item \label{it:reehschlieder} \emph{Reeh-Schlieder property:} $\Omega$ is cyclic and separating for $\A(\hat\ocal)$ and for $\A(\hat\ocal)'$. 
  \item \label{it:genwedge} \emph{Locally generated wedge algebras:} $\A_\mathrm{loc}(\wcal_x) = \A(\wcal_x)$, $\A_\mathrm{loc}(\wcal_y') = \A(\wcal_y')$ for all $x,y$. 
  \item \label{it:haagdual} \emph{Haag duality:} $\A(\ocal)' = \A_\mathrm{loc}(\ocal')$ for all double cones $\ocal$.
  \item \label{it:weakadd}  \emph{Weak additivity:} If $e\in\rbb^2$ is timelike or lightlike, then $\bigvee_{t\in\rbb} \A(\hat\ocal + te) = \boundedops$.
 \end{enumerate}

\end{theorem}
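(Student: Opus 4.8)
The plan is to establish the four assertions in the order stated, using (a) as the engine for the rest. For (a), I would invoke Lemma~\ref{lemma:densitycrit} with $M=\nbb_0$ and $\qf_{\mathrm d}=\hat\qf$: the hypotheses (i)--(iii) of the theorem are verbatim the hypotheses (i)--(iii) of that lemma, so it yields that $\{A^-\Omega : A\in\hat\qf\}$ is dense in $\hcal$. The link to the algebra is the elementary fact that if $A^-\affiliated\A(\hat\ocal)$ and $\Omega\in\dom A^-$, then $A^-\Omega\in\overline{\A(\hat\ocal)\Omega}$: the projection $\Pi$ onto $\overline{\A(\hat\ocal)\Omega}$ lies in $\A(\hat\ocal)'$, so the unitary $2\Pi-\idop\in\A(\hat\ocal)'$ commutes with $A^-$ and fixes $\Omega$, forcing $\Pi A^-\Omega=A^-\Omega$. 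Combining the two facts gives $\overline{\A(\hat\ocal)\Omega}=\hcal$, i.e.\ $\Omega$ is cyclic for $\A(\hat\ocal)$. Since $\hat\ocal$ sits inside a translated wedge and $\Omega$ is separating for every wedge algebra, $\Omega$ is separating for $\A(\hat\ocal)$ as well; the corresponding statements for $\A(\hat\ocal)'$ then follow from the cyclic/separating duality.

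For (b) it suffices, by Poincar\'e covariance, to prove $\A_{\mathrm{loc}}(\wcal)=\A(\wcal)$ for the standard right wedge. The inclusion $\subseteq$ is isotony, and for the reverse I would use Takesaki's theorem on modular-invariant subalgebras. The modular group of $(\A(\wcal),\Omega)$ is the boost $U(0,\Lambda_t)$, which maps every double cone $\ocal\subseteq\wcal$ to a double cone $\Lambda_t\ocal\subseteq\wcal$; by covariance of the net, $\A_{\mathrm{loc}}(\wcal)$ is therefore globally invariant under this modular group. Moreover, choosing a translate with $\hat\ocal+x\subseteq\wcal$ and using part (a) together with translation covariance, $\Omega$ is cyclic for $\A(\hat\ocal+x)\subseteq\A_{\mathrm{loc}}(\wcal)$, hence for $\A_{\mathrm{loc}}(\wcal)$. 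Takesaki's theorem then furnishes a vacuum-preserving normal conditional expectation $E\colon\A(\wcal)\to\A_{\mathrm{loc}}(\wcal)$ satisfying $E(B)\Omega=\Pi_0 B\Omega=B\Omega$, where $\Pi_0$ is the (now trivial) projection onto $\overline{\A_{\mathrm{loc}}(\wcal)\Omega}=\hcal$; since $\Omega$ is separating for $\A(\wcal)$, this gives $E=\id$ and hence $\A_{\mathrm{loc}}(\wcal)=\A(\wcal)$.

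Part (c) is then pure algebra. Writing $\ocal=\wcal_a\cap\wcal_b'$, the identity $(\mcal\cap\ncal)'=\mcal'\vee\ncal'$ and wedge Haag duality $\A(\wcal)'=\A(\wcal')$ give $\A(\ocal)'=\A(\wcal_a')\vee\A(\wcal_b)$, which by (b) equals $\A_{\mathrm{loc}}(\wcal_a')\vee\A_{\mathrm{loc}}(\wcal_b)$. Since $\wcal_a',\wcal_b\subseteq\ocal'$, this is contained in $\A_{\mathrm{loc}}(\ocal')$; conversely, every double cone $\tilde\ocal\subseteq\ocal'$ is spacelike to $\ocal$, so locality gives $\A(\tilde\ocal)\subseteq\A(\ocal)'$ and hence $\A_{\mathrm{loc}}(\ocal')\subseteq\A(\ocal)'$. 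The two inclusions yield $\A(\ocal)'=\A_{\mathrm{loc}}(\ocal')$.

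For (d), set $\M=\bigvee_{t}\A(\hat\ocal+te)$. As $\A(\hat\ocal)\subseteq\M$ and $\Omega$ is cyclic for $\A(\hat\ocal)$ by (a), $\Omega$ is cyclic for $\M$ and hence separating for $\M'$; it therefore suffices to show that every $X\in\M'$ with $\hscalar{\Omega}{X\Omega}=0$ satisfies $X\Omega=0$, whence $X=0$ and $\M=\boundedops$. This is the standard Reeh-Schlieder/Borchers analyticity argument: because $e$ is timelike or lightlike, $U(te)=e^{itP_e}$ with $P_e\geq0$ and $\ker P_e=\cbb\Omega$, while $X$ commutes with every $A(te)$; continuing $(t_1,\dots,t_n)\mapsto\hscalar{X^\ast\Omega}{A_1(t_1)\cdots A_n(t_n)\Omega}$ analytically into the tube dictated by the spectrum condition, and using cyclicity of $\Omega$ for $\A(\hat\ocal)$, forces $X\Omega\in\cbb\Omega$ and thus $X\Omega=0$. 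I expect the main obstacle to be part (b): one must carefully verify both the geometric modular invariance of $\A_{\mathrm{loc}}(\wcal)$ and the cyclicity of $\Omega$ for it, which are exactly the hypotheses that unlock Takesaki's theorem; the abstract step in (a) from affiliation to $A^-\Omega\in\overline{\A(\hat\ocal)\Omega}$ is the other place where care is needed.
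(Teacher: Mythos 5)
Your proposal is correct, and in parts (a), (b) and (d) it takes genuinely different routes from the paper's proof, with the same inputs. In (a), the paper obtains $A^-\Omega\in\overline{\A(\hat\ocal)\Omega}$ by truncating the polar decomposition, $A_L=V\int_0^L\lambda\,dP(\lambda)\in\A(\hat\ocal)$ with $A_L\Omega\to A^-\Omega$; your reflection argument with the unitary $2\Pi-\idop\in\A(\hat\ocal)'$ reaches the same conclusion more directly (and where you get the separating property from the wedge inclusion $\A(\hat\ocal)\subset\A(\wcal_x)$, the paper instead uses cyclicity of a spacelike-translated copy sitting inside $\A(\hat\ocal)'$ -- both are immediate). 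In (b), the paper feeds exactly the two inputs you isolate -- density of $\bigcup_{\ocal\subset\wcal}\A(\ocal)\Omega$ and its invariance under the boosts, i.e.\ the modular group -- into a different key lemma: the subspace is then a core for $\Delta^{1/2}$, and a theorem of Kadison--Ringrose (Thm.~9.2.36) yields strong-operator density of $\bigcup_{\ocal\subset\wcal}\A(\ocal)$ in $\A(\wcal)$; your route through Takesaki's theorem and the vacuum-preserving conditional expectation $E$ (trivialized by $\Pi_0=\idop$ together with the separating property of $\Omega$ for $\A(\wcal)$) is an equally standard alternative delivering the identical conclusion. Part (c) matches the paper's computation, except that you replace the geometric identity $\A_\mathrm{loc}(\wcal_x')\vee\A_\mathrm{loc}(\wcal_y)=\A_\mathrm{loc}(\ocal')$ by a two-sided sandwich using locality, which avoids having to argue that every double cone in $\ocal'$ lies in one of the two wedge components. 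In (d), the paper runs the same two-sided Liouville argument but closes it with clustering, i.e.\ weak convergence of $U(te)$ to the projector onto $\cbb\Omega$, whereas you close it with the spectral fact $\ker P_e=\cbb\Omega$: constancy of $t\mapsto\hscalar{X^\ast\Omega}{U(te)A\Omega}$ forces the relevant spectral measure to be a point mass at $0$, whose spectral projection is the projector onto $\cbb\Omega$. Your variant is, if anything, more robust for lightlike $e$, where clustering requires an absolute-continuity argument while $\ker P_e=\cbb\Omega$ is immediate on the massive Fock space. Two small simplifications: the $n$-variable tube continuation you invoke is unnecessary, since the one-variable case plus cyclicity of $\Omega$ for $\A(\hat\ocal)$ already gives $X^\ast\Omega\in\cbb\Omega$; and since the pairing naturally produces a statement about $X^\ast\Omega$, you should apply the argument to $X^\ast\in\M'$ (which also has vanishing vacuum expectation) to conclude $X\Omega=0$.
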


\begin{proof}
For (\ref{it:reehschlieder}): By Lemma~\ref{lemma:densitycrit} with $M = \nbb_0$,  $\hat\qf\Omega$ is dense in $\hcal$. Now write the closure of a fixed $A \in \hat\qf$ in polar decomposition, $A^- = V \int_0^\infty \lambda dP(\lambda)$; then for each $L>0$, we have $A_L := V \int_0^L  \lambda dP(\lambda) \in \A(\hat\ocal)$, and $A_L \Omega \to A \Omega$ as $L \to \infty$. Hence $\A(\hat\ocal)\Omega$ is dense in $\hcal$ as well. By covariance, $\Omega$ is therefore cyclic for each $\A(\hat\ocal+x)$, hence in particular for $\A(\hat\ocal)'\supset \A(\hat\ocal+x)$ (with suitable spacelike $x$). It follows immediately that $\Omega$ is separating. ---For (\ref{it:genwedge}), consider the subspace $\hat \hcal := \cup_{\ocal\subset\wcal} \A(\ocal)\Omega \subset \operatorname{dom} \Delta^{1/2}$, where $\Delta$ is the modular operator of the wedge algebra $\A(\wcal)$, and $\ocal$ runs over all double cones. Since $\hat\ocal + x \subset \wcal$ for suitable $x$, we know from  (\ref{it:reehschlieder}) that $\hat\hcal$ is dense in $\hcal$. Further, $\hat \hcal$ is  invariant under the modular group $\Delta^{it}$, as these coincide with the Lorentz boosts. Therefore $\hat\hcal$ is a core for $\Delta^{1/2}$, cf.~\cite[Ch.~II Prop.~1.7]{EngelNagel:semigroups}. 
This however implies that $\cup_{\ocal\subset\wcal} \A(\ocal) \subset \A(\wcal)$ is strong-operator dense \cite[Theorem~9.2.36]{KadRin:algebras2}, which proves the claim for the standard wedge $\wcal$. For other wedges it follows by covariance.---%
For (\ref{it:haagdual}), let $\ocal=\ocal_{x,y}$ and observe that 
\begin{equation}
\begin{split}
 \A(\ocal_{x,y})' = \A(\wcal_x') \vee \A(\wcal_y) = \A_\mathrm{loc}(\wcal_x') \vee \A_\mathrm{loc}(\wcal_y) 
 \ahponly{\\}= \A_\mathrm{loc}(\wcal_x'  \cup \wcal_y) = \A_\mathrm{loc}(\ocal_{x,y}'),
\end{split}
\end{equation}
where the first step uses Haag duality for wedges, the second step employs (\ref{it:genwedge}), and the third follows from the definition of $\A_\mathrm{loc}$.%
---
For (\ref{it:weakadd}), cf.~\cite[Appendix]{Kuckert:regions}: Let $A \in \A(\hat\ocal)$ and $B \in (\bigcup_{t\in\rbb} \A(\hat\ocal + te))'$. Then the function $f(t):=\hscalar{A\Omega}{U(-te) B \Omega} = \hscalar{B^\ast\Omega}{U(te) A^\ast \Omega}$ has a bounded analytic continuation to both the upper and the lower halfplanes, since $U$ satisfies the spectrum condition. Hence $f$ must be constant. Since $U(te)$ weakly converges to the projector onto $\Omega$ as $t \to \infty$, we find $\hscalar{A\Omega}{B \Omega} = \hscalar{A \Omega}{\Omega} \hscalar{\Omega}{B\Omega}$. Then (\ref{it:reehschlieder}) implies $B = \hscalar{\Omega}{B\Omega} \idop$, showing the statement.
\cmpqed\end{proof}

There remains the question whether the space $\hat\qcal$ generates the algebra $\A(\hat\ocal)$ in some sense. This cannot follow directly from the above: Namely, if we consider $\hat \ocal:=\ocal_r$ for some fixed $r>0$, and
\begin{equation}
  \hat\qcal := \bigcup_{0 < c < 1/2} \A(\ocal_{cr}) \subset \A(\ocal_{r/2}),
\end{equation}
then this $\hat\qcal$ fulfills all conditions of Theorem~\ref{theorem:localgen} in typical models, but it is certainly not dense in $\A(\ocal_r)$. 

We can however deduce a result on the level of \emph{nets} of algebras. Suppose that for \emph{every} double cone $\ocal$ we are given a space $\qcal(\ocal)$ of quadratic forms affiliated with $\A(\ocal)$, subject to certain consistency conditions (see below). We then define the algebras
\begin{equation}\label{eq:qnet}
\begin{aligned}
   \A_\qcal(\ocal) := \big\{ V,V^\ast,P(\lambda) :  V \textstyle\int \lambda \,dP(\lambda) \; \ahponly{&} \text{is the polar decomposition }
   \ahponly{ \\ &}
    \text{of $A^-$ for some } A \in \qcal(\ocal)  \big\}''
 \end{aligned}
\end{equation}
for double cones $\ocal$, and $\A_\qcal(\rcal):=\bigvee_{\ocal\subset\rcal} \A_\qcal(\ocal)$ for other regions $\rcal$. Then clearly $\A_\qcal(\ocal) \subset \A_\mathrm{loc}(\ocal) \subset \A(\ocal)$. We show that $\A$ can be obtained from the subnet $\A_\qcal$ by dualization:

\begin{theorem}\label{theorem:dualnet}
  Let $\ocal \mapsto \qcal(\ocal)$ be a map from double cones to subspaces of $\qcal^\omega$ such that conditions (i)--(iii) of Theorem~\ref{theorem:localgen} are fulfilled for every $\hat\qcal=\qcal(\hat\ocal)$, and in addition:
  \begin{enumerate}[(i)]
   \setcounter{enumi}{3}
   \item \label{it:iso} For any two double cones $\ocal_1 \subset \ocal_2$, it holds that $\qcal(\ocal_1)\subset\qcal(\ocal_2)$.
   \item \label{it:cov} \sloppy For any double cone $\ocal$ and Poincar\'e transformation $(x,\Lambda)$, it holds that $U(x,\Lambda) \qcal(\ocal) U(x,\Lambda)^\ast=\qcal(\Lambda\ocal+x)$.
  \end{enumerate}
  Then, $\A_\qcal$ as defined in \eqref{eq:qnet} is a local, isotonous, covariant net of von Neumann algebras; $\Omega$ is cyclic for $\A_\qcal(\rcal)$ if $\rcal$ is nonempty, and separating if $\rcal'$ is nonempty; and $\A$ is the dual net of $\A_\qcal$, in the sense that for every double cone $\ocal$,
  \begin{equation}
     \A_\qcal(\ocal')' = \A(\ocal).
  \end{equation}
\end{theorem}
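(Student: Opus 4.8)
The plan is to verify the net axioms, then cyclicity and the separating property of $\Omega$, and finally to reduce the duality statement to the single wedge-duality identity $\A_\qcal(\wcal)=\A(\wcal)$, which is the crux and is established by a modular core argument. Throughout I use the containment $\A_\qcal(\ocal)\subset\A(\ocal)$: since each $A\in\qcal(\ocal)$ is $\omega$-closable with $A^-\affiliated\A(\ocal)$, the partial isometry $V$ and the spectral projections $P(\lambda)$ of the polar decomposition of $A^-$ lie in $\A(\ocal)$, so the von Neumann algebra they generate does as well. Isotony of $\A_\qcal$ on double cones is then immediate from hypothesis (\ref{it:iso}), and for general regions it follows from the definition via joins. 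Covariance follows from (\ref{it:cov}): conjugation by $U(x,\Lambda)$ commutes with taking the $\omega$-closure and with polar decomposition (the spectral projections and partial isometry transform by conjugation), so $U(x,\Lambda)\A_\qcal(\ocal)U(x,\Lambda)^\ast=\A_\qcal(\Lambda\ocal+x)$. Locality is inherited from $\A$: for spacelike separated double cones $\ocal_1,\ocal_2$ one has $\A_\qcal(\ocal_1)\subset\A(\ocal_1)\subset\A(\ocal_2)'\subset\A_\qcal(\ocal_2)'$, and the general case follows by forming joins.

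For cyclicity, fix a double cone $\ocal$. Hypotheses (i)--(iii) are exactly those of Lemma~\ref{lemma:densitycrit} (with $M=\nbb_0$) applied to $\qcal(\ocal)$, so $\qcal(\ocal)\Omega$ is dense in $\hcal$. For $A\in\qcal(\ocal)$ with polar decomposition $A^-=V\int_0^\infty\lambda\,dP(\lambda)$, the truncations $A_L:=V\int_0^L\lambda\,dP(\lambda)$ lie in $\A_\qcal(\ocal)$ by construction and satisfy $A_L\Omega\to A^-\Omega$; hence $\A_\qcal(\ocal)\Omega$ is dense, exactly as in the proof of Theorem~\ref{theorem:localgen}(\ref{it:reehschlieder}). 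Consequently $\Omega$ is cyclic for $\A_\qcal(\rcal)$ whenever $\rcal$ contains a double cone, i.e.\ whenever $\rcal$ is nonempty. If $\rcal'$ is nonempty it contains a double cone $\ocal_0$, and $\A_\qcal(\ocal_0)\subset\A_\qcal(\rcal)'$ by locality; since $\Omega$ is cyclic for $\A_\qcal(\ocal_0)$ it is cyclic for $\A_\qcal(\rcal)'$, hence separating for $\A_\qcal(\rcal)$.

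The duality statement is reduced to showing $\A_\qcal(\wcal)=\A(\wcal)$ for the standard wedge (hence, by covariance, for every wedge). The inclusion $\subset$ is clear. For the reverse I run the modular argument of Theorem~\ref{theorem:localgen}(\ref{it:genwedge}) verbatim with $\A_\qcal$ in place of $\A$: the subspace $\hat\hcal:=\bigcup_{\ocal\subset\wcal}\A_\qcal(\ocal)\Omega$ is dense in $\hcal$ (by the cyclicity just proved, applied to any double cone $\ocal\subset\wcal$), is contained in $\A(\wcal)\Omega\subset\dom\Delta^{1/2}$ (since $\A_\qcal(\ocal)\subset\A(\ocal)\subset\A(\wcal)$), and is invariant under the modular group $\Delta^{it}=U(0,\Lambda_t)$, because the boosts preserve $\wcal$ and carry double cones inside $\wcal$ to double cones inside $\wcal$, so that $\Delta^{it}\A_\qcal(\ocal)\Omega=\A_\qcal(\Lambda_t\ocal)\Omega\subset\hat\hcal$. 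Thus $\hat\hcal$ is a core for $\Delta^{1/2}$, which forces $\bigcup_{\ocal\subset\wcal}\A_\qcal(\ocal)$ to be strongly dense in $\A(\wcal)$, i.e.\ $\A_\qcal(\wcal)=\A(\wcal)$. To conclude, write $\ocal=\ocal_{x,y}$, so that $\ocal'=\wcal_x'\cup\wcal_y$ with $\wcal_x'$ and $\wcal_y$ disjoint; since any double cone in $\ocal'$ lies in one of the two wedges, $\A_\qcal(\ocal')=\A_\qcal(\wcal_x')\vee\A_\qcal(\wcal_y)=\A(\wcal_x')\vee\A(\wcal_y)$. By Haag duality for wedges and the commutation theorem $(\M\cap\ncal)'=\M'\vee\ncal'$, the right-hand side equals $\A(\wcal_x)'\vee\A(\wcal_y')'=(\A(\wcal_x)\cap\A(\wcal_y'))'=\A(\ocal)'$. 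Taking commutants gives $\A_\qcal(\ocal')'=\A(\ocal)$.

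I expect the wedge-duality identity $\A_\qcal(\wcal)=\A(\wcal)$ to be the only genuinely nontrivial point. Note that $\A_\qcal(\ocal)$ need not fill $\A(\ocal)$ for an individual double cone (as the counterexample preceding the theorem shows), yet the net $\A_\qcal$ must still exhaust the wedge algebra. Making this precise rests on the modular core criterion, for which the essential new input is that the truncated polar-decomposition elements $A_L$ belong to $\A_\qcal(\ocal)$ and approximate $A^-\Omega$, supplying the density required to feed into the Tomita--Takesaki argument.
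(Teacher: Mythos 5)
Your proof is correct and follows essentially the same route as the paper: the containment $\A_\qcal(\rcal)\subset\A(\rcal)$ for locality, cyclicity via Lemma~\ref{lemma:densitycrit} plus the polar-decomposition truncations $A_L\in\A_\qcal(\ocal)$, the modular core argument of Theorem~\ref{theorem:localgen}(\ref{it:genwedge}) to get $\A_\qcal(\wcal)=\A(\wcal)$, and then wedge duality to compute $\A_\qcal(\ocal')'=\A(\ocal)$. You in fact spell out several steps the paper leaves implicit (e.g.\ that any double cone in $\ocal'$ lies in one of the two disjoint wedges, and the separating argument), but the substance is identical.
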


\begin{proof}
As $\A_\qcal(\rcal)\subset\A(\rcal)$ for every $\rcal$, locality is automatic; isotony and covariance follow from (\ref{it:iso}) and (\ref{it:cov}), respectively. $\Omega$ is cyclic and separating by Theorem~\ref{theorem:localgen}(\ref{it:reehschlieder}). Also, thanks to covariance of $\A_\qcal$, one obtains with methods as in Theorem~\ref{theorem:localgen}(\ref{it:genwedge}) that $\A_\qcal(\wcal_x) = \A(\wcal_x)$, $\A_\qcal(\wcal_y') = \A(\wcal_y')$ for any $x,y$. Hence we have
\begin{multline}
  \A_\qcal(\ocal_{x,y}')' =  \A_\qcal(\wcal_x' \cup \wcal_{y} )' = (\A_\qcal(\wcal_x') \vee \A_\qcal (\wcal_{y}) )'
 \\
 = (\A(\wcal_x') \vee \A (\wcal_{y}) )'  
  = \A(\wcal_x) \cap \A(\wcal_y') = \A(\ocal_{x,y}) 
\end{multline}
for any double cone $\ocal_{x,y}$.
\cmpqed\end{proof}


\section{Examples of local operators: even case}\label{sec:evenexamples}

We now illustrate the above methods for constructing local observables in examples; specifically, we will in a moment specialize to the massive Ising model, defined by the scattering function $S=-1$.

In view of the results in Sec.~\ref{sec:criterion}, our strategy will be as follows: We define meromorphic functions $F_k$ that satisfy the conditions (FD1)--(FD6) for some $r>0$, guided by experience from the form factor program. By Theorem~\ref{theo:conditionFD}, the associated quadratic form
\begin{equation}\label{eq:Fseries}
  A = \sum_{m,n} \int \frac{d^m\theta d^n \eta}{m!n!} F_{m+n}(\thetav+i\zerov,\etav-i\zerov) z^{\dagger m}(\thetav) z^n(\etav)
\end{equation}
is then $\omega$-local in the double cone $\ocal_r$. Separately, we show using summability criteria (Proposition~\ref{proposition:summable1} or \ref{proposition:summable2}) that $A$ is also $\omega$-closable. Then $A^-$ is affiliated with $\A(\ocal_r)$ by Proposition~\ref{proposition:locality}. If we construct sufficiently large sets of such $A$, fulfilling additional constraints such as isotony, covariance, and a density condition in compact regions of rapidity space, then the results in Sec.~\ref{sec:reehschlieder} imply the Reeh-Schlieder property for the local algebras, Haag duality, and that the $\A(\ocal)$ are generated by our quadratic forms via duality.

For all scattering functions $S$ in our class, the overall theory is invariant under the $\mathbb{Z}_2$-symmetry that replaces the wedge-local field $\phi$ with $-\phi$. As a consequence, all local observables can be split into an even and an odd part, in which only the even- and odd-numbered $F_j$ contribute, respectively. Hence it suffices to consider even and odd observables separately. 

Specific to the Ising model is the fact that for \emph{even} observables, the recursion relations \ref{it:fdrecursion} simplify considerably, since the factor on the right-hand side vanishes, hence the relation does not link $F_k$ and $F_{k+2}$. We will consider this simpler case in the present section, and the case of odd observables in Sec.~\ref{sec:oddexamples}.

In the even case of the Ising model, we can hence choose only \emph{one} of the functions $F_{2k}$ to be analytic and nonzero. This may seems an uninteresting special case at first glance; yet it comprises physically important observables, such as the averaged energy density $T^{00}(g)$ (see, e.g., \cite{BostelmannCadamuroFewster:2013}) whose only nonvanishing coefficient function is 
\begin{equation}\label{eq:edensity}
F_2(\zetav) =-i \mu^2 \sinh \frac{\zeta_1 - \zeta_2}{2} \sinh^2 \frac{\zeta_1 + \zeta_2}{2} \tilde g(p(\zetav)),
\end{equation}
where $g \in \scal(\rbb^2)$ is nonnegative.\footnote{We use the Fourier transform with the convention $\tilde{g}(p) := \frac{1}{2\pi}\int d^2 x\; e^{i p \cdot x}g(x)$.} 
Other examples of this type haven been given by Buchholz and Summers \cite{BuchholzSummers:2007} by considering even polynomials of the field $\phi(f)$.

We aim at constructing a large enough set of observables so that the Reeh-Schlieder property is fulfilled for these. To that end, let $k\in\nbb_0$, let $g \in \dcal(\rbb^2)$, and let $P\in\Lambda_{2k}^\pm$ be a symmetric Laurent polynomial in $2k$ variables (see Appendix~\ref{app:poly} for notational conventions). We define a sequence of analytic functions
\begin{equation}\label{eq:f2k}
F_{j}^{[2k,P,g]}(\zetav) := \begin{cases} \tilde{g}(p(\zetav))  P(e^{\zetav}) M\even_{2k}(\zetav) \quad & \text{for $j=2k$},\\ 0 & \text{otherwise}, \end{cases}
\end{equation}
where
\begin{equation}
M\even_{2k}(\zetav) := \sum_{\sigma \in \mathfrak{S}_{2k}} \sign \sigma \prod_{j=1}^k \sinh  \frac{\zeta_{\sigma(2j -1)} - \zeta_{\sigma(2j)}}{2} ,
\end{equation}
and by convention, $M\even_0 := 1$.
We claim that these functions fulfill our locality conditions.

\begin{proposition}
   Let $k\in\nbb_0$, $P\in\Lambda^\pm_{2k}$, and $g \in \dcal(\ocal_r)$ be fixed, with some $r>0$. Then $F_j^{[2k,P,g]}$ enjoy the properties \ref{it:fdmero}--\ref{it:fdboundsimag} with respect to this $r$ and both
   \begin{enumerate}[(a)]
    \item $\omega(p)= \beta \log (1+p)$ with sufficiently large $\beta>0$ for given $P$ and $k$, and
    \item $\omega(p)=p^\alpha$ with any fixed $\alpha\in(0,1)$, independent of $P$ and $k$.
   \end{enumerate}
\end{proposition}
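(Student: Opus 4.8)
The plan is to verify conditions \ref{it:fdmero}--\ref{it:fdboundsimag} in turn for the functions $F_j^{[2k,P,g]}$ defined in \eqref{eq:f2k}. Since only the single index $j=2k$ carries a nonzero function, most conditions that link different $F_j$ become trivial. I would first dispatch the structural conditions \ref{it:fdmero}--\ref{it:fdrecursion}, which depend only on the algebraic and symmetry properties of the factors $\tilde g(p(\zetav))$, $P(e^{\zetav})$, and $M_{2k}\even(\zetav)$; then I would concentrate the real work on the two bounds \ref{it:fdboundsreal} and \ref{it:fdboundsimag}, which is where the distinction between the two choices of indicatrix $\omega$ in (a) and (b) enters.

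For the structural conditions: \ref{it:fdmero} (analyticity) is immediate, since $\tilde g \circ p$ is entire (as $g$ has compact support, its Fourier transform is entire of suitable exponential type), $P(e^{\zetav})$ is entire, and $M_{2k}\even$ is an entire function built from $\sinh$'s; there are no poles at all, so meromorphy holds and analyticity on the relevant strip is automatic. For \ref{it:fdsymm} ($S$-symmetry with $S=-1$), I would check that each of the three factors transforms correctly under transposition $\zeta_j\leftrightarrow\zeta_{j+1}$: the function $\tilde g(p(\zetav))$ is symmetric because $p(\zetav)$ is a symmetric sum, so it is invariant; the Laurent polynomial $P\in\Lambda_{2k}^\pm$ is symmetric by assumption (see Appendix~\ref{app:poly}); and $M_{2k}\even$ picks up a sign $-1=S$ under transposition because it is an antisymmetrized sum over $\mathfrak{S}_{2k}$ weighted by $\sign\sigma$. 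The product then has the right factor $S(\zeta_{j+1}-\zeta_j)=-1$. For \ref{it:fdperiod} ($S$-periodicity), $\tilde g\circ p$ is invariant under $\zeta_k\mapsto\zeta_k+2\pi i$ since $p(\theta)$ is $2\pi i$-periodic, and one checks the cyclic shift property for $P$ (using that $P\in\Lambda_{2k}^\pm$ is built to be compatible) and for $M_{2k}\even$ (using periodicity of $\sinh\frac{\cdot}{2}$ up to signs and the antisymmetrization). For \ref{it:fdrecursion}, the key simplification specific to Ising-even observables is that with $S=-1$, the factor $1-\prod_j S(\zeta_1-\zeta_j)$ vanishes when $k$ is even (the product is $(-1)^k=1$), so the residue must vanish; since our $F_{2k}$ is analytic and has no poles at all, the left side is also zero, and the relation holds trivially — moreover it does not link $F_{2k}$ to $F_{2k-2}=0$, which is exactly why choosing a single nonzero level is consistent.

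The substantive part is the bounds \ref{it:fdboundsreal} and \ref{it:fdboundsimag}. For the pointwise bound \ref{it:fdboundsimag}, I would estimate $|F_{2k}(\zetav)|$ for $\zetav\in\rbb^k+i\ical_\pm^k$ by separately bounding the three factors: $M_{2k}\even$ and $P(e^{\zetav})$ grow like $\prod_j \exp(C|\re\zeta_j|)$ with a constant depending on $k$ and on the degree of $P$; and $\tilde g(p(\zetav))$, being the Fourier transform of a compactly supported function with $\supp g\subset\ocal_r$, satisfies a Paley--Wiener bound of the form $|\tilde g(p(\zetav))|\leq c\exp(\mu r\sum_j|\im\sinh\zeta_j|)$ together with rapid (Schwartz, or Jaffe-class) decay in $\re\zeta_j$. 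The crucial point is that this decay of $\tilde g$ in the real directions must dominate the polynomial/exponential growth of $P$ and $M_{2k}\even$ so as to produce the allowed factor $\exp(c'\omega(\cosh\re\zeta_j))$. I expect this domination step to be the main obstacle, and it is exactly where the two cases split: in case (b), where $\omega(p)=p^\alpha$ grows faster than any logarithm, the Schwartz decay of $\tilde g$ suffices to absorb the growth of $P$ and $M_{2k}\even$ for \emph{any} $\alpha\in(0,1)$ uniformly in $P,k$; whereas in case (a), where $\omega(p)=\beta\log(1+p)$ only gives polynomial room $\exp(c'\omega(\cosh\re\zeta))\sim(\cosh\re\zeta)^{c'\beta}$, one must take $\beta$ large enough (depending on $k$ and the degree of $P$) to beat the polynomial factor from $P\cdot M_{2k}\even$, so the threshold $\beta$ genuinely depends on $P$ and $k$. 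I would then handle the distance factor $\operatorname{dist}(\im\zetav,\partial\ical_\pm^k)^{-k/2}$ by noting that our $F_{2k}$ is bounded up to the boundary (no poles), so this negative power can simply be absorbed into the constant $c$. Finally, for the norm bound \ref{it:fdboundsreal} at the nodes, I would evaluate the boundary distributions $F_{2k}(\,\cdot\,+(i\zerov,i\piv-i\zerov))$ and check finiteness of the mixed norm $\onorm{\cdot}{j\times(k-j)}$; this reduces, via the definitions \eqref{eq:crossnorm2}--\eqref{eq:crossnorm1}, to square-integrability and boundedness after multiplying by $e^{-\omega(E(\cdot))}$, which again follows from the same Paley--Wiener decay of $\tilde g$ dominating the explicit factors, with the same case distinction on $\omega$ governing the required strength.
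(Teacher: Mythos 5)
Your treatment of the structural conditions \ref{it:fdmero}--\ref{it:fdrecursion} matches the paper's proof and is correct, as is your observation that the distance factor in \ref{it:fdboundsimag} can simply be absorbed into the constant since $F_{2k}$ has no poles. The genuine gap is in your handling of the two bounds, and it stems from one incorrect assumption: that $\tilde g(p(\zetav))$ has rapid decay ``in $\re\zeta_j$'' separately for each variable. It does not. The Fourier transform $\tilde g$ is evaluated at the \emph{total} momentum $p(\zetav)$, and at the nodes relevant for \ref{it:fdboundsreal} one has $p(\thetav,\etav+i\piv)=p(\thetav)-p(\etav)$: this stays bounded along directions where $\thetav$ and $\etav$ both run to infinity together, so $\tilde g$ provides decay only in quantities like $\lvert E(\thetav)-E(\etav)\rvert$, never in $E(\thetav)$ and $E(\etav)$ individually. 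Since the norm \eqref{eq:crossnorm2} attaches the damping factor $e^{-\omega(E(\cdot))}$ to only \emph{one} group of variables in each term, your claim that the Paley--Wiener decay of $\tilde g$ dominates the explicit factors cannot close the argument: the polynomial growth of $P\cdot M\even_{2k}$ in the \emph{unweighted} variables is controlled by nothing in your scheme. The paper's proof fixes exactly this point: it combines the difference decay of $\tilde g$ with the one-sided weight via the elementary fact that $(x,y)\mapsto xy^{-1}/(1+\lvert x-y\rvert)$ is bounded for $x,y\geq 1$, which lets one trade powers of $E(\thetav)$ for powers of $E(\etav)$; then a single factor $E(\etav)^{-\beta}$ with $\beta\geq 2+2k+4kc_3$ kills all remaining growth, and square-integrability plus Cauchy--Schwarz yield finiteness of $\onorm{\cdotarg}{m\times n}$.

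The same false assumption also muddles your argument for \ref{it:fdboundsimag} and for case (b). For \ref{it:fdboundsimag} no decay of $\tilde g$ is needed (nor available): the paper uses only the $L^1$ Paley--Wiener bound \eqref{eq:gtildebound}, with no decay in the real directions at all, and absorbs the per-variable polynomial growth of $P\cdot M\even_{2k}$ into the allowed factors $\exp\big(c'\omega(\cosh\re\zeta_j)\big)$, which for $\omega(p)=\beta\log(1+p)$ works once $\beta\geq\tfrac12+c_3$; your case (a) reasoning is essentially this and is fine. But your case (b) claim --- that Schwartz decay of $\tilde g$ handles $\omega(p)=p^\alpha$ uniformly in $P$ and $k$ --- rests on the nonexistent per-variable decay. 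The correct (and much simpler) route, which the paper takes at the outset, is that case (b) reduces to case (a): since $p^\alpha$ eventually dominates $\beta\log(1+p)$ for every fixed $\beta$, the bounds \ref{it:fdboundsreal} and \ref{it:fdboundsimag} for the logarithmic indicatrix imply those for $\omega(p)=p^\alpha$ with any fixed $\alpha\in(0,1)$, with constants (but not $\alpha$ itself) depending on $P$ and $k$.
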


\begin{proof}
   We drop the superscript $[2k,P,g]$.
   Since $g$ has compact support, $\tilde g$ and hence $F_{2k}$ are entire analytic \ref{it:fdmero}. 
   Also, \ref{it:fdsymm} means antisymmetry in our case, which is fulfilled since $M\even_{2k}$ is antisymmetric by construction.  
   Further, one finds that $M\even_{2k}(\zeta_1,\dotsc,\zeta_{n-1},\zeta_n+2\pi i) = -M\even_{2k}(\zetav) = M\even_{2k}(\zeta_n,\zeta_1,\dotsc,\zeta_{n-1})$, and the other factors are $2\pi i$-periodic in each variable, yielding \ref{it:fdperiod}. 
   As already noted, \ref{it:fdrecursion} is fulfilled since the right-hand side vanishes, and since all factors in $F_{2k}$ are entire analytic.
   For the estimates \ref{it:fdboundsreal} and \ref{it:fdboundsimag}, we need to consider only $\omega(p)=\beta\log(1+p)$ for suitable $\beta$, since $p^\alpha$ for any $\alpha\in(0,1)$ grows faster at large $p$.
   We first note that with suitable constants $c_1,c_2,c_3>0$,
   \begin{equation}\label{eq:mp2kest}
      \big\lvert M\even_{2k}(\zetav) \big\rvert  \leq c_1 \prod_{j=1}^{2k} (\cosh \re \zeta_j)^{\tfrac{1}{2}},
\qquad      \big\lvert P(e^{\zetav}) \big\rvert  \leq c_2 \prod_{j=1}^{2k} (\cosh \re \zeta_j)^{c_3}.      
   \end{equation}
   For \ref{it:fdboundsreal}, it suffices to show that for $\thetav \in \rbb^m$, $\etav\in \rbb^{2k-m}$ with any $m\in\{0,\ldots,2k\}$, and for suitably large $\beta$, the functions 
   \begin{equation}\label{eq:efe}
   \begin{aligned}
(\thetav,\etav) &\mapsto E(\thetav) F_{2k}(\thetav,\etav+i\piv) E(\etav)^{1-\beta} \quad \text{and} 
\\  (\thetav,\etav) &\mapsto E(\thetav)^{1-\beta} F_{2k}(\thetav,\etav+i\piv) E(\etav)     
\end{aligned}   \end{equation}
  are bounded. Namely, given this, we know that $(\thetav,\etav) \mapsto F_{2k}(\thetav,\etav+i\piv) E(\etav)^{-\beta}$ and $(\thetav,\etav) \mapsto E(\thetav)^{-\beta}F_{2k}(\thetav,\etav+i\piv)$ are square integrable, implying by Cauchy-Schwarz that $\|F_{2k}(\cdotarg, \cdotarg + i \piv)\|^{\omega}_{m \times n}$ is finite, and likewise $\|F_{2k}(\cdotarg -i \piv, \cdotarg) \|^{\omega}_{m \times n}$ by a redefinition of $P$ and $g$.
  
  We show boundedness of the first function in \eqref{eq:efe}, the other is similar. In fact, due to \eqref{eq:mp2kest}, and since $g$ is of Schwartz class, we can find $c_4>0$ such that 
  \begin{equation}
   \big\lvert E(\thetav) F_{2k}(\thetav,\etav+i\piv) E(\etav)^{1-\beta} \big\rvert \leq c_4  \frac{ E(\thetav)^{1+k+2kc_3} E(\etav)^{1 + k + 2k c_3 -\beta} }{1 + | E(\thetav) - E(\etav) |^{1+k+2k c_3} } ,
  \end{equation}
  which is bounded if $\beta \geq 2+2k+4kc_3$, since the function $(x,y) \mapsto \frac{xy^{-1}}{1 + \lvert x -y \rvert}$ is bounded for $x,y \geq 1$.
 
  For \ref{it:fdboundsimag}, from the support properties of $g$ we obtain by standard methods that for any $\zetav \in \cbb^{n}$,
  \begin{equation}\label{eq:gtildebound}
      \lvert \tilde g (p(\zetav)) \rvert \leq
       \frac{\lVert g \rVert_1}{2\pi} \prod_{j=1}^{n} e^{\mu r \lvert \im \sinh \zeta_j \rvert}.
  \end{equation}
  Using \eqref{eq:mp2kest}, we thus obtain
  \begin{equation}
     \lvert F_{2k} (\zetav)  \rvert \leq c_5  \prod_{j=1}^{2k} (\cosh \re\zeta_j)^{\frac{1}{2}+c_3} e^{\mu r \lvert \im \sinh \zeta_j \rvert}
  \end{equation}
  with some $c_5>0$. Choosing $\beta \geq \frac{1}{2}+c_3$, this is the desired bound \ref{it:fdboundsimag}. 
\cmpqed\end{proof}

Having verified \ref{it:fdmero}--\ref{it:fdboundsimag}, we can now apply our results to show that we obtain closable local operators.

\begin{proposition}\label{proposition:evenop}
Let $F_{j}^{[2k,P,g]}$ be defined as in \eqref{eq:f2k}, with $k \in \nbb_0$, $P\in\Lambda^\pm_{2k}$, and $g \in\dcal (\mathcal{O})$ for some double cone $\ocal$; and let $A^{[2k,P,g]}$ be the associated quadratic form.
Then, $A^{[2k,P,g]}$ is $\omega$-local in $\ocal$ with $\omega (p) =  \beta\log(1+p)$ where $\beta>0$ is sufficiently large for $P$ and $k$, and with $\omega (p) =  p^\alpha$ for any $\alpha\in(0,1)$. 
Further, $A^{[2k,P,g]}$ is $\omega$-closable, and its closure is affiliated with $\A(\ocal)$. 
\end{proposition}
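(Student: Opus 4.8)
The plan is to assemble the claim from the preceding proposition and the general machinery of Section~\ref{sec:criterion}, using Poincar\'e covariance to reduce an arbitrary double cone to the standard one. I would organise the argument in three steps: $\omega$-locality, $\omega$-closability, and affiliation. Throughout I write $A = A^{[2k,P,g]}$ and drop the superscript.

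For $\omega$-locality I would first treat the standard double cone $\ocal = \ocal_r$. The preceding proposition already verifies \ref{it:fdmero}--\ref{it:fdboundsimag} for the family $F_j^{[2k,P,g]}$ with respect to this $r$ and to both admissible indicatrices $\omega$. By construction of $A$ via \eqref{eq:Fseries}, its expansion coefficients $\cme{m,n}{A}$ are exactly the boundary values of $F_{m+n}$ appearing in \eqref{eq:fmnboundary} (for the entire function $F_{2k}$ these boundary values are unambiguous), so Theorem~\ref{theo:conditionFD} gives $\omega$-locality of $A$ in $\ocal_r$. For a general double cone I would write $\ocal = \Lambda\ocal_r + a$ and use that the family is stable under the adjoint action of $U(a,\Lambda)$: a boost sends the rapidity arguments to $\zetav - \lambda$, leaving $M\even_{2k}$ invariant and merely rescaling $P(e^{\zetav})$ to another element of $\Lambda^\pm_{2k}$, while the translation turns $g$ into a function supported in $\ocal$. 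Hence $A^{[2k,P,g]}$ is, after relabelling $P$ and $g$, the $U(a,\Lambda)$-transform of an operator $\omega$-local in $\ocal_r$, and $\omega$-locality in $\ocal$ follows from the covariance built into Definition~\ref{definition:omegalocal}.

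For $\omega$-closability I would exploit that $F_j^{[2k,P,g]}$ vanishes unless $j = 2k$, so that $\cme{m,n}{A}$ is nonzero only when $m + n = 2k$. The series in the summability condition \eqref{eq:summable1} then reduces to at most one nonzero summand (at $m = 2k-n$) for each fixed $n$, and each such term $\onorm{\cme{m,n}{A}}{m \times n}$ is finite by the bounds at nodes \ref{it:fdboundsreal} proved above. Thus the hypothesis of Proposition~\ref{proposition:summable1} holds and $A$ is $\omega$-closable. Because the expansion is finite, the stronger condition \eqref{eq:summable2} of Proposition~\ref{proposition:summable2} holds as well, so the domain inclusion \eqref{eq:weyldomaincond} is also available if one prefers to argue for general $S$.

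For affiliation I would invoke Proposition~\ref{proposition:locality}. In the Ising model $S = -1$, part~(\ref{it:aising}) applies directly: $A$ is $\omega$-closable and $\omega$-local in $\ocal$, hence $A^- \affiliated \A(\ocal)$ with no domain condition needed; alternatively part~(\ref{it:aclosed}) together with \eqref{eq:weyldomaincond} gives the same conclusion. Since all the genuine analytic work -- the estimates \ref{it:fdboundsreal} and \ref{it:fdboundsimag} -- was already done in the preceding proposition, and since the finiteness of the expansion trivialises the convergence question, the one point that still needs care is the covariance reduction: one must verify that a simultaneous boost of the variables keeps $P$ inside $\Lambda^\pm_{2k}$ and that translating $g$ shifts the localization region exactly to $\ocal$, so that the whole family $\{A^{[2k,P,g]}\}$ is Poincar\'e-stable. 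That bookkeeping, rather than any estimate, is the main thing to get right.
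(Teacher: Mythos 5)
Your proof is correct and takes essentially the same route as the paper's: reduction to the standard double cone $\ocal_r$ by Poincar\'e covariance (with the same observation that translations shift $g$ and boosts rescale the arguments of $P$ within $\Lambda^\pm_{2k}$), Theorem~\ref{theo:conditionFD} for $\omega$-locality, the summability criterion applied to a finite sum for $\omega$-closability, and Proposition~\ref{proposition:locality} for affiliation. The only cosmetic difference is that the paper invokes Proposition~\ref{proposition:summable2} directly (its stronger hypothesis being trivially satisfied for a finite expansion), whereas you pass through Proposition~\ref{proposition:summable1} and then note that \eqref{eq:summable2} holds as well; both give the same conclusion.
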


\begin{proof}
By Poincar\'e covariance, we can assume without loss of generality that $\ocal=\ocal_r$. (Note that translations act only by shifting the argument of $g$, whereas boosts also scale the arguments of the polynomial $P$ by a constant factor;  cf.~\cite[Sec.~3.3]{BostelmannCadamuro:expansion}.)

Now the property of $\omega$-locality is a consequence of \ref{it:fdmero}--\ref{it:fdboundsimag} by Theorem~\ref{theo:conditionFD}. Closability follows from Proposition~\ref{proposition:summable2}, where the sum is actually finite; and Proposition~\ref{proposition:locality} proves affiliation.
\cmpqed\end{proof}

We now show that we have constructed \emph{all} (even) local quantities in the sense of the Reeh-Schlieder property.
To that end, let us define for any double cone $\ocal$,
\begin{equation}\label{eq:qeven}
   \qcal\even (\ocal) := \operatorname{span} \big\{ A^{[2k,P,g]} : k \in \nbb_0, P \in\Lambda_{2k}^\pm, g \in \dcal(\ocal)  \big\} \subset \qcal^\omega
\end{equation}
where $\omega(p)=p^\alpha$ with $\alpha\in(0,1)$, fixed in the following.
\sloppy
By the above remark, this is a covariant definition in the sense that $U(x,\Lambda) \qf\even(\ocal) U(x,\Lambda)^\ast = \qf\even (\Lambda\ocal + x)$.  With $P\even$ the projector onto the even particle number space within $\hcal$, we prove:

\begin{proposition}\label{proposition:evenrs}
  For any double cone $\ocal$, the inclusion $\qf\even(\ocal)\Omega \subset P\even \hcal$ is dense.
\end{proposition}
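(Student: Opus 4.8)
The plan is to apply the density criterion of Lemma~\ref{lemma:densitycrit} with $M$ the set of even numbers and $\qf_{\mathrm{d}} = \qf\even(\ocal)$. Conditions (\ref{it:vacdom}) and (\ref{it:timetrans}) of that lemma are already secured: $\omega$-closability was proved in Proposition~\ref{proposition:evenop}, and covariance under small time translations follows from the remark that $U(x,\Lambda)\qf\even(\ocal)U(x,\Lambda)^\ast = \qf\even(\Lambda\ocal+x)$, so $A(x)\in\qf\even(\ocal)$ for $|x|$ small enough that $\ocal+x$ still contains $\supp g$. Thus the entire burden reduces to verifying condition (\ref{it:protodense}): for each finite $N\subset M$ and each $\rho>0$, the vectors $P_{N,\rho}A^-\Omega$ must be dense in $P_{N,\rho}\hcal$.

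\smallskip
To check this, I would compute $A^{[2k,P,g]-}\Omega$ explicitly. Since the creation/annihilation expansion \eqref{eq:Fseries} has only the single term with $m=2k$, $n=0$ surviving when applied to the vacuum (the annihilators kill $\Omega$), one finds that $A^{[2k,P,g]-}\Omega$ lies entirely in $\hcal_{2k}$, with wave function proportional to the $S$-symmetrization of $F_{2k}^{[2k,P,g]}(\thetav+i\zerov) = \tilde g(p(\thetav))\,P(e^{\thetav})\,M\even_{2k}(\thetav)$. Because only one value $m=2k$ contributes, the different summands over $k$ in $\qf\even(\ocal)$ populate distinct particle-number sectors, so the density question decouples sector by sector: it suffices to show that, for each fixed $k$ with $2k\in N$, the functions $\thetav\mapsto \tilde g(p(\thetav))\,P(e^{\thetav})\,M\even_{2k}(\thetav)$, ranging over $g\in\dcal(\ocal)$ and $P\in\Lambda_{2k}^\pm$, are dense in $P_{2k,\rho}\hcal_{2k}$, i.e. in the $S$-symmetric $L^2$-functions supported in the ball of radius $\rho$ (after $S$-symmetrization).

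\smallskip
For the density argument itself I would take a candidate $\psi_{2k}\in P_{2k,\rho}\hcal_{2k}$ orthogonal to all these functions and show $\psi_{2k}=0$. The freedom in $g$ supplies, via $\tilde g(p(\thetav))$, a large supply of test functions: since $g$ ranges over all of $\dcal(\ocal)$ and $p$ is a diffeomorphism onto the mass shell in each variable, products $\tilde g(p(\thetav))$ already separate points densely on compact rapidity sets, so orthogonality forces $\overline{\psi_{2k}(\thetav)}\,P(e^{\thetav})\,M\even_{2k}(\thetav)=0$ for almost every $\thetav$ in the relevant compact region, for every admissible $P$. It then remains to argue that the factor $P(e^{\thetav})M\even_{2k}(\thetav)$ cannot vanish on a set of positive measure for \emph{all} $P\in\Lambda_{2k}^\pm$ simultaneously except where $\psi_{2k}$ already vanishes: since $M\even_{2k}$ vanishes only on the measure-zero diagonal set, and the symmetric Laurent polynomials $P$ densely exhaust the symmetric (or antisymmetric, matching the $S=-1$ statistics) functions on compacta in the $e^{\theta_j}$ coordinates, one concludes $\psi_{2k}=0$ a.e. I expect the main obstacle to be precisely this last step—confirming that the Laurent polynomials $P\in\Lambda_{2k}^\pm$ are rich enough, after multiplication by the fixed nonvanishing factor $M\even_{2k}$ and restriction to the ball of radius $\rho$, to span a dense subspace of the $S$-symmetric wave functions—which is where the combinatorial properties of $\Lambda_{2k}^\pm$ developed in Appendix~\ref{app:poly} enter, together with a Stone--Weierstrass-type argument adapted to the (anti)symmetry imposed by $S=-1$.
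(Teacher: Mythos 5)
Your overall route coincides with the paper's: apply Lemma~\ref{lemma:densitycrit} with $M$ the even integers and $\qf_{\mathrm{d}}=\qf\even(\ocal)$, get conditions (i) and (ii) from Proposition~\ref{proposition:evenop} and from small translations of $g$ inside $\ocal$, and note that the vectors $A^{[2k,P,g]}\Omega$ lie in mutually orthogonal sectors $\hcal_{2k}$, so that condition (iii) decouples over $k$. All of this matches the paper. The gap is in your verification of (iii), specifically in the step where you vary $g$ first. You claim that, because $g$ ranges over all of $\dcal(\ocal)$, the functions $\thetav\mapsto\tilde g(p(\thetav))$ ``separate points densely on compact rapidity sets,'' so that orthogonality already forces $\overline{\psi_{2k}(\thetav)}\,P(e^{\thetav})\,M\even_{2k}(\thetav)=0$ almost everywhere. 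This is false: by the paper's definition, $p(\thetav)=\sum_j p(\theta_j)$, so $\tilde g(p(\thetav))$ depends on $\thetav$ only through the \emph{total} two-momentum. For $2k\geq 2$ it is constant on the positive-dimensional level sets of $\thetav\mapsto p(\thetav)$, and no choice of $g$ can distinguish two rapidity configurations with the same total momentum. What varying $g$ actually yields, for fixed $P$, is that the integral of $\overline{\psi_{2k}}\,P(e^{\cdotarg})\,M\even_{2k}$ over almost every such level set vanishes --- a disintegration statement, not pointwise vanishing.

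The paper closes this hole by reversing the order of quantifiers: fix a single nonzero $g\in\dcal(\ocal)$ and vary $P$ first. The function $\Phi:=\overline{\psi_{2k}}\,\tilde g(p(\cdotarg))\,M\even_{2k}$ is \emph{symmetric} (antisymmetric $\psi_{2k}$ times antisymmetric $M\even_{2k}$), and since the symmetric Laurent polynomials $P(e^{\thetav})$, $P\in\Lambda_{2k}^\pm$, are dense in the continuous symmetric functions on the compact ball, and testing a symmetric integrable function against all symmetric continuous functions determines it, one concludes $\Phi=0$ a.e. Only \emph{then} does one use that $\tilde g(p(\cdotarg))\,M\even_{2k}$ is real-analytic and not identically zero, hence vanishes only on a null set, to conclude $\psi_{2k}=0$ a.e. Your proposal does contain both of these ingredients (the Stone--Weierstrass-type density of $\Lambda_{2k}^\pm$, which you correctly flag as the crux, and the null-set argument for the analytic factor), but as written the $g$-step is asked to do work that only the $P$-step can do; rearranged in the paper's order, with $g$ fixed and nonzero throughout, your argument goes through.
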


\begin{proof}
 Note that $A^{[2k,P,g]}\Omega \in P_{2k} \hcal$ are mutually orthogonal for different $k$. Therefore, it only remains to check the hypotheses (i)--(iii) of Lemma~\ref{lemma:densitycrit} for the case $\qf_{\mathrm{d}} = \qf\even(\ocal)$ and $P_M = P_{2k}$. 
 Here closability (i) follows from Proposition~\ref{proposition:evenop}. Further, since $\supp g$ is compact in the open set $\ocal$, slight translations of $g$ are contained in $\dcal(\ocal)$ as well, showing (ii). For (iii), we need to verify for any large ball $B\subset \rbb^{2k}$ that there is no antisymmetric square-integrable function $\psi\neq 0$ of $2k$ variables such that
 \begin{equation}\label{eq:psivanisheven}
     \int_B d\thetav\; \overline{\psi(\thetav)} \tilde g(p(\thetav)) M\even_{2k}(\thetav)  P(e^{\thetav} )  = 0 \quad \text{for all $P\in\Lambda_{2k}^\pm$, $g\in\dcal(\ocal)$}. 
 \end{equation}
 But since $\overline{\psi(\cdotarg)} \tilde g(p(\cdotarg)) M\even_{2k}$ is symmetric, and $\tilde g(p(\cdotarg)) M\even_{2k}$ vanishes only on a null set (due to analyticity), this follows from the density of polynomials on compact sets.
\cmpqed\end{proof}

We postpone duality results to Sec.~\ref{sec:oddrs}.

\section{Examples of local operators: odd case} \label{sec:oddexamples}

We now consider observables in the Ising model where the ``odd'' coefficients $F_{2k+1}$ are nonzero. Due to the recursion relations \ref{it:fdrecursion}, which are nontrivial in this case, we are forced to choose an infinite sequence of nonvanishing $F_{2k+1}$, linked to each other by their residues. 

Observables of this type have been considered in \cite{SchroerTruong:1978,BergKarowskiWeisz:1979,CardyMussardo:descendant}, among others; they include the so-called order parameter, or basic field, of the Ising model. Our particular focus is on closability of these quadratic forms, or put differently, on the summability of the expansion series \eqref{eq:Fseries}.

Specifically, we choose the sequences of meromorphic functions,  $k \in \nbb_0$, 
\begin{equation}\label{eq:foddnew}
F_{2k+1}\uidx{1,P,g}(\zetav) := \frac{1}{(2\pi i)^k} \tilde g(p(\zetav)) P(e^{\zetav}) M\odd_{2k+1}(\zetav), \quad F_{2k}\uidx{1,P,g}(\zetav) = 0 . 
\end{equation}
Here $g \in \mathcal{D}(\mathbb{R}^2)$ is a test function; we will make further restrictions on its momentum space behaviour below. $P$ is (essentially) a Laurent polynomial in any number of variables such that $P(y,-y,\xv) = P(\xv)$; we formalize the class $\Lambda\inv^\pm$ of these polynomials in Appendix~\ref{app:poly}, but let us note here that typical examples are the odd power sums, $\psp{2s+1}(\xv)=\sum_j x_j^{2s+1}$. The meromorphic function
\begin{equation}
  M\odd_{2k+1}(\zetav) := \prod_{1 \leq i < j \leq 2k+1} \tanh\frac{\zeta_i-\zeta_j}{2}
\end{equation}
will be further explored in Appendix~\ref{app:modd}.

As in Sec.~\ref{sec:evenexamples}, we want to verify that these functions indeed define local observables, and sufficiently many. We first check the more elementary properties \ref{it:fdmero}--\ref{it:fdrecursion} and \ref{it:fdboundsimag} in Sec.~\ref{sec:odd146}. Then we turn to \ref{it:fdboundsreal} and summability in Sec.~\ref{sec:odd5}, which involves delicate operator norm estimates of singular integral operators. Finally we derive the Reeh-Schlieder property and duality results in Sec.~\ref{sec:oddrs}.

Throughout this section, we will take $\omega(p)=p^\alpha$ with some $\alpha\in(0,1)$. We also define the function spaces
\begin{equation}\label{eq:dalpha}
\begin{aligned}
  \dcal_\alpha(\ocal) := \Big\{ g \in \dcal(\ocal) :  \sup_p \Big\lvert \ahponly{&} \exp(c \gnorm{p}{\infty}^\alpha) \big(\tfrac{\partial}{\partial p^0}\big)^{k_0} \big(\tfrac{\partial}{\partial p^1}\big)^{k_1} \tilde g(p) \Big\rvert < \infty 
  \ahponly{ \\ & } \text{ for all } k_0,k_1 \in \nbb_0 , c>0 \Big\},
  \end{aligned}
\end{equation}
where $\gnorm{p}{\infty}=\max\{\lvert p^0 \rvert,\lvert p^1 \rvert\}$. These $\dcal_\alpha(\ocal)$ are slightly different from $\dcal^\omega(\ocal)$. The only statement we will need about them is that for every open $\ocal$ they contain a nonzero function; in fact, $\dcal_\alpha(\ocal)$ is dense in $\dcal(\ocal)$, cf.~\cite[Sec.~1]{Bjoerck:1965}.

\subsection{Elementary properties}\label{sec:odd146}

We briefly state the results for \ref{it:fdmero}--\ref{it:fdrecursion} and \ref{it:fdboundsimag}, which can be deduced from the properties of $P$ and $M\odd_{2k+1}$ as explained in Appendices \ref{app:poly} and \ref{app:modd}, respectively.
\begin{proposition}\label{proposition:odd14}
   Let $P\in\Lambda\inv^{\pm}$ and $g \in \dcal(\rbb^2)$. The functions $F_j^{[1,P,g]}$ enjoy properties \ref{it:fdmero}--\ref{it:fdrecursion}.
\end{proposition}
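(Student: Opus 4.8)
The plan is to verify the four properties factor by factor, exploiting that $F_{2k+1}\uidx{1,P,g}$ splits into three pieces with complementary behaviour: the factor $\tilde g(p(\zetav))$, which is entire, symmetric, and $2\pi i$-periodic in each argument (since $p(\zeta)$ is); the factor $P(e^{\zetav})$, which is entire, $2\pi i$-periodic, and symmetric (as $P\in\Lambda\inv^\pm$ is a symmetric Laurent polynomial in the conventions of Appendix~\ref{app:poly}); and the meromorphic factor $M\odd_{2k+1}(\zetav)=\prod_{i<j}\tanh\frac{\zeta_i-\zeta_j}{2}$, whose singularity structure I take from Appendix~\ref{app:modd}. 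Since $\tilde g$ and $P$ are entire, all poles of $F_{2k+1}$ come from $M\odd_{2k+1}$; as each $\tanh\frac{\cdot}{2}$ is analytic away from the set $\{\zeta_i-\zeta_j\in i\pi(2\zbb+1)\}$, and the differences $\zeta_i-\zeta_j$ have imaginary parts in $(-\pi,0)$ throughout the tube $\im\zeta_1<\dots<\im\zeta_{2k+1}<\im\zeta_1+\pi$, this yields analyticity there, which is \ref{it:fdmero}.

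For \ref{it:fdsymm} and \ref{it:fdperiod} I would argue by symmetry type. With $S=-1$, \ref{it:fdsymm} is antisymmetry: transposing $\zeta_j,\zeta_{j+1}$ flips exactly the single factor $\tanh\frac{\zeta_j-\zeta_{j+1}}{2}$ and merely relabels the remaining factors, so $M\odd_{2k+1}$ is antisymmetric, whereas $\tilde g(p(\zetav))$ and $P(e^{\zetav})$ are symmetric; hence $F_{2k+1}$ is antisymmetric, as required. For \ref{it:fdperiod}, each $\tanh\frac{\cdot}{2}$ is $i\pi$-periodic and $\tilde g,P$ are $2\pi i$-periodic in each variable, so the shift $\zeta_{2k+1}\mapsto\zeta_{2k+1}+2\pi i$ leaves $F_{2k+1}$ invariant; combining this with invariance of the (overall antisymmetric) $F_{2k+1}$ under the $(2k+1)$-cycle $(\zeta_1,\dots,\zeta_{2k+1})\mapsto(\zeta_{2k+1},\zeta_1,\dots,\zeta_{2k})$ — which is even, having odd length — gives the identity $F_{2k+1}(\dots,\zeta_{2k+1}+2\pi i)=F_{2k+1}(\zeta_{2k+1},\zeta_1,\dots,\zeta_{2k})$.

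The substantive step, which I expect to be the main obstacle, is the recursion \ref{it:fdrecursion}. For $S=-1$ and odd arity the prefactor is $1-\prod_{j=1}^{2k+1}S(\zeta_1-\zeta_j)=1-(-1)^{2k+1}=2$, so the target identity is $\res_{\zeta_2-\zeta_1=i\pi}F_{2k+1}=-\frac{1}{\pi i}F_{2k-1}(\zeta_3,\dots,\zeta_{2k+1})$. Taking $u:=\zeta_2-\zeta_1$ as the variable, the only singular factor at $u=i\pi$ is $\tanh\frac{\zeta_1-\zeta_2}{2}=-\tanh\frac{u}{2}$, whose residue there equals $-2$; all remaining factors are finite and generically nonzero, so the pole is first order (with the full pole analysis deferred to Appendix~\ref{app:modd}). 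The crucial identity is that at $\zeta_2=\zeta_1+i\pi$ one has $\tanh\frac{\zeta_2-\zeta_j}{2}=\coth\frac{\zeta_1-\zeta_j}{2}$, so that for each $j\ge 3$ the pair $\tanh\frac{\zeta_1-\zeta_j}{2}\,\tanh\frac{\zeta_2-\zeta_j}{2}$ telescopes to $1$, leaving exactly $M\odd_{2k-1}(\zeta_3,\dots,\zeta_{2k+1})$. Simultaneously $p(\zeta_1)+p(\zeta_1+i\pi)=0$ collapses $\tilde g(p(\zetav))$ to $\tilde g(p(\zeta_3,\dots,\zeta_{2k+1}))$, while $e^{\zeta_2}=-e^{\zeta_1}$ together with the defining relation $P(y,-y,\xv)=P(\xv)$ collapses $P(e^{\zetav})$ to $P(e^{\zeta_3},\dots,e^{\zeta_{2k+1}})$. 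Gathering the residue factor $-2$, the reduced pieces, and the change $(2\pi i)^{-k}\to(2\pi i)^{-(k-1)}$ in the normalization, I obtain $-\frac{2}{2\pi i}F_{2k-1}=-\frac{1}{\pi i}F_{2k-1}$, establishing \ref{it:fdrecursion}. This computation also makes transparent that the prefactor $(2\pi i)^{-k}$ in \eqref{eq:foddnew} was tuned precisely so that the residues close into a consistent family, and that the role of the condition $P(y,-y,\xv)=P(\xv)$ is exactly to let the polynomial factor descend correctly under the residue.
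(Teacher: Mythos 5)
Your proof is correct and takes essentially the same route as the paper's: a factor-by-factor verification in which $M\odd_{2k+1}$ carries the antisymmetry, $\tilde g(p(\zetav))$ and $P(e^{\zetav})$ are entire, symmetric and $2\pi i$-periodic, and \ref{it:fdrecursion} follows from the residue $-2$ of the single singular $\tanh$ factor combined with the collapses $p(\zetav)=p(\hat\zetav)$ and $P(e^{\zetav})=P(e^{\hat\zetav})$ at $e^{\zeta_2}=-e^{\zeta_1}$. The only difference is presentational: where the paper cites Proposition~\ref{proposition:mprop}(\ref{it:mresidue}) of Appendix~\ref{app:modd} for $\res_{\zeta_2-\zeta_1=i\pi} M\odd_{2k+1} = -2\,M\odd_{2k-1}(\hat\zetav)$, you re-derive that identity inline via the $\tanh$--$\coth$ telescoping, which is the same computation.
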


\begin{proof}
    \ref{it:fdmero} is clear, since all factors are analytic where $|\im \zeta_m - \im \zeta_n| < \pi$, avoiding the poles of the hyperbolic tangent. 
Also, all factors are symmetric in their variables, except $M\odd_{2k+1}$ which is totally antisymmetric, yielding \ref{it:fdsymm}. 
Condition \ref{it:fdperiod} for $F_{2k+1}$ simply means $2\pi i$-periodicity in each variable, which is easy to verify for all factors in \eqref{eq:foddnew}.
For \ref{it:fdrecursion}, it is crucial to note that 
\begin{equation}
\res_{\zeta_2 - \zeta_1 = i\pi} M\odd_{2k+1}(\zetav) = -2 M\odd_{2k-1}(\hat{\zetav}) \quad \text{with} \; \hat\zetav = (\zeta_3,\dotsc,\zeta_{2k+1});  
\end{equation}
see Proposition~\ref{proposition:mprop}(\ref{it:mresidue}) in Appendix~\ref{app:modd}. Further, one has $p(\zetav)=p(\hat\zetav)$ at the residue (where $e^{\zeta_1}=-e^{\zeta_2}$) and $P(e^{\zetav})=P(e^{\hat \zetav})$ since $P \in \Lambda\inv^{\pm}$, cf.\ Appendix~\ref{app:poly}. 
This gives exactly \ref{it:fdrecursion} in the case $S=-1$. 
\cmpqed\end{proof}

\begin{proposition}\label{proposition:odd6}
   Let $P\in\Lambda\inv^\pm$ and $g \in \dcal(\ocal_r)$ with some $r>0$. The functions $F_j^{[1,P,g]}$ enjoy property \ref{it:fdboundsimag} with respect to this $r$.
\end{proposition}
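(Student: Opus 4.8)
The plan is to establish the pointwise bounds \ref{it:fdboundsimag} by splitting $F^{[1,P,g]}_{2k+1}$ into its three constituent factors in \eqref{eq:foddnew} and matching each to one piece of the right-hand side of \ref{it:fdboundsimag}. Concretely, I would bound $\tilde g(p(\zetav))$ against the factor $\prod_j \exp(\mu r\lvert\im\sinh\zeta_j\rvert)$, absorb $P(e^{\zetav})$ into $\prod_j\exp(c'\omega(\cosh\re\zeta_j))$, and extract the entire singular factor $\operatorname{dist}(\im\zetav,\partial\ical^{2k+1}_\pm)^{-(2k+1)/2}$ from $M\odd_{2k+1}(\zetav)$. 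Since these three estimates are multiplicative and each matches a distinct factor of \ref{it:fdboundsimag}, combining them proves the claim.

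The first two factors are handled exactly as in the even case. Because $\supp g \subset\ocal_r$, the Paley--Wiener-type bound \eqref{eq:gtildebound} gives $\lvert\tilde g(p(\zetav))\rvert \le \frac{\gnorm{g}{1}}{2\pi}\prod_{j}e^{\mu r\lvert\im\sinh\zeta_j\rvert}$, which supplies precisely the $\mu r$-term in the exponent with no growth in the real directions. For $P$, writing it as a finite sum of Laurent monomials in $e^{\zetav}$ yields, just as in \eqref{eq:mp2kest}, a bound $\lvert P(e^{\zetav})\rvert\le c\prod_j(\cosh\re\zeta_j)^{c_3}$ for suitable $c,c_3>0$. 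Since $\omega(p)=p^\alpha$ with $\alpha\in(0,1)$, one has $t^{c_3}\le C\exp(c' t^\alpha)$ for all $t\ge 1$ and any fixed $c'>0$, so each polynomial factor is dominated by $\exp(c'\omega(\cosh\re\zeta_j))$; this absorbs $P$ into the second exponential of \ref{it:fdboundsimag}.

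The substance of the proof is the bound on $M\odd_{2k+1}$, which I would take from the singularity analysis in Appendix~\ref{app:modd} (Proposition~\ref{proposition:mprop}). On the relevant domain $\rbb^{2k+1}+i\ical^{2k+1}_\pm$ each factor $\tanh\tfrac{\zeta_i-\zeta_j}{2}$ tends to $\pm 1$ as $\lvert\re(\zeta_i-\zeta_j)\rvert\to\infty$, so $M\odd_{2k+1}$ carries no growth in the real directions and contributes nothing to the $\cosh\re\zeta_j$ factors; its only singularities come from the first-order poles at $\zeta_i-\zeta_j=\pm i\pi$, which are approached only as $\im\zetav$ tends to $\partial\ical^{2k+1}_\pm$. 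A factor-by-factor estimate is far too crude, since it would produce a pole of order $\binom{2k+1}{2}$. The essential point is a cancellation: at a corner of the simplex where a group of imaginary parts clusters near one endpoint and another group near the opposite endpoint, the pairs straddling the two groups yield near-pole factors of order $\operatorname{dist}^{-1}$, whereas the pairs \emph{within} each group yield small factors of order $\operatorname{dist}$ (the tangent of a small angle), and these offset one another. Bookkeeping this balance shows the net order of the singularity never exceeds $(2k+1)/2$, so that $\lvert M\odd_{2k+1}(\zetav)\rvert \le c\,\operatorname{dist}(\im\zetav,\partial\ical^{2k+1}_\pm)^{-(2k+1)/2}$ uniformly in the real parts.

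I expect this last estimate to be the main obstacle, and it is exactly the reason the detailed study of $M\odd_{2k+1}$ is deferred to Appendix~\ref{app:modd}: one must control the interplay of singular and vanishing factors uniformly over all real parts and over every corner of $\ical^{2k+1}_\pm$, rather than at the single symmetric configuration used above for intuition. Granting that appendix estimate, the present proposition reduces to the multiplicative bookkeeping of the three factors described above, with the constants $c,c'$ in \ref{it:fdboundsimag} assembled from those of the individual bounds.
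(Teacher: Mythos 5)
Your proposal is correct and takes essentially the same route as the paper: both proofs bound the three factors separately --- $\tilde g(p(\zetav))$ by \eqref{eq:gtildebound}, the Laurent polynomial by a bound of the form $c\,E(\re\zetav)^{c_2}$ (Proposition~\ref{proposition:pbound} with $J=\emptyset$; your monomial expansion is equivalent for fixed $k$) which is then absorbed into $\prod_j \exp\big(c'\omega(\cosh\re\zeta_j)\big)$ using $\omega(p)=p^\alpha$, and $M\odd_{2k+1}$ by Proposition~\ref{proposition:mprop}(\ref{it:mcomplexbd}) --- and then multiply. The only inessential discrepancies are that the appendix actually yields the stronger exponent $\operatorname{dist}(\im\zetav,\partial\ical^{2k+1}_\pm)^{-k}$ rather than your $-(2k+1)/2$ (either suffices for \ref{it:fdboundsimag}), and that its proof proceeds not by the singular/vanishing cancellation you sketch (which is delicate precisely because $\tanh\frac{\zeta_i-\zeta_j}{2}$ is small only when the \emph{real} parts are also close) but by the pairing identity of Lemma~\ref{lemma:rpairs}, in which each summand contains only $k$ disjoint singular factors, each bounded by a single inverse power of the distance.
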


\begin{proof}
We estimate $|F_{2k+1}^{[1,P,g]}(\zetav)|$ directly from \eqref{eq:foddnew}, and in doing so we bound: 
$\tilde g(p(\zetav))$ as in \eqref{eq:gtildebound}, where the support properties of $g$ enter; the Laurent polynomial $P$ by Proposition~\ref{proposition:pbound} in Appendix \ref{app:poly} (with $J=\emptyset$); and the function $M\odd_{2k+1}(\zetav)$ by Proposition~\ref{proposition:mprop}(\ref{it:mcomplexbd}) in Appendix~\ref{app:modd}. Combining these, we arrive at
\begin{equation}\label{eq:ftkpo}
\lvert F_{2k+1}^{[1,P,g]} (\zetav)  \rvert  \leq c_1  \lVert g \rVert_1  E(\re \zetav)^{c_2} 
\operatorname{dist} (\im\zetav,\ical^{2k+1}_\pm)^{-k} \prod_{j=1}^{2k+1} e^{\mu r \lvert \im \sinh \zeta_j \rvert}
\end{equation}
for all $\zetav \in \ical^{2k+1}_\pm$, 
with some $c_1,c_2>0$ (which may depend on $k$). Choosing $c_3>0$ such that $p \leq c_3\exp \omega(p) = c_3 \exp p^\alpha$ for all $p>0$, we then have
\begin{equation}
 E(\thetav) \leq (2k+1) \prod_{j=1}^{2k+1} \cosh \theta_j \leq (2k+1) c_3^{2k+1} \prod_{j=1}^{2k+1} \exp \omega(\cosh \theta_j) 
\end{equation}
for all $\thetav \in \rbb^{2k+1}$,
so that \eqref{eq:ftkpo} is in agreement with \ref{it:fdboundsimag}.
\cmpqed\end{proof}

\subsection{Operator domain and summability}\label{sec:odd5}

The remaining part for establishing $F_{2k+1}^{[1,P,g]}$ as the coefficients of a local operator is as follows. 
Setting $f_{mn}(\thetav,\etav) := F_{m+n}^{[1,P,g]}(\thetav + i\zerov,\etav+i\piv-i\zerov)$, 
we need to find bounds for the norm $\onorm{  f_{mn} }{m \times n}$. 
This will, first of all, establish \ref{it:fdboundsreal}.
However, we also need these estimates in order to show the summability of the series \eqref{eq:Fseries} when applied to a certain class of vectors, in order to extend $A$ to a closed operator.

The individual terms of the series are singular integral operators due to the poles of the $F_{m+n}$ along the integration contour,
and we have to find operator norm estimates for these. We start with a lemma to that end. In it, for a set of integers $J = \{j_1,\dotsc,j_\ell\}$,
we denote mixed partial derivatives of a function $h$ as $\partial_J h (\thetav) = \partial/\partial \theta_{j_1} \cdots \partial/\partial\theta_{j_\ell} h(\thetav)$; and where the function has additional arguments denoted $\etav$, the derivatives will not act on these.

\begin{lemma} \label{lemma:kernelsingular}
 Let $m,n \in \nbb_0$ and $0 \leq \ell \leq \min(m,n)$. Let $h : \rbb^m \times \rbb^{n} \to \cbb$ be smooth, and let $L:\rbb \backslash\{0\} \to \cbb$ be a continuous function,
 bounded outside a neighborhood of zero and analytic inside that neighborhood, except for a possible first order pole at $0$.
 Then, the integral kernel on $\rbb^m \times \rbb^n$,
 \begin{equation}
    K(\thetav,\etav) := h(\thetav,\etav) \prod_{j=1}^\ell L(\theta_j - \eta_j \pm i0),
 \end{equation}
 fulfills the bound
 \begin{equation}
   \gnorm{K}{m \times n}
\leq
  c_L^{m+n} \max_{J\subset\{1,\dotsc,\ell\}} \sup_{\thetav,\etav} \Big( \big\lvert \partial_J h (\thetav,\etav) \big\rvert \; \prod_{i=1}^m \sqrt{1+\theta_i^2} \prod_{j=1}^n \sqrt{1+\eta_j^2} \Big)  
  \end{equation}
  with a constant $c_L >0$ that depends on $L$ but not on $m$, $n$, $\ell$, or $h$.
\end{lemma}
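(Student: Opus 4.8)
The norm $\gnorm{K}{m \times n}$ of \eqref{eq:crossnorm1} is exactly the operator norm of the integral operator $T_K : L^2(\rbb^n) \to L^2(\rbb^m)$ with kernel $K$, so the plan is to bound $\|T_K\|$. First I would normalise $L$: since $L$ is analytic near $0$ with at most a simple pole, write $L(x \pm i0) = a/(x \pm i0) + R(x)$ where $a = \res_{x=0} L$ and $R$ is bounded and continuous on $\rbb$. Substituting this into each of the $\ell$ factors expands $K$ into a sum over subsets $S \subseteq \{1,\dotsc,\ell\}$ of genuinely singular factors; the complementary factors $R(\theta_j - \eta_j)$, $j \notin S$, are bounded and can be absorbed into the amplitude $h$ (they are never differentiated and never produce a pole), contributing only a factor $\|R\|_\infty^{\ell - |S|}$.

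For the singular factors I would use the Sokhotski--Plemelj formula $1/(x \pm i0) = \mathrm{p.v.}\,1/x \mp i\pi\,\delta(x)$, together with a ``diagonal minus divided difference'' decomposition of the amplitude: writing $A$ for the current amplitude and isolating the pair $(\theta_j,\eta_j)$, one has $A(\theta_j,\eta_j)/(\theta_j - \eta_j \pm i0) = A(\eta_j,\eta_j)/(\theta_j - \eta_j \pm i0) + \tilde A_j$, where $\tilde A_j = (A(\theta_j,\eta_j) - A(\eta_j,\eta_j))/(\theta_j - \eta_j)$ is regular (the first-order zero of the numerator cancels the pole, so $\pm i0$ becomes immaterial) and carries exactly one $\theta_j$-derivative through $\tilde A_j = \int_0^1 (\partial_{\theta_j}A)(\eta_j + s(\theta_j - \eta_j),\eta_j)\,ds$. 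Each singular factor thus splits three ways: a $\delta$-term (set $\eta_j = \theta_j$, no derivative), a principal-value term with the amplitude restricted to the diagonal (a Hilbert transform $\eta_j \mapsto \theta_j$, no derivative), or a divided-difference term (regular kernel, one derivative $\partial_{\theta_j}$). Carrying this out over all $j \in S$ expands $T_K$ into at most $3^\ell$ operators, each indexed by the set $C \subseteq \{1,\dotsc,\ell\}$ of divided-difference choices.

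Each such operator I would factor as a composition of (i) a tensor product of one-dimensional Hilbert transforms, one for each principal-value--diagonal factor, each bounded on $L^2$ with absolute norm $\pi$, and (ii) a remaining operator whose kernel is $\partial_J h$ with $J = C$, with the $\delta$- and diagonal-variables identified. Since those identified variables only pass through as multipliers, the remaining operator is fibred over them, and its norm is bounded by the Hilbert--Schmidt norm of its kernel in the genuinely integrated variables. This is where the weights enter: on the non-differentiated slots I would use $\int |k|^2 \le \big(\sup |k|\prod_i\sqrt{1+\theta_i^2}\prod_j\sqrt{1+\eta_j^2}\big)^2 \cdot \pi^{m+n}$, so that each factor $\sqrt{1+\cdot^2}$ contributes a $\pi^{1/2}$. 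Collecting the Hilbert-transform norms, the $\delta$-prefactors $\pi$, the powers of $|a|$ and $\|R\|_\infty$, and the $3^\ell$ terms, all are dominated by a single constant $c_L$ (depending only on $L$) raised to the power $m+n$, since $\ell \le \min(m,n) \le (m+n)/2$; taking the maximum over the sets $C$ yields the stated $\max_{J \subseteq \{1,\dotsc,\ell\}}$.

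The main obstacle is the uniform Hilbert--Schmidt estimate for the divided-difference kernels: for $\tilde A_j = (A(\theta_j,\eta_j) - A(\eta_j,\eta_j))/(\theta_j - \eta_j)$ the naive mean-value bound together with the weight at the intermediate Taylor point loses a logarithm, so instead I would split the $(\theta_j,\eta_j)$-plane into the near-diagonal region $|\theta_j - \eta_j| < 1$, where $|\tilde A_j|$ is controlled by $\sup|\partial_{\theta_j}A|$ and $\sqrt{1+\eta_j^2} \asymp \sqrt{1+\theta_j^2}$, and the far region $|\theta_j - \eta_j| \ge 1$, where $|\tilde A_j| \le (|A(\theta_j,\eta_j)| + |A(\eta_j,\eta_j)|)/|\theta_j - \eta_j|$ is used directly; in both regions the two-variable integral converges with an absolute constant. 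The delicate point on which the whole estimate turns is verifying that this constant, and every constant above, is genuinely independent of $m$, $n$ and $\ell$, so that everything can be absorbed into $c_L^{m+n}$.
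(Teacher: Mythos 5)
Your proposal is correct and follows essentially the same route as the paper's proof: reduce $L$ to the pure pole $a/(x \pm i0)$ by splitting off a bounded remainder, then iterate the decomposition $A(\theta_j,\eta_j) = A(\eta_j,\eta_j) + (\theta_j-\eta_j)\tilde A_j$ so that each singular factor either cancels against a divided difference (costing one derivative of $h$) or acts, with diagonal-restricted amplitude, as a bounded convolution-type operator in that variable pair, finishing with a weighted Hilbert--Schmidt estimate over the genuinely integrated variables. The only differences are cosmetic --- the paper keeps $1/(x \pm i0)$ intact as a bounded operator ($2^\ell$ terms) where you split it via Sokhotski--Plemelj ($3^\ell$ terms), and your near-/far-diagonal argument for the weighted bound on the divided differences in fact fills in a detail that the paper merely asserts (``the finite difference quotients are majorized by the corresponding partial derivatives'').
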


\begin{proof}
We reduce the statement to special cases in four steps (a)--(d).

   (a) It suffices to prove the statement for $m=n=\ell$. Namely, once known for that case with some $c_L$, we can write for $\psi\in\dcal(\rbb^m)$, $\varphi\in\dcal(\rbb^n)$,
   \begin{equation}\label{eq:cssplit}
    \begin{aligned}
     \Big\lvert \int & d\thetav \, d\etav  \overline{\psi( \thetav )} K(\thetav,\etav) \varphi( \etav )  \Big\rvert
     \leq
     \int   \prod_{i=\ell+1}^m \frac{d\theta_i}{(1+\theta_i^2)^{1/2}} \prod_{j=\ell+1}^n \frac{d\eta_j}{(1+\eta_j^2)^{1/2}}        
        \\ &\times  \prod_{i=\ell+1}^m (1+\theta_i^2)^{1/2} \prod_{j=\ell+1}^n (1+\eta_j^2)^{1/2}
        \ahponly{ \\ & \times}
        \Big\lvert \int
         \overline{\psi(\thetav)} \varphi(\etav)  h(\thetav,\etav) \Big( \prod_{j=1}^\ell L(\theta_j - \eta_j \pm i0) d\theta_j d\eta_j \Big)   \Big\rvert.
    \end{aligned}
   \end{equation}
    We now apply the known statement to the inner integral and the Cauchy-Schwarz inequality to the outer integral, which yields the desired result as long as we choose $c_L \geq \sqrt{\pi}$.

   (b) If the statement holds for some $L$, and $M$ is a bounded continuous function analytic near $0$, then it holds for $L+M$ in place of $L$ as well.
   To see this, write
    \begin{equation}
    \begin{aligned}
    K(\thetav,\etav) &:= h(\thetav,\etav) \prod_{j=1}^k (L+M)(\theta_j - \eta_j \pm i0)
     \\ 
     &=  \sum_{J \subset \{1,\dotsc,\ell\}} h(\thetav,\etav) \prod_{j\in J} M(\theta_j - \eta_j) \prod_{j\in J^c} L(\theta_j - \eta_j \pm i0)  .
   \end{aligned}
   \end{equation}
   To each summand, we can now apply the statement for $L$ with $\ell=|J^c|$, which yields the result for $L+M$ with $c_{L+M} := \sqrt{2} c_L \sqrt{\gnorm{M}{\infty}+1}$.

   (c) It suffices to consider $L(\zeta)=\zeta^{-1}$. For other cases, set $a:=\res_{\zeta=0}L$ and note that $M(\zeta):= L(\zeta)-a/\zeta$ is analytic near 0, so that (b) can be applied.

   (d) Now let $\ell=m=n$ and $L(\zeta)=\zeta^{-1}$. We denote finite difference quotients of $h$ for $\ell=1$ as
       \begin{equation}\label{eq:fdone}
          \delta_{\xi} h (\eta) := \frac{h(\eta+\xi,\eta) - h(\eta, \eta)}{\xi}, 
       \end{equation}
       continued to $\xi=0$ by its limit, and for general $\ell$ as
       \begin{equation}
	  \delta_{J,\xiv} h (\etav) := \delta_{j_1,\xi_{j_1}} \ldots \delta_{j_a,\xi_{j_a}} h(\etav), 
       \end{equation}
       where $J=\{j_1,\dotsc,j_a\}$, and where $\delta_{j,\xi}$ acts on the $j$th component of the arguments like in \eqref{eq:fdone}. 
       As a special case,  $\delta_{\emptyset,\xiv} h (\etav) :=  h(\etav, \etav)$.
       With this notation, one has
\begin{equation}\label{eq:hsum}
   h(\etav+\xiv,\etav) =  \sum_{J \subset \{1,\dotsc,\ell\}} \delta_{J,\xiv} h(\etav) \prod_{j\in J} \xi_j , 
\end{equation}
and hence
\begin{equation}
   K(\thetav,\etav) =  \sum_{J \subset \{1,\dotsc,\ell\}} \delta_{J,\thetav-\etav} h(\etav) \prod_{j\in J^c} (\theta_j-\eta_j\pm i0)^{-1}.
\end{equation}
Note here that $\delta_{J,\thetav-\etav} h$  depends on $\theta_j$ only if $j \in J$, and that  $(\theta_j-\eta_j\pm i0)^{-1}$ ($j \in J^c$) thus acts as a bounded operator with respect to that variable.
Splitting the integration variables as in \eqref{eq:cssplit}, we find that
\begin{equation}
   \gnorm{K}{m \times n} \leq \sum_J c^{\ell-|J|} \sup_{\thetav,\etav}\Big( \big\lvert \delta_{J,\thetav-\etav} h(\etav)\big\rvert \prod_{j=1}^\ell \sqrt{1+\theta_j^2}\sqrt{1+\eta_j^2}\Big)
\end{equation}
where $c>0$ is some constant. Since the finite difference quotients are majorized by the corresponding partial derivatives, and the sum has $2^\ell$ terms, this proves the statement with $c_L:=\sqrt{2(c+1)}$. 
\cmpqed\end{proof}

We are interested in particular in the following kernels which appear as building blocks of the $F_{2k+1}$.

\begin{lemma}\label{lemma:kernelcoth}
 Let $P \in \Lambda\inv^\pm$, let $g \in \dcal_\alpha(\ocal)$, and let $k \in \nbb_0$ be fixed in the following.
 Consider the kernels on $\rbb^m\times\rbb^n$, $m,n \geq k$,
\begin{equation}\label{eq:kga}
  K(\thetav,\etav) = \frac{P(e^{\thetav},e^{-\etav}) \, \tilde g\big(p(\thetav)- p(\etav)\big)}{  \exp(E(\etav)^\alpha)}
 \prod_{j=1}^k \coth\frac{\theta_j-\eta_j \pm i0}{2}
\end{equation}
For each $\epsilon > 0$, there exists $c>0$ such that
 \begin{equation}\label{stima}
  \forall m,n \geq k: \quad \gnorm{K}{m \times n} \leq c^{m+n+1} m^{\epsilon m} n^{\epsilon n}.
 \end{equation}
\end{lemma}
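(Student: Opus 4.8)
The plan is to apply the singular-kernel estimate of Lemma~\ref{lemma:kernelsingular} with $\ell=k$ and $L(\zeta)=\coth(\zeta/2)$, taking the smooth factor to be
\begin{equation}
 h(\thetav,\etav) = \frac{P(e^{\thetav},e^{-\etav})\,\tilde g(p(\thetav)-p(\etav))}{\exp(E(\etav)^\alpha)}.
\end{equation}
The function $L(\zeta)=\coth(\zeta/2)$ fits the hypotheses on $L$: it is continuous on $\rbb\setminus\{0\}$, tends to $\pm1$ at infinity (hence is bounded away from $0$), and near $0$ is analytic except for a first order pole of residue $2$. Since $m,n\ge k$, the requirement $0\le\ell\le\min(m,n)$ holds, and $h$ is smooth because $E(\etav)\ge n\ge 1$. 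The lemma then reduces the claim to bounding
\begin{equation}
 \max_{J\subset\{1,\dots,k\}}\sup_{\thetav,\etav}\Big(|\partial_J h(\thetav,\etav)|\prod_{i=1}^m\sqrt{1+\theta_i^2}\prod_{j=1}^n\sqrt{1+\eta_j^2}\Big)
\end{equation}
by $c^{m+n}m^{\epsilon m}n^{\epsilon n}$, where $\partial_J$ involves only the first $k$ rapidities $\theta_1,\dots,\theta_k$.

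First I would compute $\partial_J h$ by the product and chain rules. As $|J|\le k$ is fixed and the derivatives act only in the $\theta$-variables, the factor $\exp(-E(\etav)^\alpha)$ is never hit; differentiating $P(e^{\thetav},e^{-\etav})$ keeps it a Laurent polynomial of the same type, and differentiating $\tilde g(p(\thetav)-p(\etav))$ produces derivatives of $\tilde g$ (still of rapid decay, since $g\in\dcal_\alpha(\ocal)$ controls all derivatives) times chain-rule factors $\mu\cosh\theta_j,\mu\sinh\theta_j$. Using the polynomial bound of Proposition~\ref{proposition:pbound} (with $J=\emptyset$) for $P$ and its derivatives, every non-$\tilde g$ factor is dominated by a fixed power $E(\thetav)^a E(\etav)^b$ up to a constant, so that
\begin{equation}
 |\partial_J h(\thetav,\etav)| \le C\, E(\thetav)^a E(\etav)^b\,\max_{|\beta|\le k}\big|(\partial^\beta\tilde g)(p(\thetav)-p(\etav))\big|\,e^{-E(\etav)^\alpha}.
\end{equation}

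The decisive step is to use the rapid decay to decouple the $\thetav$- and $\etav$-dependence. Bounding the finitely many derivatives of $\tilde g$ by $C_c\exp(-c\gnorm{p(\thetav)-p(\etav)}{\infty}^\alpha)$, and using $\gnorm{p(\thetav)-p(\etav)}{\infty}\ge\mu|E(\thetav)-E(\etav)|$ together with the subadditivity $|E(\thetav)-E(\etav)|^\alpha\ge E(\thetav)^\alpha-E(\etav)^\alpha$, I would fix some $c<\mu^{-\alpha}$, so that the product of the $\tilde g$-decay with $e^{-E(\etav)^\alpha}$ is dominated by $e^{-\frac12 E(\thetav)^\alpha}e^{-\frac12 E(\etav)^\alpha}$. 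The estimate then factorizes, and it suffices to show, for each fixed power $a$,
\begin{equation}
 \sup_{\thetav}\Big(E(\thetav)^a\prod_{i=1}^m\sqrt{1+\theta_i^2}\;e^{-\frac12 E(\thetav)^\alpha}\Big)\le c^m m^{\epsilon m},
\end{equation}
and symmetrically in $\etav$.

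I expect this last maximization to be the main obstacle, since a crude application of AM-GM yields a power $m^{\gamma m}$ with a \emph{fixed} $\gamma>0$, not the required arbitrarily small $\epsilon$. The resolution is that $P$ contributes only the fixed power $E(\thetav)^a$, which a sliver of the decay absorbs into an $m$-independent constant via $\sup_{E}E^a e^{-\frac14 E^\alpha}=C_a$; the genuinely dangerous factor $\prod_i\sqrt{1+\theta_i^2}$ grows only logarithmically per variable, since $\sqrt{1+\theta_i^2}\le C(1+\log\cosh\theta_i)$. Writing $x_i=\cosh\theta_i\ge1$, I would then use the tunable inequality $1+\log x\le C_{\epsilon'}x^{\epsilon'}$ (from $\log t\le t-1$) for a small $\epsilon'$ to be chosen, followed by AM-GM $\prod_i x_i\le(E(\thetav)/m)^m$ and the one-dimensional maximum $\sup_{S}S^{\epsilon'm}e^{-\frac14 S^\alpha}\le C^m m^{\epsilon'm/\alpha}$. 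Combining these gives a bound $C^m m^{\epsilon'(1/\alpha-1)m}$, and choosing $\epsilon'=\epsilon\alpha/(1-\alpha)$ produces exactly $m^{\epsilon m}$, with the finitely many small values of $m$ absorbed into $c$. Feeding the two factorized suprema back through the earlier steps then yields the asserted bound $c^{m+n+1}m^{\epsilon m}n^{\epsilon n}$.
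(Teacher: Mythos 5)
Your proposal is correct and follows essentially the same route as the paper's proof: apply Lemma~\ref{lemma:kernelsingular} with $L(\zeta)=\coth(\zeta/2)$, bound $\partial_J h$ via the product/chain rule together with Proposition~\ref{proposition:pbound} and the $\dcal_\alpha$-decay of $\tilde g$, decouple the $\thetav$- and $\etav$-dependence using $\gnorm{p(\thetav)-p(\etav)}{\infty}\geq\mu\lvert E(\thetav)-E(\etav)\rvert$ and subadditivity of $x\mapsto x^\alpha$ (with the kernel's built-in factor $\exp(-E(\etav)^\alpha)$ absorbing the wrong-sign term), and finally trade the polynomially many slowly growing factors against the stretched exponential before rescaling $\epsilon$. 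The only immaterial deviation is your last maximization step, where the paper simply bounds $(1+\theta_i^2)^{1/2}\leq c_4\,E(\thetav)^{\epsilon}$ and takes a single one-dimensional supremum $\sup_{z\geq 0} z^{(c_3+m\epsilon)/\alpha}e^{-z/2}\leq c^{m+1}m^{\epsilon m/\alpha}$, rather than your logarithm/AM-GM refinement; both variants conclude by redefining $\epsilon$.
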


\begin{proof}
 To apply Lemma~\ref{lemma:kernelsingular}, we need to estimate for $J\subset\{1,\dotsc,k\}$ the function
 \begin{equation}\label{eq:hdef}
  h_J(\thetav,\etav) := \partial_J f(\thetav,\etav)  \exp(-E(\etav)^\alpha) 
  \prod_{i=1}^m (1+\theta_i^2)^{1/2} \prod_{j=1}^n (1+\eta_j^2)^{1/2}
\end{equation}
where
\begin{equation} 
 f(\thetav,\etav) := P(e^{\thetav},e^{-\etav}) \; \tilde g\big(p(\thetav)- p(\etav) \big).
\end{equation}
We can explicitly compute
\begin{equation}\label{eq:fjderiv}
 \partial_J f =  \sum_{I \subset J } \partial_I P(e^{\thetav},e^{-\etav}) \;
  \Big(\big({\prod_{j \in J \backslash I}} \frac{dp}{d\theta}(\theta_j) \cdot \nabla_p \big) \tilde g(p) \Big)\Big\vert_{p=p(\thetav)-p(\etav)}.
\end{equation}
 Since $g\in\dcal_\alpha(\ocal)$, all derivatives of $\tilde g(p)$ until the order $|J|-|I|\leq k$ are bounded by $c_1 \exp(-\gnorm{p}{\infty}^{\alpha}/2\mu)$ with some $c_1>0$ (see Eq.~\eqref{eq:dalpha}). Further, $\gnorm{dp / d\theta}{1} \leq 2 \gnorm{p(\theta)}{\infty}$ for real arguments, and $\gnorm{p(\theta_j)}{\infty} \leq \mu E(\thetav)$. We also have $\gnorm{p(\thetav) - p(\etav)}{\infty} \geq  \mu \lvert  E(\thetav) - E(\etav) \rvert$. This yields
 \begin{multline}
 \Big\lvert \, \Big(\big({\prod_{j \in J \backslash I}} \frac{dp}{d\theta}(\theta_j) \cdot \nabla_p \big) \tilde g(p) \Big)\Big\vert_{p=p(\thetav)-p(\etav)} \Big\rvert
 \ahponly{\\} \leq (2 \mu E(\thetav))^{|J|-|I|}  c_1 \exp( -|E(\thetav)-E(\etav)|^{\alpha}/2 )
 \\ \leq (1+2 \mu E(\thetav))^k  c_1 \exp( -\frac{1}{2}E(\thetav)^{\alpha} + \frac{1}{2}E(\etav)^{\alpha}).
 \end{multline}
 The derivative $\partial_I P$ in \eqref{eq:fjderiv} can be estimated by Proposition~\ref{proposition:pbound}.
 With the sum in \eqref{eq:fjderiv} containing at most $2^{|J|} \leq 2^k$ terms, we obtain
 \begin{equation}
  | \partial_J f | \leq c_2^{k+1}  E(\thetav)^{c_3} E(\etav)^{c_3}  \exp( -\frac{1}{2}E(\thetav)^{\alpha} + \frac{1}{2}E(\etav)^{\alpha})  
 \end{equation}
 with constants $c_2$, $c_3$. Given $\epsilon>0$, we further estimate $(1+\theta_j^2)^{1/2} \leq c_4 E(\thetav)^{\epsilon}$ and similarly for $\etav$. Thus we finally obtain in \eqref{eq:hdef},
 \begin{equation}    
 \begin{aligned}
   |h_J(\thetav,\etav)| &\leq c_5^{m+n+1} E(\thetav)^{c_3} E(\etav)^{c_3} E(\thetav)^{m\epsilon} E(\etav)^{n\epsilon} \exp( -\frac{1}{2}E(\thetav)^{\alpha} - \frac{1}{2} E(\etav)^{\alpha} )
  \\
  &\leq c_5^{m+n+1} \sup_{z \geq 0}\big( z^{(c_3+m\epsilon)/\alpha} e^{-z/2}\big)\sup_{z \geq 0}\big( z^{(c_3+n\epsilon)/\alpha}e^{-z/2}\big)
  \\
  &\leq c_6^{m+n+1} m^{m\epsilon/\alpha} n^{n\epsilon/\alpha}.
 \end{aligned}
 \end{equation}
 With $L(\zeta) = \coth (\zeta/2)$, the result now follows from Lemma~\ref{lemma:kernelsingular} after a redefinition of $\epsilon$.
\cmpqed\end{proof}

With the help of this result, and knowledge of the singularity structure of the functions $M\odd_{2k+1}$ as developed in Appendix~\ref{app:modd}, we can now estimate the $\onorm{\cdotarg}{m \times n}$-norms of the expansion coefficients $f_{mn}$ of our proposed local operators.

\begin{proposition}\label{proposition:oddomeganorm}
 Let $f_{mn}(\thetav,\etav):=F_{m+n}^{[1,P,g]}(\thetav + i \zerov, \etav +i\piv - i \zerov)$, 
 with $g \in \dcal_\alpha(\ocal)$ for some bounded $\ocal$. 
 For fixed $\epsilon>0$ and $n \in \nbb_0$, there is a constant $c>0$ such that for all $m \in\nbb_0$,
\begin{equation}
   \onorm{f_{mn}}{m \times n} + \onorm{f_{nm}}{n \times m} \leq c^{m+1} m^{\epsilon m}.
\end{equation}
The same holds for $\hat f_{mn}(\thetav,\etav):=F_{m+n}^{[1,P,g]}(\thetav - i \piv + i \zerov, \etav - i \zerov)$.
In particular, the functions $F_j\uidx{1,P,g}$ fulfill condition \ref{it:fdboundsreal}.
\end{proposition}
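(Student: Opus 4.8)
The plan is to reduce each of the four norms in the statement to the kernel estimate of Lemma~\ref{lemma:kernelcoth}, after unfolding the singularity structure of $M\odd_{m+n}$ at the boundary. First I would record the effect of the boundary prescription on the factors of \eqref{eq:foddnew}. Since $p(\eta_j + i\pi) = -p(\eta_j)$ and $e^{\eta_j + i\pi} = -e^{\eta_j}$, evaluation at $\zetav = (\thetav + i\zerov,\,\etav + i\piv - i\zerov)$ turns $\tilde g(p(\zetav))$ into $\tilde g(p(\thetav) - p(\etav))$ and $P(e^{\zetav})$ into $P(e^{\thetav}, -e^{\etav})$, whose growth in $\thetav,\etav$ and whose $\thetav$-derivatives obey the same bounds (Proposition~\ref{proposition:pbound}) as the factor $P(e^{\thetav}, e^{-\etav})$ appearing in \eqref{eq:kga}. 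The delicate factor is $M\odd_{m+n}(\zetav)$: its same-group factors reduce to the bounded smooth functions $\tanh\frac{\theta_i - \theta_{i'}}{2}$ and $\tanh\frac{\eta_a - \eta_b}{2}$, while each mixed factor becomes $\tanh\frac{\theta_i - \eta_b - i\pi}{2} = \coth\frac{\theta_i - \eta_b + i0}{2}$, carrying a first-order pole on the diagonal $\theta_i = \eta_b$.

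The crux is that, although this naively produces $mn$ singular factors, the actual singularities organize into partial matchings of size at most $\min(m,n)$. Here I would invoke Appendix~\ref{app:modd} (Proposition~\ref{proposition:mprop}) to write the boundary value of $M\odd_{m+n}$ as a finite sum, indexed by partial matchings $J$ between $\{1,\dotsc,m\}$ and $\{1,\dotsc,n\}$ with $|J| \leq \min(m,n)$, of terms of the form (uniformly bounded smooth function, with uniformly bounded derivatives) $\times \prod_{(i,b)\in J} \coth\frac{\theta_i - \eta_b + i0}{2}$. The elementary identity $\coth a \coth b \tanh(a-b) = \coth b - \coth a + \tanh(a-b)$ already exhibits this collapse in the first nontrivial case $m=1$, $n=2$, where $M\odd_3$ at the boundary equals $\coth\frac{\theta_1-\eta_1}{2} - \coth\frac{\theta_1-\eta_2}{2} - \tanh\frac{\eta_2-\eta_1}{2}$. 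Since for $S=-1$ the spaces $\hcal_n$ are antisymmetric, permutations of the $\thetav$- and $\etav$-variables leave $\gnorm{\cdot}{m\times n}$ invariant, so each matching term with $|J|=\ell$ may be relabelled to carry exactly $\prod_{j=1}^\ell \coth\frac{\theta_j - \eta_j + i0}{2}$, the singular part of the kernel \eqref{eq:kga}.

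Then I would apply Lemma~\ref{lemma:kernelcoth} termwise, folding the bounded smooth same-group factors together with $P$, $\tilde g$ and the weight $\exp(-E(\etav)^\alpha)$ into its hypotheses; the extra $\tanh$ factors are bounded with bounded derivatives and are therefore harmless through the underlying Lemma~\ref{lemma:kernelsingular}. This bounds the right-weighted half $\gnorm{f_{mn}\exp(-E(\etav)^\alpha)}{m\times n}$ of $\onorm{f_{mn}}{m\times n}$ by $c^{m+n+1} m^{\epsilon m} n^{\epsilon n}$ per term. For fixed $n$ the number of summands is at most $C^{m+n}=C^n\cdot C^m$, a constant times $C^m$, while $n^{\epsilon n}$ and $c^{n+1}$ are also constants, so summing over $J$ gives the asserted bound $c^{m+1} m^{\epsilon m}$ after renaming $\epsilon$ and $c$. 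The left-weighted half $\gnorm{\exp(-E(\thetav)^\alpha) f_{mn}}{m\times n}$ is identical upon interchanging the roles of $\thetav$ and $\etav$ (using the symmetry of the cross norm \eqref{eq:crossnorm1} under relabelling which block is weighted), which is precisely the kernel produced by the conjugate prescription $\hat f_{mn}$; hence the same estimate holds for $\hat f_{mn}$, and $\onorm{f_{nm}}{n\times m}$ follows by exchanging $m \leftrightarrow n$. Finiteness of all these norms is exactly condition~\ref{it:fdboundsreal}.

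The main obstacle is the matching decomposition of the second paragraph: establishing that the $mn$ apparent poles of $\prod_{i,b}\coth\frac{\theta_i-\eta_b+i0}{2}$ collapse, against the zeros of the same-group $\tanh$ factors, into matchings of size at most $\min(m,n)$, with both the coefficients and the number of summands controlled by $C^{m+n}$. This is the analytic-combinatorial heart of the argument and is the purpose of Appendix~\ref{app:modd}; granting it, everything else is bookkeeping around Lemma~\ref{lemma:kernelcoth}, together with the observation that fixing $n$ converts all $n$-dependent quantities into constants that can be absorbed into $c$.
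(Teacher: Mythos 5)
Your proposal is correct and follows essentially the same route as the paper: the matching decomposition you invoke is exactly Lemma~\ref{lemma:mlrsum} (that is the right citation, rather than Proposition~\ref{proposition:mprop}), and the paper likewise applies Lemma~\ref{lemma:kernelcoth} termwise, treating the leftover factors $M\odd_{m-k}(\hat\thetav)\,M\odd_{n-k}(\hat\etav)$ as multiplication operators of norm at most $1$ in the unmatched variables, then absorbing the number of summands into $c^{m+1}$ and handling the remaining norms by exchanging the roles of $\thetav$ and $\etav$ (and of $m$ and $n$). The only caveat is that the summand count in Lemma~\ref{lemma:mlrsum} is $2^{m+n}(\min(m,n)+1)!$ rather than $C^{m+n}$ uniformly in $m,n$; this is immaterial here precisely because $n$ is held fixed, as your final step already exploits.
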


\begin{proof}
 Representing $M\odd_{m+n}(\thetav+i\zerov,\etav +i\piv - i \zerov)$ as in Lemma~\ref{lemma:mlrsum}, we obtain with the notation introduced there,
\begin{equation}\label{eq:fmnothersum}
\begin{aligned}
   f_{mn}&(\thetav,\etav) =  \frac{\tilde g\big(p(\thetav)-p(\etav)\big)P(e^{\thetav},e^{-\etav})}{(2\pi i)^{m+n}}
    \\ & \times \ahponly{\mspace{-40mu}}
  \sum_{\substack{ k;(\ell_1,r_1),\dotsc,(\ell_k,r_k) \\ 1 \leq \ell_i \leq m < r_i \leq m+n}}
  \Big(\prod_{j=1}^k \coth \frac{\theta_{\ell_j}-\eta_{r_j-m}+i0}{2}\Big)
  (-1)^{s_{\ell r}} M\odd_{m-k}  (\hat \thetav) M\odd_{n-k} (\hat \etav).
  \end{aligned}
\end{equation}
Here $M\odd_{m-k}$, $M\odd_{n-k}$ are bounded by 1 (Proposition~\ref{proposition:mprop}(\ref{it:mrealbd})), hence they act as multiplication operators with norm $\leq 1$ with respect to the variables $\hat\thetav,\hat\etav$. Applying Lemma~\ref{lemma:kernelcoth} to each term of the sum \eqref{eq:fmnothersum}, knowing that the number of terms grows like $2^m$ at fixed $n$ (see Lemma~\ref{lemma:mlrsum}), then yields
\begin{equation}
   \gnorm{f_{mn}(\thetav,\etav)\exp(-E(\etav)^\alpha)}{m \times n} \leq c^{m+1}  m^{\epsilon m}
\end{equation}
where $c$ is a constant depending on $\epsilon,\alpha,g,n$ but independent of $m$.
Exchanging $m$ with $n$, and $\thetav$ with $\etav$, one obtains a similar result for $\gnorm{\exp(-E(\thetav)^\alpha)f_{mn}(\thetav,\etav)}{m \times n}$, and likewise for $f_{nm}$, which yields the result at $\zetav=(\thetav+i \zerov,\etav+i\piv-i\zerov)$ after a redefinition of constants. 
The computation at $\zetav=(\thetav-i\piv+i \zerov,\etav-i\zerov)$ is analogous, as $M\odd_{m+n}$ depends only on the differences of its variables. 
\cmpqed\end{proof}

With Proposition \ref{proposition:oddomeganorm}, we have shown that the $F\uidx{1,P,g}$ fulfill all conditions (FD). Therefore they yield $\omega$-local quadratic forms $A\uidx{1,P,g}$ via the series \eqref{eq:Fseries}. But our estimates suffice even for affiliation with the local algebras.

\begin{theorem}\label{theorem:oddoperator}
  Let $\omega(p)=p^\alpha$ with some $\alpha\in(0,1)$; let $P \in \Lambda\inv^\pm$, and $g \in \dcal_\alpha(\ocal)$ with some double cone $\ocal$; and let $F_j\uidx{1,P,g}$ be defined as in \eqref{eq:foddnew}. The associated quadratic form $A\uidx{1,P,g}\in\qf^\omega$ is $\omega$-closable, and its closure is affiliated with $\A(\ocal)$.
\end{theorem}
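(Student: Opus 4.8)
The plan is to assemble the pieces already established in this section, since the heavy lifting (verifying the conditions \ref{it:fdmero}--\ref{it:fdboundsimag}) is done: what remains is to combine $\omega$-locality with an $\omega$-closability argument and to invoke the affiliation criterion specialized to $S=-1$. As a preliminary reduction I would use Poincar\'e covariance exactly as in the proof of Proposition~\ref{proposition:evenop}: translations merely shift the argument of $g$ and boosts rescale the arguments of $P$ by a constant factor, so without loss of generality $\ocal=\ocal_r$. Propositions~\ref{proposition:odd14}, \ref{proposition:odd6}, and \ref{proposition:oddomeganorm} then furnish all of \ref{it:fdmero}--\ref{it:fdboundsimag} for this $r$, and Theorem~\ref{theo:conditionFD} immediately yields that $A\uidx{1,P,g}$ is $\omega$-local in $\ocal_r$.

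For $\omega$-closability I would apply Proposition~\ref{proposition:summable1}. The only input needed is the estimate from Proposition~\ref{proposition:oddomeganorm}, which bounds the expansion coefficients $\cme{m,n}{A\uidx{1,P,g}}=f_{mn}$: for each fixed $n$ and each $\epsilon>0$ there is $c>0$ with $\onorm{f_{mn}}{m \times n}+\onorm{f_{nm}}{n \times m}\leq c^{m+1}m^{\epsilon m}$ for all $m$. Substituting this into the summability condition \eqref{eq:summable1}, the $m$-th term is dominated by $\frac{2^{m/2}}{\sqrt{m!}}\,c^{m+1}m^{\epsilon m}$; using $m!\geq(m/e)^m$ this is at most $c\,(c\sqrt{2e})^m m^{(\epsilon-1/2)m}$. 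Choosing $\epsilon<\tfrac12$ in Proposition~\ref{proposition:oddomeganorm} (permitted, as $\epsilon$ is free), the factor $m^{(\epsilon-1/2)m}$ decays super-exponentially and beats the geometric factor, so the series converges and $A\uidx{1,P,g}$ is $\omega$-closable.

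Finally, for affiliation I would exploit that we work at $S=-1$. Proposition~\ref{proposition:locality}(\ref{it:aising}) asserts that in this case the equivalence between $\omega$-locality and affiliation of $A^-$ with $\A(\rcal)$ holds \emph{without} the domain hypothesis \eqref{eq:weyldomaincond}. Hence $\omega$-locality in $\ocal_r$ together with $\omega$-closability directly gives $A^-\affiliated\A(\ocal_r)=\A(\ocal)$, and in particular the stronger summability \eqref{eq:summable2} of Proposition~\ref{proposition:summable2} is not needed. The real difficulty of the whole argument lies not in this assembly but upstream, in Proposition~\ref{proposition:oddomeganorm} and the singular-integral estimates of Lemmas~\ref{lemma:kernelsingular}--\ref{lemma:kernelcoth} that produce the crucial bound $c^{m+1}m^{\epsilon m}$ with \emph{arbitrarily small} $\epsilon$; it is precisely the freedom to take $\epsilon<\tfrac12$ that makes the series summable against the $1/\sqrt{m!}$ weight, so controlling the $m$-dependence of that constant is the genuine crux.
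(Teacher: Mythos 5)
Your proposal is correct and follows essentially the same route as the paper's own proof: reduction to $\ocal=\ocal_r$ by covariance, $\omega$-locality via Propositions~\ref{proposition:odd14}, \ref{proposition:odd6}, \ref{proposition:oddomeganorm} and Theorem~\ref{theo:conditionFD}, closability via Proposition~\ref{proposition:summable1} with the bound $c^{m+1}m^{\epsilon m}$, $\epsilon<\tfrac12$, against $2^{m/2}/\sqrt{m!}$, and affiliation via Proposition~\ref{proposition:locality}(\ref{it:aising}). Your use of $m!\geq(m/e)^m$ in place of the paper's Stirling-plus-quotient-criterion step is an immaterial variation, and your observation that the stronger condition \eqref{eq:summable2} is not needed at $S=-1$ matches the paper exactly.
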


\begin{proof}
Again, by covariance, it suffices to consider $\ocal=\ocal_r$.
We saw through Propositions~\ref{proposition:odd14}, \ref{proposition:odd6} and \ref{proposition:oddomeganorm}, and via Theorem~\ref{theo:conditionFD}, that $A\uidx{1,P,g}$ is $\omega$-local in $\ocal_r$. Due to Proposition~\ref{proposition:oddomeganorm}, we have at fixed $n$,
\begin{equation}\label{eq:sumtoconverge}
   \sum_{m=0}^\infty \frac{2^{m/2}}{\sqrt{m!}} \Big( \onorm{ f_{mn} }{m \times n} + \onorm{ f_{nm}}{n \times m} \Big) 
  \leq  \sum_{m=0}^\infty c^{m+1} 2^{m/2} \frac{m^{\epsilon m}}{\sqrt{m!}}.
\end{equation}
By Stirling's approximation, the leading term in $m$ behaves like
\begin{equation}
  \frac{m^{\epsilon m }}{\sqrt{m!}} \sim \frac{m^{(\epsilon-1/2)m}e^{m/2}}{(2\pi m)^{1/4}} ,
\end{equation}
and choosing $\epsilon < 1/2$, the series then converges by the quotient criterion.
Therefore, Proposition~\ref{proposition:summable1} shows that $A\uidx{1,P,g}$ is $\omega$-closable, and Proposition~\ref{proposition:locality}(\ref{it:aising}) shows that its closure is affiliated with $\A(\ocal_r)$.
\cmpqed\end{proof}

\subsection{Reeh-Schlieder property}\label{sec:oddrs}

As for the even case in Sec.~\ref{sec:evenexamples}, we now ask whether we have found ``all'' local quantities. 
In analogy to \eqref{eq:qeven}, we define for any double cone $\ocal$, 
\begin{equation}
 \qf\odd(\ocal):= \operatorname{span} \big\{ A^{[1,P,g]} : P \in \Lambda\inv^\pm, g \in \dcal_\alpha(\ocal) \big\} \subset \qf^\omega.
\end{equation}
Its elements are $\omega$-closable due to Theorem~\ref{theorem:oddoperator}.
We then obtain the Reeh-Schlieder property for our quadratic forms in the following sense, with $P\odd$ the projector onto the odd particle number space in $\hcal$:

\begin{proposition}\label{proposition:oddrs}
  For any fixed $\ocal$, the inclusion $\qf\odd(\ocal)\Omega \subset P\odd \hcal$ is dense.
\end{proposition}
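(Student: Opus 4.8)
The plan is to verify the three hypotheses of Lemma~\ref{lemma:densitycrit} for $\qf_{\mathrm{d}}=\qf\odd(\ocal)$ and $M$ the set of odd natural numbers; since $F_0\uidx{1,P,g}=0$, the vectors $A\uidx{1,P,g}\Omega$ indeed lie in $P\odd\hcal$, so the lemma will then give density in $P_M\hcal=P\odd\hcal$. Hypothesis (i) is immediate from Theorem~\ref{theorem:oddoperator}. For (ii) I would note that translations act by $U(x)A\uidx{1,P,g}U(x)^\ast=A\uidx{1,P,g_x}$ with $g_x(y)=g(y-x)$; as $\tilde g_x(p)=e^{ip\cdot x}\tilde g(p)$ has the same modulus as $\tilde g$, the defining bounds \eqref{eq:dalpha} persist, and $\supp g_x\subset\ocal$ for $|x|$ small, so $g_x\in\dcal_\alpha(\ocal)$. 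The entire difficulty is thus concentrated in (iii). Here the situation differs fundamentally from the even case: because the recursion relations \ref{it:fdrecursion} are now nontrivial, every $A\uidx{1,P,g}\Omega$ necessarily populates \emph{all} odd sectors $P_{2k+1}\hcal$ at once, so (iii) is a genuinely coupled density statement across particle numbers rather than a family of independent single-sector problems.

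To check (iii) I would pass to the orthogonal complement. A vector $\psi=\sum_{m\in N}\psi_m$ with $\psi_m\in P_{m,\rho}\hcal$ that is orthogonal to all $(A\uidx{1,P,g})^-\Omega$ satisfies, by \eqref{eq:foddnew},
\[
\sum_{2k+1\in N}\frac{1}{(2\pi i)^{k}\,(2k+1)!}\int_{B_\rho}\overline{\psi_{2k+1}(\thetav)}\,\tilde g\big(p(\thetav)\big)\,P(e^{\thetav})\,M\odd_{2k+1}(\thetav)\,d\thetav=0
\]
for all $P\in\Lambda\inv^\pm$ and $g\in\dcal_\alpha(\ocal)$. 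Since $M\odd_m$ is real on $\rbb^m$ and nonzero almost everywhere (Appendix~\ref{app:modd}), the symmetric functions $R_m:=\overline{\psi_m}M\odd_m\in L^2(B_\rho)$ carry the same information as the $\psi_m$, and it suffices to prove $R_m=0$. As in Proposition~\ref{proposition:evenrs} I would then fix a single nonzero $g\in\dcal_\alpha(\ocal)$; its transform $\tilde g$ is the restriction of an entire function and hence $\tilde g(p(\cdotarg))\neq0$ almost everywhere on $B_\rho$.

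The decisive step is to decouple the sectors. The key algebraic input (from Appendix~\ref{app:poly}) is that, for every $L$, the space $\Lambda\inv^\pm$ contains polynomials all of whose monomials have length $\geq L$, hence vanishing identically in every sector with fewer than $L$ variables, while remaining nonzero almost everywhere in sectors with at least $L$ variables; moreover such elements form an ideal. I would then run a downward induction on the (finitely many) sectors of $N$. For the top sector $m^\ast=\max N$, choose $Q_0\in\Lambda\inv^\pm$ vanishing in all lower $N$-sectors and nonzero a.e.\ in $m^\ast$ variables; using $P=Q_0P'$ in the displayed identity, all terms with $m<m^\ast$ drop out and only $m^\ast$ survives, giving $\int_{B_\rho}R_{m^\ast}\,\tilde g(p(\thetav))\,(Q_0P')(e^{\thetav})\,d\thetav=0$ for all $P'\in\Lambda\inv^\pm$. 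Since the odd power sums $\psp_{2s+1}$ separate points (Appendix~\ref{app:poly}), the set $\{P'(e^{\thetav})\}$ is dense among symmetric functions on the compact ball by Stone--Weierstrass; as $\tilde g(p(\cdotarg))Q_0(e^{\cdotarg})\neq0$ a.e., this forces the symmetric function $R_{m^\ast}\,\tilde g(p(\cdotarg))Q_0(e^{\cdotarg})$, and therefore $R_{m^\ast}$, to vanish. Removing $m^\ast$ from $N$ and repeating — at each stage higher sectors are already annihilated and lower ones are killed by the chosen $Q_0$ — yields $R_m=0$, hence $\psi_m=0$, for every $m\in N$. This establishes (iii), and Lemma~\ref{lemma:densitycrit} completes the proof.

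I expect the inter-sector decoupling to be the main obstacle. It has no counterpart in the even case, where each observable was confined to a single particle-number sector and density could be checked sector by sector; the coupling here is forced by the recursion \ref{it:fdrecursion}, and a naive kinematic separation (e.g.\ localizing in total energy) fails because the energy ranges of different sectors overlap in the interior of the spectrum for large double cones. What makes the argument go through is instead the purely algebraic feature of $\Lambda\inv^\pm$ recorded in Appendix~\ref{app:poly} — the availability of polynomials that vanish in low-particle-number sectors — which lets one peel off the sectors one at a time and reduce to the single-sector density already used for the even observables.
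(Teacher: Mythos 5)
Your proof is correct, and its frame is the same as the paper's: you reduce to Lemma~\ref{lemma:densitycrit} with $M$ the odd integers, handle hypotheses (i) and (ii) identically (closability via Theorem~\ref{theorem:oddoperator}, translations acting on $g$ only), and arrive at the same orthogonality identity as in \eqref{eq:psivanishodd}. Where you genuinely diverge is the decoupling of the odd particle-number sectors. The paper invokes Proposition~\ref{prop:idense}: a single $P\in\Lambda\inv$ can \emph{simultaneously} approximate an arbitrary finite tuple of continuous symmetric functions, one per sector; this is a Stone--Weierstrass argument on the disjoint union of the rapidity balls, where separation of points \emph{across} sectors (tuples of different lengths) is proved via the $k\to\infty$ asymptotics of the odd power sums. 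Given that, all terms of the sum are killed at once, with no induction. You instead peel off sectors from the top down using elements of $\Lambda\inv^\pm$ that vanish identically in lower odd sectors --- these exist, namely the $J_{2s+1}$ of Lemma~\ref{lemma:jprop}, which vanish in $2k+1<2s+1$ variables and equal $\prod_{i<j}(e^{\theta_i}+e^{\theta_j})>0$ in exactly $2s+1$ variables --- and then apply single-sector Stone--Weierstrass (the single-sector separation of points is contained in the proof of Proposition~\ref{prop:idense}). Interestingly, the paper constructs the $J_{2s+1}$ but uses them only to exhibit observables that do not couple the vacuum to low particle numbers (end of Sec.~\ref{sec:outlook}); your argument promotes them to the engine of the Reeh--Schlieder proof, which makes the sector-decoupling mechanism more explicit and constructive, at the cost of an induction that the paper's simultaneous-approximation route avoids. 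One caveat: your stated ``key algebraic input'' --- elements with all monomials of length $\geq L$, nonzero a.e.\ in \emph{every} sector with at least $L$ variables --- is stronger than what Lemma~\ref{lemma:jprop} actually provides (it asserts nothing about sectors above $2s+1$, nor about even sectors); but your downward induction only ever needs non-vanishing in the current top sector and vanishing in the lower odd sectors, so the weaker, established statement suffices and the argument stands.
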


\begin{proof}
 As in the even case, small space-time translations leave each $A^{[1,P,g]} \in \qf\odd(\ocal)$ localized in $\ocal$. To apply Lemma~\ref{lemma:densitycrit}, it is then sufficient to show that the inclusion
 \begin{equation}\label{eq:psivanishodd}
   \operatorname{span} \Big\{ \bigoplus_{k=1}^{n}  \big(\thetav \mapsto \tilde g(p(\thetav)) M\odd_{2k+1}(\thetav) P(e^{\thetav} ) \big) : P \in \Lambda\inv^{\pm}\Big\} \subset \bigoplus_{k=1}^n L_{-}^2(\bcal_\rho^{2k+1})  
 \end{equation}
 is dense for any fixed $\rho>0$, $n\in\nbb$, and $g \in \dcal_\alpha(\ocal) \backslash \{0\}$, where $\bcal_\rho^{2k+1}\subset\rbb^{2k+1}$ is the ball of radius $\rho$, and $ L_{-}^2$ denotes the antisymmetric part of the $L^2$ space. Hence let $\psi_{2k+1} \in L_{-}^2(\bcal_\rho^{2k+1})$ ($k=0,\ldots,n$) be such that
 \begin{equation}
    0=\sum_{k=0}^{n} \int_{|\thetav| \leq \rho} d^{2k+1}\theta \; \overline{\psi_{2k+1}(\thetav)} \tilde g(p(\thetav)) M\odd_{2k+1}(\thetav) \, P( e^{\thetav} ) 
 \end{equation}
 for all $P \in \Lambda\inv^{\pm}$ (and some fixed $g$); we want to show that the $\psi_{2k+1}$ all vanish. By Proposition~\ref{prop:idense}, we know that
\begin{equation}
    0=\sum_{k=0}^{n} \int_{|\thetav| \leq \rho} d^{2k+1}\theta \; \overline{\psi_{2k+1}(\thetav)} \tilde g(p(\thetav)) M\odd_{2k+1}(\thetav) \, f_{2k+1}(\thetav)
 \end{equation}
 for any choice of continuous symmetric functions $f_{2k+1}$, since they can be uniformly approximated by $P( e^{\thetav} )$.
 Hence $\overline{\psi_{2k+1}(\thetav)} \tilde g(p(\thetav)) M\odd_{2k+1}(\thetav)$, which is symmetric, must vanish  a.e.; and since $\tilde g(p(\thetav)) M\odd_{2k+1}(\thetav)$ can vanish only on sets of measure zero due to analyticity, we conclude $\psi_{2k+1}=0$ a.e.\ for $k=0,\ldots,n$.
\cmpqed\end{proof}

Now we set $\qf(\ocal) := \qf\even(\ocal)+\qf\odd(\ocal)$; these $\qf(\ocal)$ are isotonous and covariant, they fulfill conditions (\ref{it:closable}) and (\ref{it:shift}) of Theorem~\ref{theorem:localgen} as before, and combining Propositions~\ref{proposition:evenrs} and \ref{proposition:oddrs}, we know that $\qf(\ocal)\Omega\subset\Hil$ is dense for every double cone $\ocal$. Let us consider the local net $\A_\qf$ generated by the spectral data of the operators in $\qf(\ocal)$, see \eqref{eq:qnet}. We can now apply Theorem~\ref{theorem:dualnet} and conclude: 
\begin{corollary}\label{cor:isingdual}
   For every double cone $\ocal$, the vector $\Omega$ is cyclic and separating for $\A_\qf(\ocal)$, and we have $\A_\qf(\ocal')' = \A(\ocal)$.  
\end{corollary}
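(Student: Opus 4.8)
The plan is to verify that the family $\ocal \mapsto \qf(\ocal)$ satisfies every hypothesis of Theorem~\ref{theorem:dualnet}, at which point the conclusion follows immediately from that theorem. Concretely, I must check conditions (\ref{it:closable})--(\ref{it:cov}) and then simply invoke the theorem to obtain cyclicity, the separating property, and the duality relation $\A_\qf(\ocal')' = \A(\ocal)$.

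First I would assemble conditions (i)--(iii) of Theorem~\ref{theorem:localgen} for each fixed double cone $\hat\ocal$, taking $\hat\qf=\qf(\hat\ocal)$. Closability of each element and affiliation of its closure with $\A(\hat\ocal)$ (condition (i)) follow from Proposition~\ref{proposition:evenop} for the even generators and Theorem~\ref{theorem:oddoperator} for the odd generators, extended to the span by linearity. The small-translation stability (condition (ii)) holds because a small translation merely shifts the support of $g$ while keeping it inside the open region $\ocal$, as already noted in the proofs of Propositions~\ref{proposition:evenrs} and \ref{proposition:oddrs}. For the density condition (iii), I would treat the even and odd particle-number sectors separately: since $\qf(\ocal)=\qf\even(\ocal)+\qf\odd(\ocal)$ and the even (resp.\ odd) generators produce vectors only in even (resp.\ odd) particle-number spaces, the density of $\{P_{N,\rho}A^-\Omega\}$ over a finite set $N\subset\nbb_0$ splits into the even part handled by Proposition~\ref{proposition:evenrs} and the odd part by Proposition~\ref{proposition:oddrs}. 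Combining the two gives density of $\qf(\ocal)\Omega$ in all of $\hcal$, as remarked in the text preceding the corollary.

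Next I would verify the two additional structural conditions (\ref{it:iso}) and (\ref{it:cov}). Isotony $\qf(\ocal_1)\subset\qf(\ocal_2)$ for $\ocal_1\subset\ocal_2$ follows because $\dcal(\ocal_1)\subset\dcal(\ocal_2)$ and $\dcal_\alpha(\ocal_1)\subset\dcal_\alpha(\ocal_2)$, so every generator of $\qf(\ocal_1)$ is already a generator of $\qf(\ocal_2)$; this holds for both the even and odd families, hence for their span. Covariance $U(x,\Lambda)\qf(\ocal)U(x,\Lambda)^\ast=\qf(\Lambda\ocal+x)$ is exactly the covariance of the defining families, already recorded for the even case in the remark before Proposition~\ref{proposition:evenrs}; the same computation (translations shift $g$, boosts rescale the arguments of $P$, cf.~\cite[Sec.~3.3]{BostelmannCadamuro:expansion}) applies to the odd family, and covariance of the sum follows.

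With all five hypotheses established, Theorem~\ref{theorem:dualnet} yields directly that $\Omega$ is cyclic and separating for $\A_\qf(\ocal)$ and that $\A_\qf(\ocal')'=\A(\ocal)$ for every double cone $\ocal$, which is the assertion of the corollary. I do not anticipate a genuine obstacle here, as the work has already been done in the preceding propositions; the only point requiring slight care is the even/odd decomposition in checking condition (iii), since Lemma~\ref{lemma:densitycrit} and the two Reeh-Schlieder propositions are stated for the even and odd sectors separately and must be recombined to cover an arbitrary finite set $N$.
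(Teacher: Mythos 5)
Your proposal is correct and follows exactly the paper's route: the paper's ``proof'' is precisely the paragraph preceding the corollary, which defines $\qf(\ocal)=\qf\even(\ocal)+\qf\odd(\ocal)$, notes isotony, covariance, and conditions (i)--(ii) of Theorem~\ref{theorem:localgen}, obtains the density condition by combining Propositions~\ref{proposition:evenrs} and \ref{proposition:oddrs}, and then invokes Theorem~\ref{theorem:dualnet}. If anything, you are slightly more careful than the paper in spelling out the even/odd recombination for hypothesis (iii) over a mixed finite set $N$ (which really rests on the proofs of the two Reeh--Schlieder propositions, i.e., their verification of Lemma~\ref{lemma:densitycrit}(iii), together with orthogonality of the even and odd sectors).
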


Moreover, via Theorem~\ref{theorem:localgen}, we obtain the Reeh-Schlieder property and Haag duality for the local net $\A$ in the massive Ising model; this provides an independent new proof for earlier known results \cite{Lechner:2005,Lechner:2008}.


\section{Discussion of results} \label{sec:outlook}

In this paper we have explicitly constructed a set of local observables in the massive Ising model, as an example for an integrable quantum field theory. To that end, we defined sequences of meromorphic functions $F_k$ which are, essentially, solutions of the well-known form factor equations (more precisely, conditions (FD1)--(FD6) in Theorem~\ref{theo:conditionFD}). Via the series \eqref{eq:Fseries}, these $F_k$ define local operators, for which the main technical point is to control convergence of the series in a suitable sense. Proposition~\ref{proposition:summable1} gives a sufficient criterion in this respect, which we can indeed verify in relevant examples (Sec.~\ref{sec:odd5}). In fact, we have found sufficiently many examples to generate \emph{all} local observables in a well-defined sense (Corollary~\ref{cor:isingdual}).

This indicates that our approach can overcome the difficulties inherent in the convergence of $n$-point functions in the form factor program, and give mathematical meaning to local fields  in a more general sense, i.e., as closed operators affiliated with local von Neumann algebras.

Let us comment on some individual aspects of our results.

 \subsection{Operator content of the Ising model}
 
In the Ising model, we have shown that the sets of fields $\qf(\ocal)$ that we constructed, separated into even and odd parts, have the Reeh-Schlieder property, and that they generate the local algebras $\A(\ocal)$ by duality (see Sec.~\ref{sec:oddrs}, in particular Corollary~\ref{cor:isingdual}). In this sense, we can claim that we have constructed the full operator content of the Ising model. In particular, this provides an alternative proof that the $\A(\ocal)$ are nontrivial for any open $\ocal$, which was already shown in \cite{Lechner:2005}.

Specific elements of our class of observables include the order parameter $A\uidx{1,1,g}$ \cite{SchroerTruong:1978} and the energy density $A\uidx{2,P,g}$ with $P(x_1,x_2)= -\frac{i}{8} \mu^2 (x_1 x_2 + x_1^{-1} x_2^{-1} + 2)$, which plays an important role, e.g., in the study of quantum energy inequalities in integrable models \cite{BostelmannCadamuroFewster:2013,BostelmannCadamuro:2016}.

All operators that we have constructed are, in principle, pointlike fields smeared with test functions $g$ in space and time; their Fourier transform $\tilde g$ appears in the rapidity-space expansion coefficients $F_k$. We will elaborate more on their functional analytic aspects in Sec.~\ref{sec:pointlike}. Let us remark here that it would be sufficient to use averaging only in time; our results would be the same, but Poincar\'e covariance of the local observables would be less manifest.

The Laurent polynomial $P$ in the coefficients $F_k$ serves to enumerate the field content. For the Reeh-Schlieder property, usual polynomials $P\in\Lambda\inv$ in the odd case and $P\in\Lambda_{2k}$ in the even case would suffice, and indeed a subalgebra of $\Lambda\inv$ would already have the relevant density property (cf.\ the proof of Proposition~\ref{prop:idense}). However, the generalization to Laurent polynomials is important for applications: it allows us to include also derivatives of our fields, which act on the coefficients by multiplication with $p^j(\zetav)$, and related quantities such as the averaged energy density. 

It is interesting to note (cf.~the end of Sec.~2 in \cite{CardyMussardo:descendant}) that we obtain, among others, local observables $A$ that do not couple the vacuum with states of low particle number; that is, for some $n \in \nbb$, one has $A\Omega \perp \hcal_m$ for all $m < n$, but $A\Omega \not\perp \hcal_{n}$. In that respect, these $A$ are analogous to $n$-th Wick powers of a free field. Indeed, if $n$ is even, then every $A\uidx{n,P,g}$ has this property, and for $n$ odd, one can construct such operators $A\uidx{1,P,g}$ by including a factor of $J_{n}$ in the polynomial $P$, as discussed in Appendix~\ref{app:poly} (see Lemma~\ref{lemma:jprop}).

 \subsection{Pointlike fields in integrable models}\label{sec:pointlike}

 As mentioned, the observables we constructed in the Ising model have the structure of local averages of pointlike quantum fields. Formally replacing the averaging function $g$ with a delta function, and hence its Fourier transform with a constant, reproduces the well-known expressions from the form factor program. This ansatz can likely be carried over to other integrable models (see Sec.~\ref{sec:othermodels}). In this respect, the structure of our local fields is in line with expectations. However, our mathematical interpretation of these fields is very different from the usual approach: We construct them as (unbounded) operators, but we do not show, or require, that they exist as operator-valued distributions on a common invariant domain, neither in the axiomatic setting by Wightman \cite{StrWig:PCT} nor -- more aligned with our choice of test functions -- in its generalization by Jaffe \cite{Jaffe:1967}. 
 
 In particular, the closed extensions of our averaged fields $A\in\qf(\ocal)$ have a common dense core $\fpno \ni \Omega$, which is however not invariant under their action. Consequently, we do not make any statement about products of the field operators or about their $n$-point functions, beyond the 2-point function which exists since $\Omega \in \dom A^{-\ast}$. Instead, we can show -- at least in the Ising model -- that the closures of the field operators are affiliated with the (abstractly defined) local von Neumann algebras $\A(\ocal)$, and indeed that they generate the algebras $\A(\ocal)$ by duality.

 Note that, particularly for the Ising model, our claim is \emph{not} that $n$-point functions of local fields do not exist.  In fact, there are alternative constructions of (likely) the same model in a  Euclidean setting, where the Osterwalder-Schrader axioms can be verified \cite{PalmerTracy:ising}. Therefore one would expect that the expressions from the form factor program  actually do yield Wightman fields in the usual sense, fulfilling polynomial $H$-bounds, and hence field products should exist, even when using Schwartz-class test functions. But in a more general setting, the Wightman axioms might be too strict; and with our methods, we can interpret the fields meaningfully as local objects without the need of controlling the singular nature of operator products. In this sense, our results demonstrate that the $n$-point functions are \emph{not conceptually necessary}.
 
 An interesting question arises for the products of operators localized at spacelike distances. Namely, let $\ocal_1$ and $\ocal_2$ be two spacelike separated regions, and $A_1 \in \qf(\ocal_1)$, $A_2 \in \qf(\ocal_2)$. Even if $A_1^-$ and $A_2^-$ do not \emph{a priori} have a common invariant domain, they are affiliated with the commuting von Neumann algebras $\A(\ocal_1)$, $\A(\ocal_2)$, which means that $|A_1^-|$ and $|A_2^-|$ spectrally commute. Therefore, the product $A_1^-A_2^-=A_2^-A_1^-$ can be defined on a  suitably chosen domain. Thus there is hope to establish an operator product expansion with methods as in \cite{Bostelmann:products}, though the technical situation described there is somewhat different.

 \subsection{Other integrable models}\label{sec:othermodels}

 While we have carried out our full construction only in the massive Ising model, there is reason to believe that similar methods can be applied in other models as well. As far as a single species of massive scalar particles is concerned, the expansion \eqref{eq:expansionrecall} and the characterization of locality in Theorem~\ref{theo:conditionFD} apply independent of the scattering function $S$, and so do the criteria developed in Sec.~\ref{sec:criterion}. Candidates for local observables (i.e., form factors) are known in some of these models, most notably for the sinh-Gordon model \cite{FringMussardoSimonetti:1993}. Hence our methods should be applicable to the sinh-Gordon case in principle. Care is needed, however, since the form factors $F_k$ there have a more intricate structure, complicating the estimates at large $k$. Also, the extra condition \eqref{eq:weyldomaincond} in Proposition~\ref{proposition:locality} will need to be established outside the case $S=-1$.

 The situation is similar in models with a richer particle spectrum, such as the $O(N)$ nonlinear sigma models. Here form factors have been computed \cite{BabujianFoersterKarowski:2013}, and progress has been made towards the construction of the local algebras via wedge-local fields \cite{AL2017,Alazzawi:2015}. The expansion \eqref{eq:expansionrecall}, Theorem~\ref{theo:conditionFD}, and the criteria in Sec.~\ref{sec:criterion} have not yet been established for this case, but would be expected to generalize quite directly, using matrix-valued coefficient functions $F_k$. A challenge, of course, are the ever more complicated estimates on higher-order integral kernels $F_k$.
 
 A quite different problem arises in models with bound states, i.e., where the scattering function $S$ has poles in the physical strip $0 < \im \zeta < \pi$, such as the Bullough-Dodd, $Z(N)$-Ising, and sine-Gordon models. Here the form factor equations need to be modified, but solutions to them are known (see \cite{FringMussardoSimonetti:bulloughdodd,BabujianKarowski:zn,BabujianFringKarowskiZapletal:1993}, among others). However, on the side of the operator algebraic approach, the wedge-local fields can no longer have the simple form \eqref{eq:wlfield}. Work towards a construction of wedge-local fields and of local algebras $\A(\ocal)$ in this case has recently been carried out by one of the authors together with Y.~Tanimoto \cite{CT2015,CT2017,CTsine}. This gives hope that our present methods, with a suitably modified version of Theorem~\ref{theo:conditionFD}, can be applied to models with bound states as well.
 
 In particular, a generalization of our results might imply nontriviality of the local algebras $\A(\ocal)$ in cases where other methods have so far been unable to resolve this question: Our construction does not rely on the modular nuclearity condition or the split property for wedge algebras, rather it directly shows the existence of closable local operators and hence of their polar data.


\appendix
\normalsize

\section{Symmetric Laurent functions}\label{app:poly}

We discuss here a certain class of Laurent polynomials which are relevant in our constructions of operators in the Ising model, but more generally for ``descendant fields'' in integrable models of quantum field theory; see, e.g., \cite{CardyMussardo:descendant, Christe:descendant, FringMussardoSimonetti:1993}.

To that end, for $n \in \nbb$, we denote $\Lambda_n = \cbb\lbrack x_1,\dotsc,x_n\rbrack^{ \mathfrak{S}_{n}}$ the algebra of symmetric polynomials in $n$ variables, and $\Lambda_n^\pm = \cbb\lbrack x_1^{\pm 1},\dotsc,x_n^{\pm 1} \rbrack^{ \mathfrak{S}_{n}}$ the algebra of symmetric Laurent polynomials (i.e., polynomials which can contain negative powers of the $x_i$). 

For our purposes in particular in Sec.~\ref{sec:oddexamples}, we need a notion of Laurent polynomials ``independent of the number of variables''. Let $\Lambda$ be the algebra of symmetric functions (see, e.g., \cite{Macdonald:symmetric}), and $\varphi_n:\Lambda \to \Lambda_n$ the homomorphism that reduces a symmetric function to a polynomial in $n$ dimensions, i.e., $\varphi_n P (\xv) = P(x_1,\dotsc,x_n,0,0,\dotsc)$. Following \cite{SerVes:jacklaurent}, we define the \emph{algebra of symmetric Laurent functions} as $\Lambda^\pm=\Lambda \otimes \bar\Lambda$, where $\bar \Lambda$ is a copy of $\Lambda$ but read with respect to the ``inverse variables'' $x_i^{-1}$. More formally, we set $\varphi_n^\pm:\Lambda^\pm \to \Lambda_n^\pm$, $\varphi_n^\pm(P \otimes Q)(\xv) = (\varphi_n P)(x_1,\dotsc,x_n) \cdot  (\varphi_n Q)(x_1^{-1},\dotsc,x_n^{-1})$; this is compatible with $\varphi_n$ with respect to the natural inclusions $\Lambda\subset\Lambda^\pm$, $\Lambda_n\subset\Lambda_n^\pm$. The ring $\Lambda^\pm$ is freely generated by the power sum functions $\psp{k} = \sum_j x_j^k$, $k \in \zbb\backslash \{0\}$.  In the following, we will often write $P(\xv)$ rather than $\varphi_n^\pm P(\xv)$ for $\xv \in \rbb^n$, where no confusion can arise. 

For our purposes, we are particularly interested in functions with the property 
\begin{equation}\label{eq:genreduce}
   P(y,-y, \xv) = P(\xv)\quad \text{for all $n \in \nbb_0$ and $\xv \in \rbb^n$}.
\end{equation}
More formally, for  $y \in \rbb_+$, let $\alpha_y:\Lambda\to\Lambda$ be the homomorphism that substitutes $x_1 \to y$, $x_2 \to -y$, $x_{j+2} \to x_j$, and set $\alpha_y^\pm := \alpha_y \otimes \alpha_{1/y}$. Let $\Lambda\inv\subset \Lambda$ (respectively, $\Lambda\inv^\pm\subset \Lambda^\pm$) be the subalgebra of invariants under all $\alpha_y$ (respectively, $\alpha_y^\pm$). We are interested in characterizing these subalgebras.

\begin{proposition}\label{proposition:invargen}
   $\Lambda\inv$ and $\Lambda\inv^\pm$ are generated by the odd power sums $\psp{2k+1}$, $k \in \nbb_0$ and $k \in \zbb$, respectively. 
\end{proposition}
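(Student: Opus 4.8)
The plan is to reduce the statement to a translation-invariance problem after recording how the substitution homomorphisms act on the free generators. A direct computation gives, for $k \in \zbb\setminus\{0\}$,
$\alpha_y^\pm(\psp{k}) = \psp{k} + \big(y^k + (-y)^k\big) = \psp{k} + (1+(-1)^k)y^k$,
and likewise $\alpha_y(\psp{k}) = \psp{k} + (1+(-1)^k)y^k$ for $k\geq 1$; here one uses that $\alpha_y$ sends $x_1\mapsto y$, $x_2\mapsto -y$ and that on the inverse-variable copy the relevant substitution is $\alpha_{1/y}$, so that a negative exponent $k$ produces the summand $y^k+(-y)^k$ as well. Consequently the odd power sums $\psp{2k+1}$ are fixed by every $\alpha_y$ (resp.\ $\alpha_y^\pm$), so the subalgebra they generate is contained in $\Lambda\inv$ (resp.\ $\Lambda\inv^\pm$). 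The content of the proposition is the reverse inclusion.

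For that, I would use that $\Lambda$ is a polynomial ring freely generated by the $\psp{k}$ and regard it as the polynomial ring $R[\,\psp{2m}: m\geq 1\,]$ over the subalgebra $R$ generated by the odd power sums. On the even generators $\alpha_y$ acts as the translation $\psp{2m}\mapsto \psp{2m}+2y^{2m}$ while fixing $R$ pointwise. Writing $t_m := \psp{2m}$ and $v_y := (2y^{2m})_m$, invariance of a given $P$ under all $\alpha_y$ becomes the family of polynomial identities $P(\mathbf t) = P(\mathbf t + v_y)$ for all $y>0$, in which only finitely many $t_m$ (indexed by a finite set $S$ of even exponents) occur; the goal is then to conclude $P\in R$.

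I would finish in two steps. Since $\alpha_y$ is invertible with inverse translating by $-v_y$, iterating yields $P(\mathbf t + k v_y) = P(\mathbf t)$ for all $k\in\zbb$; for fixed $\mathbf t$ and $y$ the left side is a polynomial in $k$ that is constant on $\zbb$, hence constant, so $P(\mathbf t + s v_y)=P(\mathbf t)$ for all $s\in\rbb$. Thus $P$ is invariant under translation by the entire real span of $\{v_y: y>0\}$, and it remains to show this span is all of $\rbb^S$. This is the one genuinely non-formal point, and the main obstacle: choosing $|S|$ distinct values $y_i$, the matrix $(y_i^{2m})_{i,\,m\in S}$ is a generalized Vandermonde matrix with distinct positive nodes and distinct (possibly negative) integer exponents, whose nonvanishing follows by factoring the lowest power out of each row and reducing to an ordinary Vandermonde determinant, equivalently because $y\mapsto y^{2m}$ form a Chebyshev system on $(0,\infty)$. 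Hence the $v_y$ span $\rbb^S$, $P$ is invariant under all translations in the $t_m$, and therefore independent of them, giving $P\in R$.

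The Laurent case $\Lambda\inv^\pm$ is handled by the same argument, using that $\Lambda^\pm$ is freely generated by $\{\psp{k}: k\in\zbb\setminus\{0\}\}$; the only change is that the even exponents $2m$ now range over $\zbb\setminus\{0\}$, which the generalized Vandermonde input already accommodates, so negative exponents cause no difficulty. Everything apart from the spanning statement for the translation vectors $v_y$ is formal manipulation of the free polynomial structure together with the observation that a polynomial periodic under a single translation is constant in that direction.
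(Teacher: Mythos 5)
Your proof is correct and takes essentially the same route as the paper's: both rest on the free generation of $\Lambda^\pm$ by the power sums, the observation that $\alpha_y^\pm$ fixes the odd power sums and translates each even one by $2y^{2m}$, the extension of the iterated action from integer to real parameter values by polynomiality, and the linear independence of the distinct power functions $y \mapsto y^{2m}$ on $(0,\infty)$ --- the paper phrases this last step by differentiating in the iteration parameter and comparing coefficients of $y^{2k}$, while you phrase it (equivalently, by duality) as the statement that the translation vectors $v_y$ span $\rbb^S$. The only blemish is your parenthetical claim that the generalized Vandermonde determinant reduces to an ordinary one after factoring the lowest power out of each row (the exponents need not become consecutive, so that reduction fails in general), but your alternative justification via the Chebyshev-system property of the powers $y^{2m}$ is valid, so this does not affect the correctness of the argument.
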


\begin{proof}
 It is clear that $\psp{2k+1}\in\Lambda\inv^\pm$. Now a generic element $P \in \Lambda^\pm$ is of the form
 \begin{equation}
     P = Q(\psp1,\psp{-1},\dotsc,\psp{2k+1},\dotsc,\psp{2},\psp{-2},\dotsc,\psp{2k},\dotsc)
 \end{equation}
 with some polynomial $Q$. Since $\alpha_y^\pm(\psp{2k}) = \psp{2k} + 2 y^{2k}$, we have for any $n \in \nbb$ and $y \in \rbb_+$,
 \begin{equation}
      (\alpha_y^\pm)^n P = Q(\dotsc,\psp{2k+1},\dotsc,\psp{2k} + 2 n y^{2k},\dotsc).
 \end{equation}
 If now $P \in \Lambda\inv^\pm$, then this expression is constant in $n$, even if we extend the r.h.s.\ to $n \in \rbb$ as a polynomial. Taking derivatives by $n$, this means
 \begin{equation}
    0 =  \sum_{k \neq 0} \frac{\partial Q}{\partial \psp{2k}} y^{2k}.
 \end{equation}
 This (finite) sum must vanish at every order in $y$; hence $Q$ is independent of all $\psp{2k}$, $k \neq 0$. This shows the statement for $\Lambda\inv^\pm$; the one for $\Lambda\inv$ is analogous.
\cmpqed\end{proof}

Hence we have a simple characterization of the invariant subalgebras. Other generators have been constructed in \cite{CardyMussardo:descendant, Christe:descendant}; we include them here for completeness: Let $\sigma_k = \sum_{i_1 < \dotsb < i_k} x_{i_1}\dotsm x_{i_k} \in \Lambda$ be the elementary symmetric polynomials, $k \in \nbb$. 
For $s \in \nbb_0$, set
\begin{equation}\label{eq:idef}
   I\lidx{2s+1} := \det \begin{pmatrix}                              
                             \sigma\lidx{2} & 1 & 0 & \dots & 0
                             \\
                             \sigma\lidx{4} & \sigma\lidx{2} & 1 & \dots & 0
                             \\
                             \vdots & & \ddots & \ddots & \vdots \\
                             \sigma\lidx{2s} & \sigma\lidx{2s-2} &  \ldots & \sigma\lidx{2} & 1
                             \\
                             \sigma\lidx{2s+1} & \sigma\lidx{2s-1} &  \ldots & \sigma\lidx{3} & \sigma\lidx{1}
                            \end{pmatrix} \in \Lambda.
\end{equation}
We also set $I\lidx{-2s-1} (x_i) = I\lidx{2s+1}(x_i^{-1}) \in \Lambda^\pm$. They have the following properties.

\begin{lemma}\label{lemma:iprop}
  \begin{enumerate}[(a)]
   \item \label{it:recurse}
   We have for every $s \in \nbb_0$,
   \begin{equation}\label{eq:irecurse}
      (-1)^{s} \sigma\lidx{2s+1} = I\lidx{2s+1} - \sigma\lidx{2} I\lidx{2s-1} +  \sigma\lidx{4} I\lidx{2s-3} - \ldots + (-1)^s \sigma\lidx{2s} I\lidx{1}.
   \end{equation}
   \item \label{it:ihom} $I_{2s+1}$ is homogeneous of degree $2s+1$, $s \in \zbb$.
   \item \label{it:igen} The $I_{2s+1}$ for $s \in \nbb_0$ ($s \in\zbb$) generate $\Lambda\inv$ ($\Lambda\inv^\pm$).
   \item \label{it:nonneg} For each $s$, the coefficients of $I_{2s+1}$ are all nonnegative.%
      \footnote{This motivates our sign convention for $I_{2s+1}$, which agrees with \cite{Christe:descendant} but differs from \cite{CardyMussardo:descendant,FringMussardoSimonetti:1993}.}
  \end{enumerate}
\end{lemma}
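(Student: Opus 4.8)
The plan is to establish part (\ref{it:recurse}) first by a direct cofactor expansion of the determinant \eqref{eq:idef}, and then to repackage the resulting recursion as a generating-function identity from which the remaining three parts fall out. Writing the $(s+1)\times(s+1)$ matrix in \eqref{eq:idef} as $A_{i,k}=\sigma_{2(i-k+1)}$ for $i\le s$ (with the conventions $\sigma_0=1$, $\sigma_{<0}=0$, so that the superdiagonal of $1$'s is automatic) and $A_{s+1,k}=\sigma_{2(s-k+1)+1}$ in the last row, I would expand $\det A$ along its first column, whose entries are $\sigma_2,\sigma_4,\dots,\sigma_{2s}$ followed by $\sigma_{2s+1}$. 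The essential observation is that the minors factor along the superdiagonal of $1$'s: deleting row $s+1$ and column $1$ leaves a lower-triangular matrix with unit diagonal, contributing $(-1)^s\sigma_{2s+1}$; while deleting a row $i\le s$ and column $1$ yields a block-triangular matrix whose top-left block is lower-triangular with determinant $1$ and whose bottom-right block is exactly the defining matrix of $I_{2(s-i)+1}$, so that $M_{i,1}=I_{2(s-i)+1}$. Collecting the cofactor signs $(-1)^{i+1}$ then reproduces \eqref{eq:irecurse} verbatim. This is the combinatorial core of the lemma and the step demanding the most care, as one must track both the block structure and the signs of the minors.

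I would then recast (\ref{it:recurse}) as a generating-function statement. With $E(t)=\prod_i(1+x_i t)=\sum_k\sigma_k t^k$ and $G(t):=\sum_{s\ge0}I_{2s+1}t^{2s+1}$, the recursion \eqref{eq:irecurse} is precisely the Cauchy-product identity $\big(\sum_s(-1)^s\sigma_{2s}t^{2s}\big)\,G(t)=\sum_s(-1)^s\sigma_{2s+1}t^{2s+1}$, so that by Euler's tangent-addition formula $G(t)=\tan\big(\sum_j\arctan(x_j t)\big)$. Part (\ref{it:ihom}) then follows immediately by induction on \eqref{eq:irecurse}, since every summand on its right-hand side is homogeneous of degree $2s+1$ (base case $I_1=\sigma_1$); for negative indices one invokes $I_{-2s-1}(x_i)=I_{2s+1}(x_i^{-1})$.

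For (\ref{it:igen}), I would first note that $G(t)$ is invariant under $\alpha_y^\pm$: adjoining the pair $\pm y$ multiplies $E$ by the even polynomial $1-y^2t^2$, hence multiplies the numerator and denominator of $G$ by a common factor that cancels in the ratio; this shows each $I_{2s+1}\in\Lambda\inv$ (resp.\ $\Lambda\inv^\pm$). Next, since $G(t)=\tan\Theta$ with $\Theta=\sum_j\arctan(x_j t)=\sum_{m\ge0}\frac{(-1)^m}{2m+1}\psp{2m+1}t^{2m+1}$, the part of $G$ that is linear in the power sums is $\Theta$ itself, which gives $I_{2s+1}=\frac{(-1)^s}{2s+1}\psp{2s+1}+(\text{polynomial in }\psp1,\dots,\psp{2s-1})$. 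As the leading coefficient is nonzero, this triangular relation lets me invert inductively between $\{I_{2s+1}\}$ and the generators $\{\psp{2s+1}\}$ of Proposition~\ref{proposition:invargen}, proving that the two families generate the same algebra.

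Finally, for (\ref{it:nonneg}) the determinant itself is unsuitable because of cancelling signs, so I would instead adjoin one variable at a time. The tangent-addition formula gives the recursion $G_n(t)=\frac{G_{n-1}(t)+x_n t}{1-x_n t\,G_{n-1}(t)}$ in the number of variables; expanding the denominator as the geometric series $\sum_{m\ge0}\big(x_n t\,G_{n-1}(t)\big)^m$ and arguing by induction on $n$ (base case $G_1(t)=x_1 t$) shows that all coefficients of $G_n(t)$, as polynomials in $x_1,\dots,x_n$, are nonnegative. Passing to the stable limit then yields nonnegativity of the coefficients of $I_{2s+1}$. The main obstacle throughout is obtaining the clean generating-function identity in the first place—i.e.\ the block-triangular factorization of the minors in step (\ref{it:recurse})—after which the reduction to the elementary series manipulations for (\ref{it:ihom})--(\ref{it:nonneg}) is routine.
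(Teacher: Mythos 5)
Your proof is correct, and for half of the lemma it follows the paper's own route: the paper also proves (a) by expanding the determinant \eqref{eq:idef} along the first column (your block-triangular analysis of the minors is exactly what its one-sentence proof compresses), and (b) by induction on \eqref{eq:irecurse}. The divergence is in (c) and (d). For (c), the paper shows $I_{2s+1}\in\Lambda\inv$ by induction on $s$ using $\alpha_y(\sigma_k)=\sigma_k-y^2\sigma_{k-2}$, and derives the triangular relation $\psp{2s+1}=(-1)^s(2s+1)I_{2s+1}+Q_s(I_1,\dotsc,I_{2s-1},\psp{1},\dotsc,\psp{2s-1})$ from Newton's identities; you obtain both facts at once from the single generating-function identity $G(t)=\tan\big(\sum_j\arctan(x_jt)\big)$ --- invariance because the even factor $1+y^2t^2$ cancels between numerator and denominator, triangularity because $\tan\Theta-\Theta$ is at least cubic in the odd power sums. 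The skeleton is the same (triangularity against the generators from Proposition~\ref{proposition:invargen}), but your packaging unifies the bookkeeping. The genuine difference is (d): the paper outsources nonnegativity to a theorem of Kuipers and Meulenbeld, whereas your recursion $G_n=(G_{n-1}+x_nt)/(1-x_ntG_{n-1})$, expanded by the geometric series (legitimate as formal power series since $x_ntG_{n-1}=O(t^2)$) and combined with induction on the number of variables plus stability of monomial coefficients, is a correct, self-contained, elementary replacement for that citation --- arguably an improvement, since it removes the only external ingredient of the lemma. If you write this up, make two points explicit: the linear-part extraction in (c) uses the algebraic independence of the odd power sums, and the passage to $\Lambda\inv^\pm$ in (c) and to negative $s$ in (d) goes through the substitution $x_i\to x_i^{-1}$, as in the paper.
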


\begin{proof}
  Part (\ref{it:recurse}) follows by expanding the determinant \eqref{eq:idef} by the first column. From there, (\ref{it:ihom}) follows by induction for $s \geq 0$, and is then immediate for $s<0$.
  For part (\ref{it:igen}), first note that $I_{2s+1}\in\Lambda\inv$ ($s \geq 0$), which can be seen from (\ref{it:recurse}) by induction on $s$, using the relation $\alpha_y(\sigma_k) = \sigma_k - y^2 \sigma_{k-2}$. Now by Newton's identities, each $\sigma_k$ can be expressed as $\frac{1}{k}(-1)^{k-1} \psp{k}$ plus a polynomial in the $\psp{j}$, $j<k$. Using this repeatedly in \eqref{eq:irecurse}, we find that
  \begin{equation}\label{eq:piq}
      \psp{2s+1} = (-1)^s (2s+1) I_{2s+1} +Q_{s}(I_1,I_3,\dotsc,I_{2s-1},\psp{1},\psp{3},\dotsc,\psp{2s-1})
  \end{equation}
  with some polynomial $Q_{s}$. (Note that no even power sums occur on the r.h.s.\ due to Proposition~\ref{proposition:invargen}.)  Now applying \eqref{eq:piq} recursively, we can express every $\psp{2s+1}$ only in terms of the $I_{2s+1}$, showing that the $I_{2s+1}$ ($s \geq 0$) generate $\Lambda\inv$ by Proposition~\ref{proposition:invargen}. The proof for $\Lambda\inv^\pm$ is analogous. 
  Part (\ref{it:nonneg}) is a special case of a result by Kuipers and Meulenbeld \cite[Theorem 1]{KuiMeu:symmpoly}, applied on any fixed $\mathbb{R}^n$.
\cmpqed\end{proof}

Let us also consider the following combinations of the $I_{2s+1}$: for $s \geq 1$,
\begin{equation}\label{eq:jdef}
   J\lidx{2s+1} := \det \begin{pmatrix}
                             I\lidx{4s-1} & I\lidx{4s-3} & \dots & I\lidx{2s+1} \\ 
                             I\lidx{4s-3} & I\lidx{4s-5} & \dots & I\lidx{2s-1} \\ 
                               \vdots & & \ddots & \vdots \\
                             I\lidx{2s+1} & I\lidx{2s-1} & \dots & I\lidx{3}  
                             \end{pmatrix} \in \Lambda\inv.
\end{equation}
These functions are of interest because they vanish in less than $2s+1$ (but odd) dimensions.

\begin{lemma}
 \label{lemma:jprop}
  The $J\lidx{2s+1}$, $s \in \nbb$, enjoy the following properties: 
  \begin{enumerate}[(a)]
   \item \label{it:jprod} For $\xv \in \rbb^{2s+1}$,
   \begin{equation}\label{eq:jprod}
      J_{2s+1}(\xv) = \prod_{1 \leq i < j \leq 2s+1} (x_i + x_j).
    \end{equation}
   \item \label{it:jzero} For any $k<s$, and $\xv \in \rbb^{2k+1}$, we have $J\lidx{2s+1}(\xv) =  0$.     
  \end{enumerate}
\end{lemma}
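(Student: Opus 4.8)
The plan is to establish part~(\ref{it:jzero}) first and then feed it into part~(\ref{it:jprod}). As a preliminary I would rewrite $J_{2s+1}$ as a genuine Hankel determinant: the entry of the matrix in \eqref{eq:jdef} at position $(i,j)$ is $I_{4s+3-2i-2j}$, and reversing the order of both the rows and the columns (which leaves the determinant unchanged) turns it into $I_{2(i+j)-1}$, so that
\[
  J_{2s+1} = \det\big(I_{2(i+j)-1}\big)_{i,j=1}^{s},
\]
a Hankel determinant built from the sequence $(I_{2m+1})_{m\geq 1}$.

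For part~(\ref{it:jzero}) I fix $n=2k+1$ variables with $k<s$. The key point is that the recursion in Lemma~\ref{lemma:iprop}(\ref{it:recurse}) degenerates into a linear recurrence here: reading \eqref{eq:irecurse} with its index set to $m$, and using $\sigma_{2m+1}=0$ when $2m+1>n$ together with $\sigma_{2p}=0$ when $2p>n$, one obtains
\[
  I_{2m+1} = \sum_{p=1}^{k} (-1)^{p+1}\,\sigma_{2p}\, I_{2(m-p)+1} \qquad (m\geq k+1).
\]
Every entry of the $(k+1)$-st column of $\big(I_{2(i+j)-1}\big)_{i,j=1}^{s}$ carries index $\geq k+1$, so this recurrence expresses that column as a fixed linear combination of columns $1,\dots,k$. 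Thus the first $k+1$ columns are linearly dependent, and since $s\geq k+1$ the determinant vanishes, which is exactly~(\ref{it:jzero}).

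For part~(\ref{it:jprod}) I work in $n=2s+1$ variables and argue by divisibility. Each $I_{2m+1}$ lies in $\Lambda\inv$ (Lemma~\ref{lemma:iprop}(\ref{it:igen})), hence $I_{2m+1}(y,-y,\,\cdot\,)=I_{2m+1}(\,\cdot\,)$; substituting $x_i=-x_j$ therefore collapses every matrix entry to the same $I_{2m+1}$ in the remaining $2s-1=2(s-1)+1$ variables, and part~(\ref{it:jzero}) with $k=s-1<s$ shows that $J_{2s+1}$ then vanishes. So $J_{2s+1}$ vanishes on each hyperplane $x_i=-x_j$, the pairwise non-proportional (hence coprime) linear forms $x_i+x_j$ all divide it, and so does their product. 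Since each entry $I_{2(i+j)-1}$ is homogeneous of degree $2(i+j)-1$ by Lemma~\ref{lemma:iprop}(\ref{it:ihom}), the determinant is homogeneous of degree $2s^2+s=\binom{2s+1}{2}=\deg\prod_{i<j}(x_i+x_j)$, whence $J_{2s+1}=c\prod_{1\leq i<j\leq 2s+1}(x_i+x_j)$ for a constant $c$.

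It remains to pin down $c=1$, and I expect this to be the only genuinely delicate step. I would compare the coefficient of the leading monomial $\prod_{i=1}^{2s+1}x_i^{\,2s+1-i}$: on the product side it is manifestly $1$ (obtained by selecting the smaller-indexed variable in each factor), while the base case $s=1$, where $J_3=I_3=\sigma_1\sigma_2-\sigma_3=\prod_{1\leq i<j\leq 3}(x_i+x_j)$, already gives $c=1$. The structural steps---the recurrence, the column dependence, and the divisibility---are clean and use only the properties of the $I_{2s+1}$ recorded above; the main obstacle is extracting that single coefficient from the Hankel determinant of the $I_{2m+1}$, which needs either an induction on $s$ matching the top-degree part in one variable, or a suitable explicit specialization of the $x_i$.
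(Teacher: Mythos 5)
Your structural argument agrees in substance with the paper's own proof for most of the lemma. For part~(\ref{it:jzero}), your column-dependence argument is essentially the paper's: it likewise uses that $\sigma_k(\xv)=0$ for $\xv\in\rbb^{2k+1}$ with $k$ large turns \eqref{eq:irecurse} into a linear recurrence among the entries of \eqref{eq:jdef}, forcing linearly dependent columns (the paper restricts to $k=s-1$, which suffices since evaluation in fewer variables factors through $\varphi_{2s-1}$; your uniform treatment of all $k<s$ is equally valid). For part~(\ref{it:jprod}), your divisibility-plus-homogeneity argument is a harmless variant of the paper's: there one forms $G(\xv):=J_{2s+1}(\xv)\prod_{i<j}(x_i+x_j)^{-1}$, shows it is analytic (same vanishing on $x_i+x_j=0$, obtained exactly as you do from $J_{2s+1}\in\Lambda\inv$ and part~(\ref{it:jzero})) and homogeneous of degree $0$, hence constant. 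Up to the identity $J_{2s+1}=c\prod_{i<j}(x_i+x_j)$, both routes are correct and interchangeable.

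The genuine gap is the normalization $c=1$, which you leave open for $s\geq 2$ after checking $s=1$. This is not a routine omission, because the route you sketch --- reading off the coefficient of $\prod_i x_i^{2s+1-i}$ from the Hankel determinant $\det\big(I_{2(i+j)-1}\big)_{i,j=1}^s$ --- hits a real obstruction: the naive leading monomials of the different permutation products all coincide and must cancel. Concretely, for $s=2$ one has $J_5=I_3I_7-I_5^2$, and the lexicographically leading monomials of $I_3$, $I_5$, $I_7$ are $x_1^2x_2$, $x_1^3x_2^2$, $x_1^4x_2^3$ (each with coefficient $1$), so both $I_3I_7$ and $I_5^2$ lead with $x_1^6x_2^4$ --- a monomial whose $x_1$-degree is $6$, whereas $x_1$ has degree at most $4$ in $\prod_{1\leq i<j\leq 5}(x_i+x_j)$. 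The monomial you want therefore emerges only after exact cancellation of all the top terms, and controlling that cancellation is precisely the nontrivial content your proposal does not supply. The paper fixes the constant by a different device: restrict to $\xv=(\xv',y)\in\rbb^{2s}$ and expand to first order in $y$; repeated application of \eqref{eq:irecurse} gives $J_{2s+1}(\xv)=y\,\sigma_{2s-1}(\xv')^2\,J_{2s-1}(\xv')+O(y^2)$, and induction on $s$, anchored at your base case $J_3=I_3=\sigma_1\sigma_2-\sigma_3$, then yields $c=1$. You would need this computation, or an equally concrete one, to close the proof.
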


\begin{proof}
  We first show (\ref{it:jzero}), where we can restrict ourselves to $k=s-1$. Noting that $\sigma\lidx{k} (\xv)=0$ for $\xv \in \rbb^{2s-1}$, $k \geq 2s$, the columns of the matrix in \eqref{eq:jdef} are linearly dependent by Eq.~\eqref{eq:irecurse}; hence the determinant vanishes.
  
  \sloppy Now for (\ref{it:jprod}), consider the rational function on $\cbb^{2s+1}$ given by $G(\xv):= J\lidx{2s+1}(\xv) \prod_{ i < j } (x_i + x_j)^{-1}$. Since $J\lidx{2s+1} \in \Lambda\inv$ and due to part (\ref{it:jzero}), the numerator vanishes where $x_i+x_j=0$; hence $G$ is analytic. On the other hand, both numerator and denominator are homogeneous of degree $s(2s+1)$; therefore $G$ is homogeneous of degree $0$, hence bounded, hence constant. One can fix this constant to be $1$ by direct computation. (For example, let $\xv=(\xv',y)\in\rbb^{2s}$; by repeated application of \eqref{eq:irecurse} one obtains $J\lidx{2s+1} (\xv) = y\sigma_{2s-1}(\xv')^2 J_{2s-1}(\xv') + O(y^2)$, from where the result follows by induction on $s$.)
\cmpqed\end{proof}

We now show a density property of $\Lambda\inv$ (and hence $\Lambda\inv^\pm$). 

\begin{proposition}\label{prop:idense}
  Let $n\in\nbb$, and for each $j= 1,\dotsc,n$, let $f_{j}$ be a continuous symmetric function from $\rbb^{j}$ to $\cbb$. For every $r>0$ and $\epsilon>0$, there exists a  $P\in\Lambda\inv$ such that
  \begin{equation}
     \big\lvert  P(e^{\thetav}) - f_{j} (\thetav) \big\rvert < \epsilon
     \quad 
     \text{for every $j \in \{1,\ldots, n\}$ and every $\thetav\in [-r,r]^{j}$}.
  \end{equation}
\end{proposition}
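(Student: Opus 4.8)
The plan is to obtain the statement as an instance of the Stone--Weierstrass theorem, applied simultaneously in all dimensions $j=1,\dots,n$. I would work on the compact Hausdorff space
\begin{equation*}
  K := \bigsqcup_{j=1}^n \big( [-r,r]^j / \perms{j} \big),
\end{equation*}
the disjoint union of the symmetrised cubes. Since each $f_j$ is symmetric, the family $(f_j)_{j=1}^n$ defines a single continuous function $F\in C(K)$, and the desired estimate is precisely the assertion that $F$ can be approximated in the supremum norm on $K$ by a function of the form $\thetav\mapsto P(e^{\thetav})$ with $P\in\Lambda\inv$. Each such $P(e^{\cdotarg})$ is a well-defined continuous function on $K$ (as $P$ is symmetric), and because evaluation is a ring homomorphism the collection of these functions is a subalgebra $\gcal\subset C(K)$; it contains the constants and is stable under complex conjugation, since conjugating the coefficients of $P$ leaves it in $\Lambda\inv$ and $\psp{2k+1}(e^{\thetav})$ is real for real $\thetav$. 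Thus everything reduces to showing that $\gcal$ separates the points of $K$.

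This separation step is the crux, and it is exactly here that the restriction to $\Lambda\inv$ together with the evaluation at $e^{\thetav}$ (rather than at arbitrary real arguments) is used. By Proposition~\ref{proposition:invargen}, $\gcal$ is generated by the odd power sums, so two points of $K$, written via $a_i:=e^{\theta_i}\in(0,\infty)$ as multisets $\{a_1,\dots,a_j\}$ and $\{b_1,\dots,b_{j'}\}$ of positive reals, can fail to be separated only if $\sum_i a_i^{2k+1}=\sum_i b_i^{2k+1}$ for every $k\geq 0$. Positivity is indispensable here: for general real multisets one has $\psp{2k+1}(y,-y,\xv)=\psp{2k+1}(\xv)$, the very invariance \eqref{eq:genreduce} defining $\Lambda\inv$, so $\{x\}$ and $\{x,y,-y\}$ could never be told apart. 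I would therefore prove that, among multisets of \emph{positive} reals of arbitrary cardinality, the odd power sums determine the multiset.

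The key computation is a moment argument. Encoding the points as finite measures $\mu_A:=\sum_i\delta_{a_i}$, $\mu_B:=\sum_i\delta_{b_i}$ on $(0,\infty)$ and weighting by the argument, $d\nu_A:=t\,d\mu_A$ and $d\nu_B:=t\,d\mu_B$, the hypothesis reads $\int t^{2k}\,d\nu_A=\int t^{2k}\,d\nu_B$ for all $k\geq0$. Pushing forward along the homeomorphism $t\mapsto t^2$ of $(0,\infty)$ turns this into the equality of \emph{all} moments of two measures supported in the compact interval $[e^{-2r},e^{2r}]$; by density of polynomials in $C([e^{-2r},e^{2r}])$ these two measures coincide, whence $\nu_A=\nu_B$ and, dividing by the strictly positive weight $t$, also $\mu_A=\mu_B$. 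Comparing total masses forces $j=j'$, and then $\{a_i\}=\{b_i\}$ as multisets. Consequently distinct points of $K$ differ in at least one odd power sum, so $\gcal$ separates points, and Stone--Weierstrass yields density of $\gcal$ in $C(K)$, which is the claim.

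I expect the main obstacle to be this separation property, and in particular the ruling out of coincidences \emph{across} different particle numbers $j\neq j'$, which is what the reformulation in terms of moments of the weighted measures handles cleanly. The remaining ingredients — compactness of $K$, continuity of $F$ and of the generators $P(e^{\cdotarg})$, and the algebra and self-adjointness properties of $\gcal$ — are routine and would be recorded only briefly.
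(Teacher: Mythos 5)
Your proposal is correct, and it follows the paper's overall strategy: both pass to the disjoint union of compactified symmetrised configuration spaces and invoke the complex Stone--Weierstrass theorem for the $\ast$-algebra generated (via Proposition~\ref{proposition:invargen}) by the odd power sums, so that everything hinges on separation of points. Where you genuinely differ is in how that separation is established. The paper argues elementarily: writing both points in increasing order with entries in $[e^{-r},e^r]$, it divides the equality $\sum_i x_i^{2k+1}=\sum_i y_i^{2k+1}$ by the largest entry raised to the power $2k+1$ and lets $k\to\infty$; the left side tends to a finite nonzero limit (the multiplicity of the maximum), which forces the largest entries of $\xv$ and $\yv$ to coincide; these are cancelled and the step repeated, ending either at $\xv=\yv$ or at a contradiction when the cardinalities differ. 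Your route instead encodes the points as atomic measures $\nu_A=\sum_i a_i\delta_{a_i}$, $\nu_B=\sum_i b_i\delta_{b_i}$ on $(0,\infty)$, pushes forward under the homeomorphism $t\mapsto t^2$, and uses determinacy of the compactly supported moment problem (density of polynomials in $\ccal([e^{-2r},e^{2r}],\cbb)$ plus Riesz representation) to conclude that the measures --- hence the multisets and their cardinalities --- coincide. Both arguments exploit positivity of $e^{\theta_i}$ in an essential way, as they must, since $\psp{2k+1}(y,-y,\xv)=\psp{2k+1}(\xv)$; you rightly flag this as the reason the exponential substitution matters. Your version is slightly less elementary, invoking measure-theoretic duality, but it disposes of the equal-cardinality and cross-cardinality cases $j\neq j'$ in one stroke, whereas the paper's inductive cancellation is fully self-contained, using nothing beyond limits of ratios.
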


\begin{proof}
 Define the compact Hausdorff space
 \begin{equation}
    \xfk := \bigsqcup_{j=1}^n \big\{ \xv \in \rbb^{j} : e^{-r} \leq x_1 \leq x_2 \leq \dots \leq x_{j} \leq e^r \big\}.  
 \end{equation}
 With the obvious identification, we can consider $\Lambda\inv$ as a $\ast$-subalgebra (with identity) of $\ccal(\xfk,\cbb)$. We show that $\Lambda\inv$ separates points, i.e., if $\xv, \yv \in \xfk$ such that $P(\xv)=P(\yv)$ for all $P\in\Lambda\inv$, then $\xv=\yv$.  Let such $\xv \in \xfk\cap \rbb^i$, $\yv\in\xfk\cap\rbb^{j}$ be given. As $\psp{2k+1}\in\Lambda\inv$, we have in particular for all $k$,
 \begin{equation}\label{eq:psum}
     x_1^{2k+1} + \dotsb + x_{i}^{2k+1} = y_1^{2k+1} + \dotsb + y_{j}^{2k+1},
 \end{equation}
 and hence, noting $x_i \geq e^{-r}>0$,
 \begin{equation}
     \big(\frac{x_1}{x_i}\big)^{2k+1} + \dotsb + \big(\frac{x_{i-1}}{x_i}\big)^{2k+1} + 1= \big(\frac{y_1}{x_i}\big)^{2k+1} + \dotsb + \big(\frac{y_{j}}{x_i}\big)^{2k+1}.
 \end{equation}
 The left-hand side has a finite, nonzero limit as $k \to \infty$. For the right-hand side, this is true only if $y_{j}=x_i$. Hence we can cancel the last term on both sides of \eqref{eq:psum}. Continuing this scheme, we either arrive at $\xv=\yv$ (if $i=j$) or at a contradiction (if $i \neq j$).---Thus $\Lambda\inv$ separates points, and hence by the Stone-Weierstra\ss{} Theorem \cite[Ch.~V \S 8]{Conway:fa}, $\Lambda\inv$ is dense in $\ccal(\xfk,\cbb)$. After symmetric extension in the $j$ variables, and a variable transformation $\theta_i = \log x_i$, this is exactly the statement claimed.
\cmpqed\end{proof}

Further, we need estimates for functions in $\Lambda$ or $\Lambda^\pm$ and their derivatives. As in Sec.~\ref{sec:oddexamples}, we use the  notation $\partial_J = \partial/\partial\zeta_{j_1} \dotsm \partial/\partial\zeta_{j_n}$ for a set $J = \{j_1,\ldots,j_n\}\subset \nbb$. 
 
\begin{proposition}\label{proposition:pbound}
 Let $P \in \Lambda^\pm$. There exists $a,b>0$ such that for any $n \in \nbb$, any $\zetav\in\cbb^n$, and any set $J \subset \{1,\ldots,n\}$,
 \begin{equation}
    \big\lvert \partial_J P( e^{\zetav} ) \big\rvert \leq  a^n E(\re \zetav)^{b}.
 \end{equation}

\end{proposition}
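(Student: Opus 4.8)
The plan is to exploit the fact, recalled above, that $\Lambda^\pm$ is freely generated by the power sums $\psp{k}$, $k\in\zbb\setminus\{0\}$. Fix $P\in\Lambda^\pm$. Then there are finitely many exponents $k_1,\dots,k_r\in\zbb\setminus\{0\}$ and an ordinary polynomial $Q$, all \emph{independent of $n$}, such that $P=Q(\psp{k_1},\dots,\psp{k_r})$. Applying the homomorphism $\varphi_n^\pm$ and evaluating at $x_j=e^{\zeta_j}$ gives $P(e^{\zetav})=Q\big(s_1(\zetav),\dots,s_r(\zetav)\big)$, where $s_i(\zetav):=\sum_{j=1}^n e^{k_i\zeta_j}$. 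The whole argument then rests on the separable structure of the $s_i$: each is a sum of functions of a single variable, so $\partial_{\zeta_{j}}\partial_{\zeta_{j'}}s_i=0$ whenever $j\neq j'$.

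First I would establish, by induction on $|J|$ using the chain rule, the explicit formula
\begin{equation*}
  \partial_J P(e^{\zetav}) = \sum_{a:J\to\{1,\dots,r\}} \big(\partial_{a} Q\big)(\vec s)\prod_{j\in J} k_{a(j)}\,e^{k_{a(j)}\zeta_{j}},
\end{equation*}
where $\partial_a Q$ denotes the corresponding $|J|$-fold partial derivative of $Q$ in its arguments. In the induction step the new derivative $\partial_{\zeta_{j'}}$ either falls on a factor $\partial_\bullet Q$, reproducing a term of the same shape, or on one of the factors $e^{k_{a(j)}\zeta_{j}}$ with $j\neq j'$, where it gives zero by the separability just noted; this is exactly why only the ``all singletons'' contributions survive. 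The decisive consequence is that $\partial_a Q=0$ as soon as $|J|>\deg Q$, so the sum is nonempty only for $|J|\leq D:=\deg Q$, a bound independent of $n$.

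It then remains to estimate. The elementary bound $\cosh \re\zeta_j\leq E(\re\zetav)$ together with $n\leq E(\re\zetav)$ gives $|s_i(\zetav)|\leq 2^{|k_i|}E(\re\zetav)^{|k_i|+1}$, so that the polynomial factor $(\partial_a Q)(\vec s)$, being of degree at most $D$ in the $s_i$, is bounded by a fixed power $E(\re\zetav)^{b_1}$ times an $n$-independent constant. For the product $\prod_{j\in J}$, I would bound each factor by $|k_{a(j)}|(2\cosh\re\zeta_j)^{|k_{a(j)}|}\leq K2^{K}E(\re\zetav)^{K}$ with $K:=\max_i|k_i|$; crucially this product has at most $D$ factors, so it too is bounded by a fixed power of $E(\re\zetav)$ with $n$-independent constant. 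Since the outer sum has at most $r^{D}$ terms, one obtains $|\partial_J P(e^{\zetav})|\leq C\,E(\re\zetav)^{b}$ with $C,b$ depending only on $P$; absorbing $C$ into $a^n$ (valid for $n\geq1$) yields the stated bound.

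The only genuine obstacle is the product $\prod_{j\in J}$, which a priori could run over up to $n$ indices: a product of $n$ factors $\cosh\re\zeta_j$ cannot be controlled by any fixed power of their sum $E(\re\zetav)$. The resolution is precisely the vanishing of high-order derivatives of the fixed polynomial $Q$, which caps $|J|$ by $D=\deg Q$; once this is in place, every remaining estimate is routine and the $n$-uniformity of the constants follows automatically. (In fact this argument delivers an $n$-independent constant $C$, and the weaker form $a^n$ in the statement is obtained for free.)
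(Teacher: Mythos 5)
Your proof is correct, and it takes a genuinely different route from the paper's. The paper argues by stability under the ring operations: it observes that the class of Laurent functions satisfying the claimed estimate is closed under sums, scalar multiples and products --- the product case via the Leibniz rule $\partial_J(P\cdot Q)(e^{\zetav}) = \sum_{I\subset J}\partial_I P(e^{\zetav})\,\partial_{J\backslash I}Q(e^{\zetav})$, whose up to $2^{|J|}\leq 2^n$ terms are precisely the origin of the factor $a^n$ in the statement --- and then verifies the bound for the generators $\psp{k}$, $k\in\zbb\backslash\{0\}$, by direct computation. You instead fix once and for all a representation $P=Q(\psp{k_1},\dotsc,\psp{k_r})$ (available by the same free-generation fact the paper relies on) and differentiate through the composition; the separability of $s_i(\zetav)=\sum_j e^{k_i\zeta_j}$ collapses the Fa\`a di Bruno expansion to the ``all singletons'' terms, and since all derivatives of $Q$ of order exceeding $\deg Q$ vanish, the effective order $|J|$ is capped by an $n$-independent number. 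This buys you a strictly stronger conclusion: an $n$-independent constant $C$ in place of $a^n$ (the paper's $a^n$ is an artifact of the Leibniz recursion), together with the structural observation that $\partial_J P(e^{\zetav})$ vanishes identically once $|J|>\deg Q$. What the paper's argument buys in exchange is brevity and modularity --- it never needs an explicit formula for the derivatives, only that the estimate survives the algebra operations --- and its weaker $a^n$ bound is all that the applications require: in Lemma~\ref{lemma:kernelcoth} the resulting factor enters estimates that already tolerate growth like $c^{m+n+1}$, so nothing is gained there from the sharper constant. Both proofs ultimately rest on the same input, namely that $\Lambda^\pm$ is freely generated by the power sums.
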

\begin{proof}
 If $P,Q \in \Lambda^\pm$ fulfill an estimate of the type claimed, then so do $P+Q$, $cP$ (with $c \in \cbb$) as well as $P \cdot Q$, noting that the product rule reads
 \begin{equation}
    \partial_J (P\cdot Q) ( e^{\zetav} )=   \sum_{I \subset J} \partial_I P ( e^{\zetav} ) \; \partial_{J \backslash I} Q( e^{\zetav} )   
 \end{equation}
 and that the sum contains at most $2^n$ terms. It hence suffices to prove the statement for the generators $\psp{k}$, $k \in \zbb \backslash \{0\}$, which can be done by direct computation.
\cmpqed\end{proof}

\section{The functions \texorpdfstring{$M\odd_{n}$}{M[odd][n]}}\label{app:modd}

In this appendix we discuss the properties of the meromorphic functions $M\odd_{n}$ on $\cbb^{n}$ as introduced in Sec.~\ref{sec:oddexamples}, which are given by
%
\begin{equation} \label{eq:rprod}
     M\odd_{n}(\zetav)  := \prod_{ i < j } \tanh\frac{\zeta_i-\zeta_j}{2} = \prod_{i<j}\frac{e^{\zeta_i}-e^{\zeta_j}}{e^{\zeta_i}+ e^{\zeta_j}}. 
\end{equation}
As a first step, it is useful 
to rewrite the function using the following technique. 
By a \emph{pairing} of $n$ indices, we understand a set of pairs, $\mathfrak{p} = \{ (\ell_1,r_1),\dotsc,(\ell_k,r_k)\}$ where $k=\lfloor n/2 \rfloor$, where $\ell_j,r_j \in \{1,\dotsc,k\}$ are all pairwise different and $\ell_j < r_j$. We denote the set of all such pairings as $\pcal_{n}$, where $\pcal_{0} = \pcal_{1} = \{ \emptyset \}$. The \emph{signum} of a pairing $\mathfrak{p}$ is defined, in the case $n=2k+1$, as
\begin{equation}\label{eq:signPdef}
   \sign \mathfrak{p} :=  
   \sign \begin{pmatrix} 1 & 2 & 3 & 4 & \cdots & 2k\!-\!1 & 2k & 2k\!+\!1 \\
         \ell_1 & r_1 & \ell_2 & r_2 & \cdots & \ell_k & r_k &\hat{m} \end{pmatrix},
\end{equation}
where $\hat m$ is the unique number not occurring in the pairs; if $n=2k$, we drop the last column.  We note that this expression does not depend on the ordering of the pairs. Also, $\sign \emptyset := 1$.
With these definitions, we can express the function $M\odd_{n}$ as follows.

\begin{lemma} \label{lemma:rpairs}
For any $n \in \nbb_0$,
\begin{equation} \label{eq:rpairs}
M\odd_{n}(\zetav) =\sum_{\mathfrak{p} \in \pcal_{n}} \sign \mathfrak{p} \prod_{(\ell,r)\in \mathfrak{p}} \tanh \frac{ \zeta_{\ell} - \zeta_{r}}{2}.
\end{equation}
\end{lemma}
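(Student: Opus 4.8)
The plan is to recognize the right-hand side of \eqref{eq:rpairs} as a Pfaffian. Writing $t_{ij} := \tanh\tfrac{\zeta_i-\zeta_j}{2} = \tfrac{e^{\zeta_i}-e^{\zeta_j}}{e^{\zeta_i}+e^{\zeta_j}}$ for the antisymmetric matrix entries (cf.\ \eqref{eq:rprod}), the even case $n=2k$ is exactly $\operatorname{Pf}(T) = \prod_{i<j} t_{ij}$ for $T = (t_{ij})$, since $\pcal_{2k}$ is the set of perfect matchings of $\{1,\dotsc,2k\}$ and $\sign\mathfrak p$ is the Pfaffian sign. With $x_i = e^{\zeta_i}$ this is Schur's classical Pfaffian evaluation, which I would either cite or prove by induction as below. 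The cases $n=0,1$ are immediate ($M\odd_n = 1$ and the sum equals $\sign\emptyset = 1$), and $n=2$ is the single term $t_{12}$. The odd case $n = 2k+1$ I would then reduce to the even one.

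For the even case by induction on $k$, I would expand the Pfaffian along its first row,
\[
\operatorname{Pf}(T) = \sum_{j=2}^{2k} (-1)^{j}\, t_{1j}\, \operatorname{Pf}(T_{\hat1\hat j}),
\]
insert the inductive hypothesis $\operatorname{Pf}(T_{\hat1\hat j}) = \prod_{2\le p<q\le 2k,\,p,q\ne j} t_{pq}$, and divide through by $\prod_{2\le p<q\le 2k} t_{pq}$. Using $t_{pq} = -t_{qp}$ to reorganize the deleted factors produces a sign $(-1)^{j-2}$ that cancels the $(-1)^{j}$, so the whole identity collapses to the single-variable statement
\[
\prod_{j=2}^{2k}\frac{x_1-x_j}{x_1+x_j} = \sum_{j=2}^{2k}\frac{x_1-x_j}{x_1+x_j}\, c_j, \qquad c_j := \prod_{\substack{2\le q\le 2k\\ q\ne j}}\frac{x_j+x_q}{x_j-x_q}.
\]
Both sides are rational in $x_1$, bounded at infinity, with simple poles only at $x_1 = -x_j$; a residue comparison shows the residues agree, so the two sides differ by a constant, which is fixed by letting $x_1\to\infty$. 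This reduces everything to the elementary fact $\sum_j c_j = 1$ for an odd number of variables, which I would obtain by evaluating the rational function $\prod_q \tfrac{z+x_q}{z-x_q} = 1 + \sum_j \tfrac{2 x_j c_j}{z-x_j}$ at $z=0$.

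For odd $n = 2k+1$ I would introduce an auxiliary rapidity $\zeta_{2k+2}$ and let it tend to $+\infty$ along the real axis. Since $t_{i,2k+2}\to -1$, the product obeys $M\odd_{2k+2}\to (-1)^{2k+1} M\odd_{2k+1}$, while on the matching side each $\mathfrak p\in\pcal_{2k+2}$ pairs $2k+2$ with some $\hat m$, contributing a factor $t_{\hat m,2k+2}\to -1$ and leaving a matching of $\{1,\dotsc,2k+1\}\setminus\{\hat m\}$, i.e.\ an element of $\pcal_{2k+1}$ with unpaired index $\hat m$. Placing the distinguished pair last shows that the even Pfaffian sign of $\mathfrak p$ equals the odd sign \eqref{eq:signPdef} of the induced matching, because appending the fixed point $2k+2$ does not change the sign. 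Comparing the two limits and cancelling the common factor $-1$ then yields \eqref{eq:rpairs} for $n=2k+1$ from the already established even case.

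The main obstacle is not analytic but combinatorial bookkeeping. The genuine input is the partial-fraction identity of the second paragraph: a generic antisymmetric matrix does \emph{not} satisfy $\operatorname{Pf} = \prod_{i<j}(\cdot)$, so the special form of $t_{ij}$ must be used precisely there. The delicate part of the writeup is instead verifying that the paper's sign convention \eqref{eq:signPdef} coincides with the Pfaffian sign, and that the various factors of $-1$ in the odd-to-even limit cancel correctly. I would therefore isolate $\sum_j c_j = 1$ as a small lemma and carry out the sign comparison carefully once, reusing it in both parities.
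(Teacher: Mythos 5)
Your proof is correct, and it takes a genuinely different route from the paper's. The paper argues structurally: the pairing sum $T_n(\xv):=\sum_{\mathfrak{p}\in\pcal_n}\sign\mathfrak{p}\prod_{(\ell,r)\in\mathfrak{p}}\frac{x_\ell-x_r}{x_\ell+x_r}$ is antisymmetric, so $T_n(\xv)\prod_{i<j}(x_i+x_j)$ is a skew-symmetric polynomial and hence equals a symmetric polynomial $Q_n$ times the Vandermonde $\prod_{i<j}(x_i-x_j)$; degree-zero homogeneity forces $Q_n$ to be a constant, which is fixed to $1$ by evaluating along $x_j=\epsilon^j$, $\epsilon\to 0$, using $\sum_{\mathfrak{p}}\sign\mathfrak{p}=1$. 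This handles even and odd $n$ uniformly in a few lines, at the price of invoking the factorization theorem for skew-symmetric polynomials. You instead identify the even case with Schur's Pfaffian evaluation, prove it by induction via the row expansion $\operatorname{Pf}(T)=\sum_j(-1)^j t_{1j}\operatorname{Pf}(T_{\hat 1\hat j})$ reduced to a partial-fraction identity, and then obtain odd $n$ from even $n$ by sending an auxiliary rapidity to $+\infty$. I checked the delicate points: the sign $(-1)^{j-2}$ from flipping the deleted factors $t_{pj}\to t_{jp}$ does cancel the cofactor sign $(-1)^j$; the residue comparison at $x_1=-x_j$ gives $-2x_jc_j$ on both sides; the sum rule $\sum_j c_j=1$ does follow from evaluating $\prod_q\frac{z+x_q}{z-x_q}=1+\sum_j\frac{2x_jc_j}{z-x_j}$ at $z=0$, and it genuinely needs the oddness of the number of variables (there are $2k-1$ of them, so $f(0)=-1$); and in the limit $\zeta_{2k+2}\to+\infty$ the Pfaffian sign of $\mathfrak{p}'\cup\{(\hat m,2k+2)\}$ equals the odd sign \eqref{eq:signPdef} of $\mathfrak{p}'$, since the pair ordering is immaterial and appending the fixed point $2k+2$ leaves the permutation sign unchanged, so the single factor $t_{\hat m,2k+2}\to-1$ matches the factor $(-1)^{2k+1}$ on the product side. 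What each approach buys: the paper's is shorter and parity-uniform; yours is combinatorially self-contained, ties the lemma to a citable classical identity, and the odd-to-even limiting trick is a nice reusable device. Two small points for a clean writeup: state that all identities are between rational functions of $\xv=e^{\zetav}$, so it suffices to prove them for generic $x_j$ (you divide by $\prod_{p<q}t_{pq}$ and by $x_j-x_q$ at intermediate steps), and record the induction base $n=2$ explicitly.
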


\begin{proof}
 Let us temporarily denote
 \begin{equation}
  T_{n}(\xv) := \sum_{\mathfrak{p} \in \pcal_n} \sign \mathfrak{p} \prod_{(\ell,r)\in \mathfrak{p}} \frac{x_\ell-x_r}{x_\ell+x_r};
 \end{equation}
 our aim is to show $M\odd_{n}(\zetav)=T_{n}(e^{\zetav})$. 
 Since $T_{n}$ is antisymmetric in its arguments, 
 the expression $T_{n}(\xv)\prod_{i<j}(x_i+x_j)$ is a skew-symmetric polynomial. Therefore, there exists 
\cite[Thm.~3.1.2]{Prasolov:polynomials} a symmetric polynomial $Q_n$ such that
\begin{equation}
     T_{n}(\xv) \prod_{i<j}(x_i+x_j) =  Q_n(\xv) \prod_{i<j}(x_i-x_j).
\end{equation}
But since $T_{n}$ is homogeneous of order 0, so is $Q_n$; thus $Q_n$ must be constant, and $T_{n}(e^{\zetav}) = Q_n M\odd_{n}(\zetav)$.
To determine the constant, note that
\begin{equation}
   \lim_{\epsilon\to 0} M\odd_{n}(\log(\epsilon),\log(\epsilon^2),\dotsc, \log(\epsilon^{n})) = 1 = \lim_{\epsilon\to 0} T_{n}(\epsilon,\epsilon^2,\dotsc,\epsilon^{n}).
\end{equation}
(All quotients $(x_\ell-x_r)/(x_\ell+x_r)$ etc.~converge to 1 in this limit; and one has $\sum_{\mathfrak{p}\in\pcal_n} \sign \mathfrak{p} = 1$, as can be seen by induction on $n$.) Thus $Q_n=1$, which concludes the proof.
\cmpqed\end{proof}

We collect the main features of the functions  $M\odd_{n}$.

\begin{proposition}\label{proposition:mprop}
For any fixed $n$, the functions $M\odd_{n}(\zetav)$ have the following properties:
\begin{enumerate}[(a)]
\item \label{it:manalytic} they are analytic where $|\im \zeta_i - \im \zeta_j|<\pi$ for all $1\leq i<j \leq n$;
\item \label{it:mantisymmperiod} they are totally antisymmetric, and $2\pi i$-periodic in each variable;
\item \label{it:mresidue} 
they have a first order pole at $\zeta_2 - \zeta_1 = i\pi$ with residue
\begin{equation}\label{eq:resM}
\res_{\zeta_2 - \zeta_1 = i\pi} M\odd_{n}(\zetav) = -2 M\odd_{n-2}(\zeta_3, \ldots,\zeta_{n}); 
\end{equation} 

\item\label{it:mcomplexbd} 
there is $c>0$ such that for all $\zetav \in \ical^{n}_\pm$,
\begin{equation}
  |M\odd_{n}(\zetav)| \leq c \operatorname{dist} (\im\zetav,\ical^{n}_\pm)^{-\lfloor n/2 \rfloor};
\end{equation}
\item \label{it:mrealbd} for all $\thetav \in \rbb^{n}$ they fulfill $\lvert M\odd_{n} (\pmb{\theta}) \rvert \leq 1$.
\end{enumerate}
\end{proposition}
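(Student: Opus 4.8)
The plan is to verify the five assertions in turn, since (\ref{it:manalytic}), (\ref{it:mantisymmperiod}) and (\ref{it:mrealbd}) follow directly from the defining product, whereas (\ref{it:mresidue}) and (\ref{it:mcomplexbd}) need short computations. For (\ref{it:manalytic}) I would observe that each factor $\tanh\frac{\zeta_i-\zeta_j}{2}$ is analytic away from its poles at $\zeta_i-\zeta_j\in i\pi(2\zbb+1)$, i.e.\ where $\im\zeta_i-\im\zeta_j\in\pi(2\zbb+1)$; none of these values is attained when $|\im\zeta_i-\im\zeta_j|<\pi$ for all $i<j$, so the finite product is analytic on that region. For (\ref{it:mantisymmperiod}) I use the second form $\prod_{i<j}(e^{\zeta_i}-e^{\zeta_j})/(e^{\zeta_i}+e^{\zeta_j})$: the numerator is a Vandermonde-type product transforming by the sign character under permutations of the $\zeta_j$, while the denominator is symmetric, so the ratio is totally antisymmetric; $2\pi i$-periodicity in each variable follows because $\tanh$ has period $i\pi$, hence each factor is $2\pi i$-periodic in $\zeta_i$ and in $\zeta_j$. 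Finally (\ref{it:mrealbd}) is immediate, since for real arguments every factor $\tanh\frac{\theta_i-\theta_j}{2}$ lies in $(-1,1)$, so the product has modulus at most $1$.

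For the residue (\ref{it:mresidue}) I would single out the factor $\tanh\frac{\zeta_1-\zeta_2}{2}$, which has a simple pole at $\zeta_2-\zeta_1=i\pi$ with residue $-2$ (by expanding $\tanh$ about the pole), and then evaluate all remaining factors at $\zeta_2=\zeta_1+i\pi$. The crucial algebraic identity is $\tanh(x+\tfrac{i\pi}{2})=\coth x$, which gives $\tanh\frac{\zeta_2-\zeta_j}{2}\big|_{\zeta_2=\zeta_1+i\pi}=\coth\frac{\zeta_1-\zeta_j}{2}$ for $j\geq 3$; consequently each pair of factors $\tanh\frac{\zeta_1-\zeta_j}{2}\,\tanh\frac{\zeta_2-\zeta_j}{2}$ collapses to $1$. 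What remains is exactly $M\odd_{n-2}(\zeta_3,\dotsc,\zeta_n)$, so that the residue is $-2\,M\odd_{n-2}(\zeta_3,\dotsc,\zeta_n)$ as claimed.

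The main obstacle is the bound (\ref{it:mcomplexbd}), where the point is to extract precisely the exponent $\lfloor n/2\rfloor$ rather than the crude $\binom{n}{2}$ that a termwise estimate of the full product would produce. Here I would invoke the pairing representation of Lemma~\ref{lemma:rpairs}, in which each of the finitely many terms is a product of only $k=\lfloor n/2\rfloor$ factors $\tanh\frac{\zeta_\ell-\zeta_r}{2}$ with $\ell<r$. Writing $\zeta_j=\theta_j+i\lambda_j$ with $\lambdav\in\ical^n_\pm$, a direct computation yields $\big|\tanh\frac{\zeta_\ell-\zeta_r}{2}\big|^2=(\sinh^2 a+\cos^2\delta)/(\sinh^2 a+\sin^2\delta)$, where $a=\tfrac12(\theta_\ell-\theta_r)$ and $\delta=\tfrac12(\pi-|\lambda_\ell-\lambda_r|)$; maximizing over $a$ gives the uniform estimate $\big|\tanh\frac{\zeta_\ell-\zeta_r}{2}\big|\leq 1/\sin\delta$. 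The remaining step is to bound $\delta$ below by the distance to the boundary: since $\pi-|\lambda_\ell-\lambda_r|=(\pi-\lambda_r)+\lambda_\ell\geq(\pi-\lambda_n)+\lambda_1$, and both $\lambda_1$ and $\pi-\lambda_n$ dominate $\operatorname{dist}(\lambdav,\partial\ical^n_\pm)$, one obtains $\sin\delta\gtrsim\operatorname{dist}(\im\zetav,\partial\ical^n_\pm)$.

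Combining these, each factor is $O\big(\operatorname{dist}(\im\zetav,\partial\ical^n_\pm)^{-1}\big)$, each pairing term is $O\big(\operatorname{dist}(\im\zetav,\partial\ical^n_\pm)^{-\lfloor n/2\rfloor}\big)$, and summing over the finitely many $\mathfrak p\in\pcal_n$ (each with $|\sign\mathfrak p|=1$) gives (\ref{it:mcomplexbd}) with a constant $c$ depending only on $n$. I expect the delicate parts to be the uniform-in-$a$ estimate of a single $\tanh$ factor and the observation that it is the distance to the \emph{full} boundary, not to any single face, that controls all factors of a given pairing simultaneously; once the pairing representation is in hand, the degree count $\lfloor n/2\rfloor$ is automatic.
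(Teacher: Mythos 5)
Your proposal is correct and follows essentially the same route as the paper: parts (\ref{it:manalytic}), (\ref{it:mantisymmperiod}), (\ref{it:mrealbd}) read off from the defining product, part (\ref{it:mresidue}) by isolating the singular factor $\tanh\frac{\zeta_1-\zeta_2}{2}$ (residue $-2$) with the remaining factors collapsing via $\tanh(x+\tfrac{i\pi}{2})=\coth x$, and part (\ref{it:mcomplexbd}) via the pairing representation of Lemma~\ref{lemma:rpairs}, which is exactly how the paper obtains the exponent $\lfloor n/2\rfloor$. The only (immaterial) difference is the single-factor estimate: you compute $\lvert\tanh\rvert^2$ explicitly and maximize over the real part to get $1/\sin\delta$, while the paper uses the cruder bound $\lvert\tanh\frac{\zeta}{2}\rvert\leq c_1\bigl(1+\lvert\zeta+i\pi\rvert^{-1}\bigr)\leq 5c_1/(\im\zeta+\pi)$ on the strip; both then control this by the distance to the full boundary in the same way.
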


\begin{proof}

Properties (\ref{it:manalytic}), (\ref{it:mantisymmperiod}) and (\ref{it:mrealbd}) can be read off directly from Eq.~\eqref{eq:rprod}.
For (\ref{it:mresidue}), one notes that the only factor in $M\odd_n$ contributing to the pole is $\tanh \frac{1}{2}(\zeta_1-\zeta_2)$, with residue $-2$; the claim then follows from \eqref{eq:rprod} or, alternatively, from \eqref{eq:rpairs}.

Regarding (\ref{it:mcomplexbd}), we estimate the function $M\odd_{n}(\zetav)$ for $\zetav \in \ical^{n}_+$ (the argument is similar for $\ical^{n}_-$). 
We remark that
\begin{equation}
  \Big\lvert \tanh\frac{\zeta}{2} \Big\rvert \leq c_1 \Big(1 + \frac{1}{|\zeta + i\pi|}\Big) 
  \leq \frac{5 c_1}{\im\zeta + \pi} \quad \text{for all }\zeta \in \rbb+i(-\pi,0)
\end{equation}
with some constant $c_1>0$. Applying this to every term in the representation \eqref{eq:rpairs}, we find a constant $c_2$ (depending on $n$) 
such that for all $\zetav \in \ical^{n}_+$,
\begin{equation}\label{eq:rkbound}
  |M\odd_{n}(\zetav)| \leq c_2 \Big( \max_{i<j} \frac{1}{\im(\zeta_i-\zeta_j) + \pi} \Big)^{\lfloor n/2 \rfloor}  \leq c_2 \operatorname{dist} (\im\zetav,\ical^{n}_+)^{-\lfloor n/2 \rfloor},
\end{equation}
noting that $\operatorname{dist}(\im\zetav,\ical^{n}_+) \leq  |\im(\zeta_i-\zeta_j)+\pi|$ for every $i,j$. 
\cmpqed\end{proof}

Finally, we derive a representation of $M\odd_{n}$ that is crucial for controlling its behaviour as an integral kernel.

\begin{lemma}\label{lemma:mlrsum}
 Let $\thetav\in\rbb^m$ and $\etav\in\rbb^n$. We have the identity
\begin{equation}\label{eq:mlrsum}
\begin{aligned}
   M\odd_{m+n}\ahponly{&}(\thetav,\etav+i\piv) = \ahponly{\\ &} \sum_{\substack{ k;(\ell_1,r_1),\dotsc,(\ell_k,r_k) \\ 1 \leq \ell_i \leq m < r_i \leq m+n}}
  \Big(\prod_{j=1}^k \coth \frac{\theta_{\ell_j}-\eta_{r_j-m}}{2}\Big)
  (-1)^{s_{\ell r}} M\odd_{m-k}(\hat \thetav) M\odd_{n-k}(\hat \etav).
\end{aligned}
\end{equation}
 The sum runs over all pairs of indices with the described properties, including over the number $k$ of pairs; it contains at most $2^{m+n}(\min(m,n)+1)!$ summands. $\hat\thetav\in\rbb^{m-k}$ denotes the $\theta_j$ with $j$ not in the list $\ell_1,\dotsc,\ell_k$, and $\hat\etav$ analogously.
 The integer $s_{\ell r}$ may depend on the choice of pairs.
\end{lemma}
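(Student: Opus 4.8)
The plan is to derive the identity from the pairing representation of Lemma~\ref{lemma:rpairs}, sorting its terms according to how the pairs connect the ``left'' variables to the ``right'' variables. Setting $\zeta_a := \theta_a$ for $a \le m$ and $\zeta_{m+j} := \eta_j + i\pi$ for $1\le j\le n$, Lemma~\ref{lemma:rpairs} reads
\[
  M\odd_{m+n}(\thetav,\etav+i\piv) = \sum_{\mathfrak p\in\pcal_{m+n}}\sign\mathfrak p \prod_{(a,b)\in\mathfrak p}\tanh\frac{\zeta_a-\zeta_b}{2}.
\]
Each pair $(a,b)\in\mathfrak p$ falls into one of three classes. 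A \emph{cross-pair} $(\ell,r)$ with $\ell\le m<r$ contributes $\tanh\frac{\theta_\ell - \eta_{r-m}-i\pi}{2} = \coth\frac{\theta_\ell-\eta_{r-m}}{2}$, by the elementary identity $\tanh(w-i\pi/2)=\coth w$; a \emph{left-pair} (both indices $\le m$) contributes $\tanh\frac{\theta_\ell-\theta_{\ell'}}{2}$; and a \emph{right-pair} (both $>m$) contributes $\tanh\frac{\eta_{r-m}-\eta_{r'-m}}{2}$, the two shifts $i\pi$ cancelling. Thus $\mathfrak p$ is determined by its set $C=\{(\ell_1,r_1),\dots,(\ell_k,r_k)\}$ of cross-pairs together with a pairing $\mathfrak q$ of the remaining left indices $\hat\thetav$ and a pairing $\mathfrak s$ of the remaining right indices $\hat\etav$.

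Next I would fix $C$ and resum over $\mathfrak q$ and $\mathfrak s$. Grouping the sum accordingly,
\[
  M\odd_{m+n}(\thetav,\etav+i\piv) = \sum_{C}\Big(\prod_{(\ell,r)\in C}\coth\tfrac{\theta_\ell-\eta_{r-m}}{2}\Big)\sum_{\mathfrak q,\mathfrak s}\sign\mathfrak p\,\Big(\prod_{(\ell,\ell')\in\mathfrak q}\tanh\tfrac{\theta_\ell-\theta_{\ell'}}{2}\Big)\Big(\prod_{(r,r')\in\mathfrak s}\tanh\tfrac{\eta_{r-m}-\eta_{r'-m}}{2}\Big),
\]
the inner double sum should reproduce $M\odd_{m-k}(\hat\thetav)\,M\odd_{n-k}(\hat\etav)$ by a second application of Lemma~\ref{lemma:rpairs} --- provided the sign factorizes as $\sign\mathfrak p = (-1)^{s_{\ell r}}\sign\mathfrak q'\,\sign\mathfrak s'$, where $\mathfrak q',\mathfrak s'$ are $\mathfrak q,\mathfrak s$ transported to $\{1,\dots,m-k\}$ and $\{1,\dots,n-k\}$ by the order-preserving relabellings of $\hat\thetav,\hat\etav$, and where the prefactor $(-1)^{s_{\ell r}}$ depends only on $C$. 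Granting this, the inner sums are precisely the Lemma~\ref{lemma:rpairs} expansions of $M\odd_{m-k}(\hat\thetav)$ and $M\odd_{n-k}(\hat\etav)$ (note that the single leftover index, present when $m+n$ is odd as in all our applications, sits in whichever block has odd length, so the parities are automatically consistent), and relabelling $C$ as $(\ell_1,r_1),\dots,(\ell_k,r_k)$ yields exactly \eqref{eq:mlrsum}.

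The main obstacle is the sign factorization, which is a permutation-bookkeeping argument. Since $\sign\mathfrak p$ is insensitive to the order in which the pairs are listed, I would list the pairs of $\mathfrak p$ with the cross-pairs first (in an order fixed by $C$), then the left-pairs in the order prescribed by $\mathfrak q'$, then the right-pairs in the order prescribed by $\mathfrak s'$, followed by the leftover index. The permutation carrying $(1,2,\dots,m+n)$ into this arrangement then splits into parts acting on disjoint blocks of positions: the parts acting within the left and right blocks contribute exactly $\sign\mathfrak q'$ and $\sign\mathfrak s'$ (here one uses that the leftover left indices occur in increasing order, so that the order-preserving relabelling preserves the pairing sign), while the remaining factor is a function of $C$ alone, which I simply define to be $(-1)^{s_{\ell r}}$ --- the lemma only asserts such a sign exists, so no explicit formula is needed. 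Finally, to bound the number of summands I would count the choices of $C$: selecting $k$ left indices, $k$ right indices, and a bijection between them gives $\sum_{k=0}^{\min(m,n)}\binom{m}{k}\binom{n}{k}k!$, which is at most $2^{m+n}(\min(m,n)+1)!$, as claimed.
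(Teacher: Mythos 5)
Your proof is correct and takes essentially the same route as the paper's: both expand $M\odd_{m+n}(\thetav,\etav+i\piv)$ via Lemma~\ref{lemma:rpairs}, sort the pairings by their cross-pairs (converting each to $\coth$ via $\tanh(w-i\pi/2)=\coth w$), resum the leftover left/right pairings back into $M\odd_{m-k}(\hat\thetav)\,M\odd_{n-k}(\hat\etav)$ by a second application of Lemma~\ref{lemma:rpairs}, and bound the number of terms by the identical count $\sum_k \binom{m}{k}\binom{n}{k}k! \leq 2^{m+n}(\min(m,n)+1)!$. The only difference is that you make the sign factorization $\sign\mathfrak{p} = (-1)^{s_{\ell r}}\sign\mathfrak{q}'\,\sign\mathfrak{s}'$ explicit (including the block-permutation bookkeeping and the placement of the leftover index), whereas the paper absorbs all of this into the unspecified integer $s_{\ell r}$ without comment.
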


\begin{proof}
 Recall that $M\odd_{m+n}$ is given as a sum over pairings as in \eqref{eq:rpairs}. Inserting $\zetav=(\thetav,\etav+i\piv)$, 
 we reorganize the sum over pairings $\mathfrak{p}$ as follows: We first fix the number $k$ of pairs $(\ell,r)\in\mathfrak{p}$ with $\ell \leq m$ and $r>m$, and sum over $k$; then we sum over all possibilities for such pairs at fixed $k$; then we sum over the possibilities for choosing the $\lfloor (m-k)/2 \rfloor$ pairs $(\ell,r)\in\mathfrak{p}$ with $\ell<r\leq m$, and the $\lfloor (n-k)/2 \rfloor$ pairs $(\ell,r)\in\mathfrak{p}$ with $m<\ell<r$, which complete the pairing of $m+n$ indices. For the last-mentioned two sums, applying \eqref{eq:rpairs} yields the factors $M\odd_{m-k}(\hat \thetav) M\odd_{n-k}(\hat \etav)$; the remaining factors of the product are of the form $\tanh((\theta_{\ell}-\eta_{r-m}-i\pi)/2)=\coth((\theta_{\ell}-\eta_{r-m})/2)$ with $\ell \leq m<r$.  
 Thus we arrive at Eq.~\eqref{eq:mlrsum}, where $s_{\ell r}$ is some integer depending on the pairing (which has no further relevance for us).

The sum contains at most $\binom{m}{k} \binom{n}{k} k! \leq 2^{m+n} k!$ summands at fixed $k$, so that the number $N$ of terms can be estimated by
\begin{equation}
   N \leq 2^{m+n} \sum_{k=0}^{\min(m,n)} k! \leq 2^{m+n} (\min(m,n)+1)!
\end{equation}
as claimed.
\cmpqed\end{proof}

\begin{acknowledgements}
We are grateful to C.J.\ Fewster for helpful comments. H.B.\ would like to thank the University of G\"ottingen and the TU M\"unchen for hospitality.
This work was supported by the Deutsche Forschungsgemeinschaft (DFG) within the Emmy Noether grants DY107/2-1 and CA1850/1-1. 
\end{acknowledgements}


\footnotesize
\bibliographystyle{ieeetr} 
\bibliography{../../integrable}

\end{document}